\long\def\@makecaption#1#2{%
  \par
  \vskip\abovecaptionskip
  \begingroup
    \small\rmfamily
    \samepage
    \flushing
    \let\footnote\@footnotemark@gobble
    \@make@capt@title{#1}{#2}\par
  \endgroup
  \vskip\belowcaptionskip
}
\newtheorem{theorem}{Theorem} 
\newtheorem{lemma}[theorem]{Lemma} 
\newtheorem{corollary}[theorem]{Corollary}
\DeclareMathOperator{\wt}{wt}
\DeclareMathOperator{\Var}{Var}
\newcommand{\tr}{\ensuremath{\mathrm{tr}}}
\newcommand{\yo}[1]{{#1}}
\def\>{\rangle} 
\def\<{\langle}
\def\bx{{\bf x}}
 \def\case#1{{\left\{  
	\begin{array}{ll}
  #1
	\end{array}
 \right.   }} 
\def \app{Appendix\xspace}
\begin{document}

\title{Finite-round quantum error correction on symmetric quantum sensors} 
\author{Yingkai Ouyang}
\email{y.ouyang@sheffield.ac.uk}
\affiliation{School of Mathematical and Physical Sciences, University of Sheffield, Sheffield, S3 7RH, United Kingdom} 

\author{Gavin K. Brennen}
\email{gavin.brennen@mq.edu.au}
\affiliation{Center for Engineered Quantum Systems, Dept. of Physics \& Astronomy, Macquarie University, 2109 NSW, Australia}

\affiliation{BTQ Technologies, 16-104 555 Burrard Street, Vancouver, British Columbia, Canada V7X 1M8}

\iftoggle{isjournal}{\maketitle}{}

\begin{abstract} 
In quantum sensing using $N$ entangled probes, the variance of the estimated signal strength $\hat{\theta}$ scales like $ \Theta(N^{-2})$ at the Heisenberg limit, which is a quadratic improvement over the standard quantum limit, and is the maximum quantum advantage  over classical methods. This limit remains elusive, however, because of the inevitable presence of noise decohering quantum sensors. 
Here, we introduce a quantum sensing protocol \texttt{ECSense} based on permutation-invariant quantum error correction (QEC) codes that support tunable code parameters to suit the physical noise model. We show that when the signal duration is much shorter than decoherence times, such that errors only accumulate during the $N$ qubit probe state preparation and idle stage, then the estimate's variance of $\Theta( N^{-3/2})$ is achievable in the presence of $\Theta(\sqrt{N})$ errors while the Heisenberg limit is achieved when the number of errors is a constant. In the more challenging setting where errors also occur during signal accumulation, we prove using a non-Markovian QEC strategy, that even for a linear number of deletion errors, a variance approaching the Heisenberg limit is still achievable.
We illustrate a concrete way to  implement our protocol on near-term quantum hardware using cavity-assisted geometric phase gates. 
\end{abstract}

\iftoggle{isjournal}{}{\maketitle}

Estimation of physical parameters is a staple in fundamental science. 
Abstractly, one may use $N$ repeated experiments to estimate a physical parameter, and find this estimate's variance to scale as $1/N$.
In contrast, quantum sensors promise the estimation of physical parameters, such as gravitational, magnetic and electrical fields, with unprecedented precision \cite{RevModPhys.89.035002}. 
The idea behind quantum sensing is that instead of using $N$ experiments, we employ $N$ entangled qubits as a probe state for the physical process to act on. Then, by measuring the evolved probe state, the variance of the estimate of a physical parameter can scale as $1/N^2$, which is a quadratic improvement over the purely classical case. 

While ideal quantum sensors would unlock this quadratic advantage, practical quantum sensors lose their quantum advantage because of noise \cite{JZL2024}. Incorporating quantum error correction (QEC) codes into quantum sensors is an attractive theoretical approach to combat this problem \cite{dur2014improved,arrad2014increasing,kessler2014quantum,unden2016quantum,matsuzaki2017magnetic,Zhou2018,layden2019ancilla,gorecki2020optimal,Zhou2020optimal,shettell2021practical}. However this strategy is beset with challenges, such as the potential of a given QEC code to destroy the signal while correcting errors.
Another challenge is preparing large enough QEC codes in near-term quantum hardware to deliver the promised quadratic advantage of quantum sensors.

 \begin{figure*}[htp]
 \centering  
\includegraphics[width=0.9\textwidth]{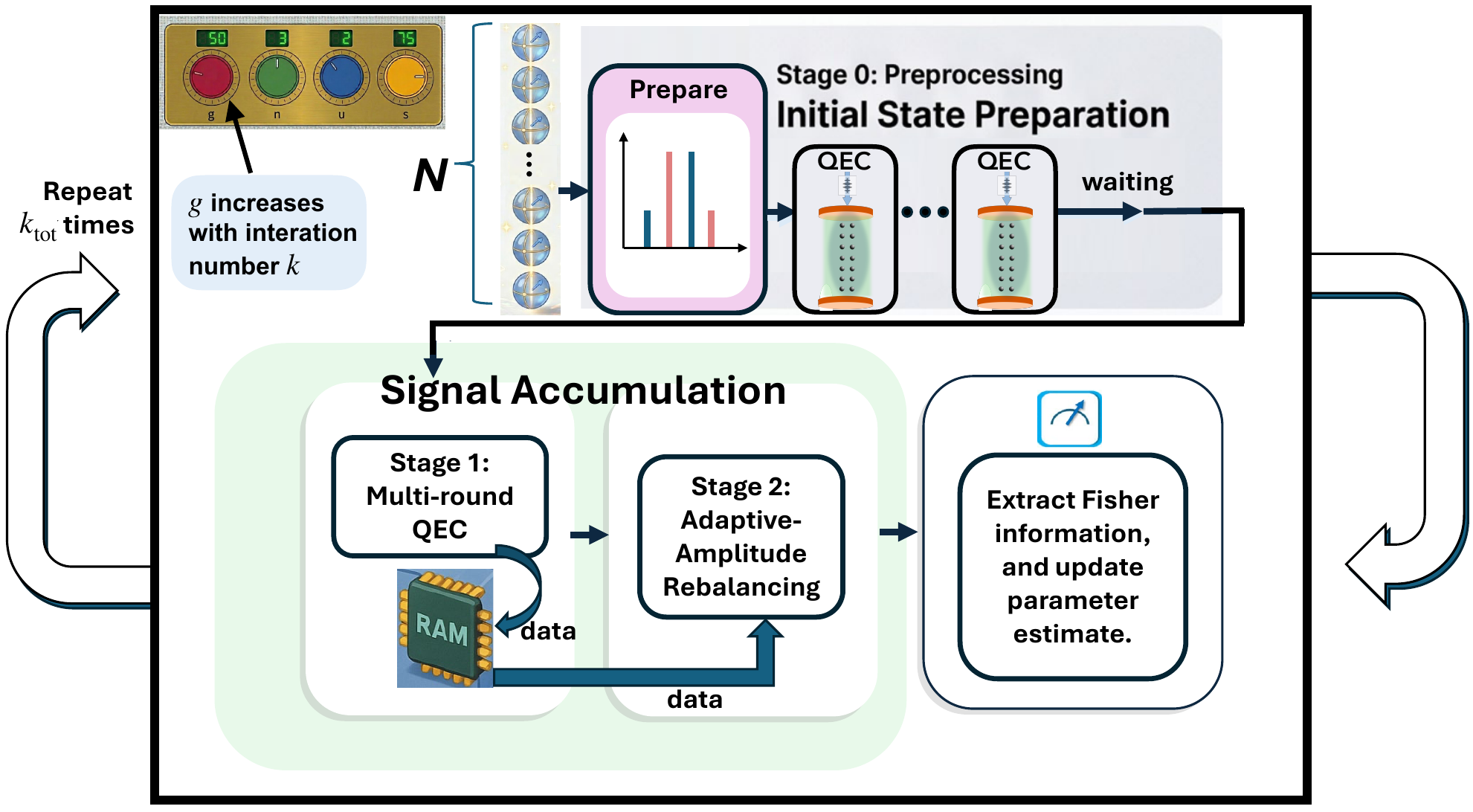} 
 \caption{{\bf{\texttt{ECSense}.}} We illustrate the general structure of our quantum sensing protocol that iterates a constant number $k_{\rm tot}$ times. 
 The parameters of the shifted gnu probe state is determined by four tunable integers, $g,n,u$ and $s$. We illustrate the structure of the shifted gnu probe state by plotting amplitudes versus Dicke state weight. The blue lines indicate the location of the logical 0 state, and the red lines indicate the location of the logical 1 state.
 Within each iterate, there is a preprocessing stage, a signal accumulation stage, followed by the signal extraction stage. Within each stage, an appropriate amount of QEC is used, depending on the amount of noise in the signal accumulation stage. With PI codes the QEC can be physically implemented using cavity assisted gates as illustrated.}
 \label{fig:ECSense}
 \end{figure*}
 
For a QEC-based quantum sensing protocol to be viable for near term implementations, it is advantageous to 
use a QEC code that allows for (1) controllability with a large number of qubits, (2) correction of a constant-rate of the most likely errors, and (3) precision that approaches the Heisenberg limit. However, there is no existing protocol that simultaneously satisfies (1), (2) and (3).
This situation gives rise to the following question:
\newline

\noindent
{\bf
How can one achieve quantum advantage with quantum sensors by using noise-tailored error correction?}
\newline

Here, the effect of the noise enters into consideration in two primary ways: (1) how it is biased, e.g. bit flip vs. phase flip noise, and (2) how it is apportioned during the various stages of a sensing protocol, e.g. errors during probe state preparation and waiting for the signal vs. errors that accumulate during the signal. Our answer to this question is our 
design of a QEC-enhanced field Sensing protocol (\texttt{ECSense}) based on permutation-invariant (PI) quantum codes. 
We show that PI codes are a good choice in that they provide tunable code parameters to achieve biased bit flip or phase flip distances addressing (1), and also operate well in the demanding scenario of consideration (2) where noise occurs during the signal. Indeed regarding the later, even with a constant rate of deletion errors, \texttt{ECSense} has a precision that approaches the Heisenberg scaling in the limit of a large number of qubits. 
Our construction relies on a non-Markovian QEC strategy where information is fed forward from one round to the next. We furthermore demonstrate that our scheme admits near-term implementations using simple quantum control techniques, such as global collective couplings, mode displacements, and heterodyne-style detections, which lies within reach of near-term quantum platforms. 
 
  While deletion errors are not the most general single qubit error model during signal accumulation, it allows us to prove quantum advantage with PI codes and is relevant in practical scenarios. A prominent example is trapped neutral atoms, which is a leading  modality for quantum computing and sensing \cite{Srivastava_2026}. Leakage outside the computational subspace and loss are significant sources of error in atomic systems. For example a recent experiment~\cite{lancaster2025quantumsensingpresencepulse} on magnetic dipole field sensing with Rb atomic ensembles trapped in a Ne matrix observed that qubit leakage was the dominant source of error. Fortunately, it is possible to achieve near-perfect conversion of leakage to loss errors in trapped atomic systems \cite{Wu2022} which can then be corrected for  with QEC.
 However, using standard methods to correct for loss it is necessary to track where a loss occurred via so-called Leakage Detection Units (LDUs) \cite{LDU1,LDU2,LDU3,LDU4}. These LDUs involve small circuits attached to each data qubit and impose substantial overhead in control complexity, addressability, and detector efficiency.
 Therefore, simplifying the process by removing tracking requirements would improve implementation efficiency at the expense of having untracked particle losses, called deletion errors \cite{leahy2019quantum,HagiwaraISIT2020,ouyang2021permutation}.
Deletion errors are more damaging than tracked particle losses, because while QEC codes of distance $d$ can correct $d-1$ tracked particle losses, they are generally susceptible to even a single deletion error.

\texttt{ECSense} builds upon the theory of quantum sensing using symmetric states that support QEC. Such states reside in permutation-invariant codes, which not only comprise of states that are invariant under any permutation of the underlying qubits.
Permutation-invariant codes \cite{Rus00,PoR04,ouyang2014permutation,ouyang2015permutation,OUYANG201743,ouyang2019permutation,ouyang2021permutation}, have several features that make them attractive candidates for use in quantum sensing.
First, their controllability by global fields could allow for their scalable physical implementations \cite{johnsson2020geometric} in near-term devices such as trapped ions, or ultracold atoms where addressability without cross-talk is challenging. 
Second, unlike most conventional QEC codes, permutation-invariant codes can correct deletions \cite{ouyang2021permutation}. Third, the simple structure of permutation-invariant codes makes it easier to design specialized QEC protocols for quantum sensing.

\texttt{ECSense} comprises three stages, Stage 0, Stage 1 and Stage 2.
In Stage 0, the preprocessing stage of \texttt{ECSense}, we prepare a quantum state in a permutation-invariant code's logical plus state, and correct errors that occurred during state preparation and whilst waiting for the signal to arrive.
In Stage 1, we allow deletions to occur in concert with the signal accumulation. 
The effect of the signal accumulation is to rotate the codespace.
With each round of QEC, we project the state onto one of two spaces, then perform a conditional unitary to bring the state back to the codespace.
This projection allows the phase to effectively accumulate as a rotation about the logical Z-axis in the original codespace.
The effective rotation angle depends on which of the two spaces the state was projected onto, and how many deletions occurred.

Deletion errors also degrade the amount of extractable information from the accumulated phase by skewing the ratio of the modulus of the amplitudes on logical zero and logical one state away from parity. Stage 2 performs an adaptive sequence of specially designed projective measurements that uses the entire history of all preceding measurements, to, in a deliberate manner, bring this ratio back towards parity. 
This highly non-Markovian process thereby recovers the information of the accumulated signal, and allows an improved precision of the field's estimate compared to its prior precision.
Repeating our subroutine, 
we iteratively improve the estimated field's precision. 
We show that this precision not only surpasses the standard quantum limit, but can also approach the Heisenberg limit.

 In summary, our paper shows, how under realistic conditions, we can have QEC-enhanced quantum sensors that outperform traditional quantum sensors using existing quantum control techniques. 
 This will be especially relevant to practical quantum sensing experiments, where the potential of QEC in the design of practical quantum sensors has not yet been explored.

\section{Symmetric probe states and local estimation} 

\begin{table}[t]
\centering
\caption{Tunable \texttt{ECSense} protocol parameters}
\label{tab:notation-ecsense-table1}
\renewcommand{\arraystretch}{1.15}
\begin{tabular}{|l|p{0.75\linewidth}|}
\hline
\multicolumn{2}{|c|}{\textbf{Tunable \texttt{ECSense} protocol parameters}}\\
\hline
$r$ & Rounds of signal accumulation in Stage~1 before rebalancing.\\ 
 $k_{\rm tot}$ & Total number of iterations in \texttt{ECSense}.\\
$\delta$ & Small positive parameter. The smaller the $\delta$, the closer we get to the HL, and the slower the convergence will be in the total number of qubits $N$. \\
 
${B}_0$ & The standard deviation of a prior estimate of the parameter $\theta \in[-\pi,\pi]$ is ${B}_0 = \Theta ( N^{-{b}_0} )$, with $b_0=1/2$
corresponding to sensing according to the SQL.\\
\hline
\end{tabular}
\end{table} 

Quantum sensing of a classical field's magnitude proceeds in several stages, which are (1) preparation of an $N$-qubit probe state, (2) signal accumulation of the classical field on the probe state, and (3) measurement. 
We repeat (1), (2) and (3) on multiple copies of $N$-qubit probe states, 
and based on the measurement statistics, we estimate the classical field's magnitude.
During signal accumulation, a classical field of initially unknown magnitude $E$ interacts with the qubits via the interaction Hamiltonian $\mu E {\hat J^z}$. \footnote{Hereafter we set $\hbar\equiv 1$.} Here, $\mu$ is the known coupling coefficient, ${\hat J^z} = \frac{1}{2} \sum_j Z_j$ is the collective spin operator, and $Z_j$ applies a phase-flip on only the $j$-th qubit. 
The signal modeled by the unitary
$U_\theta = \exp(-i \theta {\hat J^z})$ maps $\rho$ to $\rho_\theta = U_\theta \rho U_\theta^\dagger$. 
The parameter $\theta$ is dimensionless, 
and can be expressed as $\theta = b T$, where both $T$ and $b = \mu E t_0$ are dimensionless, $T t_0$ is the time, and $t_0$ is a unit of time. For convenience, we set $T=1$, so $b= \theta$, though $T$ really could be any constant.
The only unknown parameter is the field strength $E$.
We maximize the information we extract on $\theta$ by
measuring $\rho_\theta$ in a well-chosen basis.
Using the measurement statistics, we construct a locally unbiased estimator $\hat \theta$ for the true value of $\theta$ with minimum variance ${\rm Var}(\hat \theta)$. With $\hat \theta$, one estimates the field strength.
It is standard practice to use a classical sensor to have estimated the parameter $\theta$ to a sufficient level of precision so the uncertainty in the value of $\theta$ lies in a range from $-\pi$ to $\pi$.

Using the language of quantum metrology, given a probe state $\rho_\theta$ that depends on a parameter $\theta$, we want to find the minimum variance estimator $\hat \theta$ of $\theta$ that is furthermore locally unbiased. 
The celebrated quantum Cram\'er-Rao bound gives an attainable lower bound 
$\Var(\hat \theta) \ge Q(\rho_\theta, \frac{d\rho_\theta}{d\theta})^{-1}$
\cite{sidhu2019geometric},
where $Q(\rho_\theta, \frac{d\rho_\theta}{d\theta}) = \tr(\rho_\theta L^2)$ denotes the quantum Fisher information (QFI) and $L$, the symmetric logarithmic derivative (SLD), is any Hermitian solution of the Lyapunov equation
$    \frac{d\rho_\theta }{d \theta}
    =\frac{1}{2}(L \rho_\theta +  \rho_\theta L). $ 
In a suboptimal measurement strategy, where we do not measure in the SLD operator's eigenbasis, we can still obtain a probability distribution from which we can estimate $\theta$, by calculating the Fisher information (FI) of that probability distribution. The Cram\'er-Rao bound then tells us that $\Var(\hat \theta) $ is at least the inverse of the FI of such a probability distribution.

Here, because we use the FI, we work in the local estimation theory paradigm, where we require prior knowledge of $\theta$. We can obtain such prior knowledge using classical techniques, such as with $N$ measurements.
Then the variance of the classical estimator $\hat \theta_{\rm classical}$ is proportional to $1/N$. This precision, which represents the ultimate precision limit of classical and semi-classical estimation strategies, is the standard quantum limit (SQL). 
Noiseless quantum sensing of a classical field promises a Heisenberg limited (HL) precision, where there exists an estimator
$\hat \theta_{\rm HL}$ that has variance proportional to $1/N^2$. 

Symmetric states, invariant under any permutation of their underlying particles, are superpositions of the Dicke states  
$| D^N_w\rangle   ={\binom N w}^{-1/2}
\sum_{\substack{ x_1, \dots, x_N \in \{0,1\} \\ x_1+ \dots + x_N = w}}
|x_1, \dots, x_N\rangle ,$ 
of weights $w = 0,\dots, N$.  
Permutation-invariant codes \cite{Rus00,PoR04,ouyang2014permutation,ouyang2015permutation,OUYANG201743,movassagh2020constructing} are QEC codes comprising of symmetric states, and we consider $s$-shifted gnu codes \cite{ouyang2021permutation,OUYANG201743} on $N=gnu+s$ qubits with logical codewords
\begin{align}
|j_{g,n,u,s}\rangle  &= 2^{-(n-1)/2} \sum_{\substack{ {\rm mod}(k,2)=j \\ 0 \le k \le n}} {\binom n k}^{1/2}  |D^{gnu+s}_{gk+s}\rangle ,
\end{align}
for $j=0,1$ as candidates for QEC-enhanced quantum sensing. 
Intuitively, $g$ corresponds to the bit-flip distance, $n$ corresponds to the phase-flip distance. Here, $u\ge 1$ and $s\ge 0$ both specify the number of qubits used, and the distribution of the Dicke weights.
These $s$-shifted gnu codes have distance ${\rm min}\{g,n\}$, which allows the correction of ${\rm min}\{g,n\}-1$ deletion errors,
and the correction of
$\lfloor ({\rm min}\{g,n\}-1)/2 \rfloor $ general errors.  
Shifted gnu codes transform under the unitary $U_\theta$ into
{\em $\theta$-rotated and $s$-shifted gnu codes}, which have logical codewords
$|j_{g,n,u,s,\theta}\rangle  = U_\theta |j_{g,n,u,s}\rangle $ 
for $j=0,1$ and  
also have a distance of ${\rm min}\{g,n\}$.

\begin{table}[t]
\centering
\caption{Notation for code and signal parameters}
\label{tab:notation-amplitude-rebalancing}
\renewcommand{\arraystretch}{1.15}
\begin{tabular}{|l|p{0.75\linewidth}|}
\hline
\multicolumn{2}{|c|}{\textbf{Shifted gnu code parameters}}\\
\hline
$N$ & number of physical qubits.\\
$g$ & bit-flip distance and the  spacing between Dicke weights. The choice of $g$ varies with the \texttt{ECSense} iterate number $k$ as \newline $g=\Theta(
    N^{1 + O(\delta)
    -(2/5 + O(\delta))^{k}(1/2+O(\delta)) } 
    )$.
\\
$n$ & phase flip distance: the number of Dicke weights  is $n+1$.  \\
$s$ & shift: offset of Dicke weights from zero.\\
$u$ & scaling parameter: $u = (N-s)/gn$. \\
\hline
\multicolumn{2}{|c|}{\textbf{Signal parameters}}\\
\hline
$T$ & dimensionless time set to $T=1$\\
$t_0$ & unit of time.\\
$Tt_0$ & total time of each iteration of \texttt{ECSense}.\\
$\theta \hat J^z$ & interaction Hamiltonian.\\
$\theta$ & dimensionless parameter to estimate. \\
\hline
\end{tabular}
\end{table}

Throughout our paper, we focus on shifted gnu codes, and therefore as a shorthand, we write $|j_L\rangle = |j_{g,n,u,s}\rangle$.
The probe state of our quantum sensing protocol is the logical plus state of the shifted gnu code, given by
\begin{align}
|+_{L}\rangle  = (|0_{L}\rangle  + |1_{L}\rangle )/ \sqrt 2.
\end{align}
Furthermore, we choose this logical plus state to be concentrated about the half-Dicke state;
namely, we choose the shift as $s = N/2 - gn/2 +O(1)$.
In the \app, we show that the optimal FI for such a shifted gnu code, 
where we measure in the eigenbasis of the SLD operator,
is given by $g^2n$.
We summarize notation for code and signal parameters in Table \ref{tab:notation-amplitude-rebalancing}.

\section{Stage 0: Pre-processing before signal accumulation}
\label{sec:II}

In this section we focus on errors that accumulate during state preparation and after state preparation but before the signal has arrived. The errors can become substantial if the timescale at which state preparation takes place is comparable to the timescale of the errors or if the wait-time for the signal is too long. In such a scenario we will perform quantum error correction to clean up to probe state, before the signal accumulation stage. This is the pre-processing stage, which we call Stage 0.

We are able to correct up to $t$ arbitrary errors on the probe state that is prepared within a 
permutation-invariant codespace, when no qubits are lost. For this, we perform the following steps.
First, we measure the total angular momentum on nested subsets of qubits. These measurements of total angular momentum occur in the sequentially coupled basis \cite{jordan2009permutational,havlivcek2018quantum}, where 
subsets of qubits that we measure are 
$\smash{[\ell] = \{ 1 ,\dots, \ell \}}$, where $\ell=1,\dots,N$. The corresponding total angular momentum operators to be measured are
$\hat{J}^2_{[1]} , \dots, \hat{J}^2_{[N]}$\footnote{Note that throughout we have used the simplified notation for the total angular momentum of all spins as $\hat{J}$ instead of $\hat{J}_{[N]}$.} where 
\begin{align}
\hat{J}_{[\ell]}^2 &= {({\hat J}^x_{[\ell]})}^2 + {(\hat{J}^y_{[\ell]})}^2 + {(\hat{J}^z_{[\ell]})}^2,
\end{align}
and
\begin{align}
\hat{J}^x_{[\ell]} &= \frac{1}{2} \sum_{i =1}^\ell X_i,\quad
\hat{J}^y_{[\ell]} = \frac{1}{2}\sum_{i =1}^\ell Y_i,\quad
\hat{J}^z_{[\ell]} = \frac{1}{2}\sum_{i =1}^\ell Z_i.
\end{align}
The eigenvalues of the operators $\hat{J}_{[\ell]}^2$ are of the form $j_{\ell}(j_{\ell}+1)$ where $2j_{\ell}$ are positive integers. 
After measurement, $\hat{J}_{\ell}^2$ gives an eigenvalue of $j_{\ell}(j_{\ell}+1)$, and we can infer  the total angular momentum number $j_{\ell}$. These total angular momentum numbers belong to the set
\begin{align}
T_{\ell} = \{  \ell/2 -j : j = 0,\dots, \lfloor \ell/2 \rfloor, \ell/2 - j \ge 0 \}.
\end{align}
Since the total angular momentum operators $\hat{J}_{[\ell]}^2$ all commute, the order of measuring these operators does not affect the measurement outcomes. Hence we may measure $\hat{J}_{[\ell]}^2$ sequentially; that is we measure $\hat{J}_{[2]}^2$, followed by $\hat{J}_{[3]}^2$, and so on (noting that $\langle \hat{J}_{[1]}^2\rangle =\frac{3}{4}$ always). Using the observed total angular momentum $j_1, \dots, j_N$,
we construct standard Young tableaus (SYTs) encapsulating the measurement information \cite{newpaper}.

Second we perform a quantum process that brings the state back to the original PI code. 
We can do this using a partial inverse quantum Schur transform together with an adaptive measurement-based strategy on the symmetric subspace, or a teleportation-based scheme \cite{newpaper}. For the teleportation-based scheme we use two registers: a freshly prepared ancillary PI code basis state in register A, 
and the output of the probe state after the total angular momentum measurements in register B. The state in register B is essentially in a codespace labelled by an SYT. 
Then we perform logical CNOT gate with control on register A and target on register B, before performing a logical Z measurement on register B. 
Depending on the measurement outcome, we apply a logical X correction on register A. 
The teleportation procedure is agnostic to which SYT the state in register B is in and how many qubits are in registers A and B. We can perform such a logical CNOT with geometric phase gates \cite{newpaper}.

The procedure of QEC when the errors are deletion errors, is considerably simpler than the case of general errors for permutation-invariant codes \cite{newpaper}. 
Namely, for shifted gnu codes with $g$ larger than the number of deletions, the effect of deletions is to randomly shift the state into a linear combination of states supported on Dicke states of weights with different values modulo $g$. 
Then the error correction procedure is to first measure the Dicke weights of the states modulo $g$, then apply geometric phase gates to bring the state back into the codespace.

Now, recall that our protocol uses shifted gnu codes initialized in the logical plus state with appropriate values of $g,n$ as probe states. 
In the limiting case where the signal is much faster than the error rate, we can treat the signal as essentially error-free. 
Then, after the error correction step and the fast signal accumulation, by measuring the probe state in the optimal basis, the QFI is $g^2n$.
This corresponds to a mean-square error in estimating $\theta$ that scales as $g^{-2}n^{-1}$ according to the quantum Cram\'er-Rao bound. For probe states of length $N=gn$ (that is when $u=1$ and $s=0$),
the QFI is $N^2/n$.
Therefore, if $n$ is constant, 
the quantum FI is $\Theta(N^2)$ and we achieve the HL while correcting a constant number of errors. If $n = \Theta(\sqrt N)$, we can correct more errors, up to $\Theta(\sqrt N)$ errors, and with a lower QFI of $\Theta(N^{3/2})$, but still beating the SQL.
Indeed, as long as the number of errors is sublinear in $N$, we can set the minimum of $g$ and $n$ to be of the order of the number of errors to surpass the SQL in estimating $\theta$, and have a quantum advantage.

We can also conveniently measure the shifted gnu probe state in the logical plus-minus basis, to obtain a FI of $9 g^2 \sin^2 (g\theta) /4 $ when $n=3$. 
Hence, measuring in the logical plus-minus basis is close to optimal for field sensing using shifted gnu probe states with $n=3$ when $g \theta$ is close to $\pi/2$.


\section{Errors during signal accumulation}

The protocol that estimates $\theta$ assumes prior knowledge of its standard deviation, which we denote as ${B}$. The goal of each of $k_{\rm tot}$ runs within \texttt{ECSense} is to iteratively improve our estimate of $\theta$, which translates to an iterative reduction in the value of ${B}$.

Each run during the signal accumulation stage comprises two main stages.
The first stage is QEC-enhanced signal accumulation (see Figure \ref{fig:protocol1}), where we perform QEC after possible occurrences of deletions and the signal accumulates in each of $r$ timesteps. 
The QEC steps, are specially tailored to the quantum sensing protocol and are non-standard. These QEC steps ensure that the signal accumulates effectively on the codespace by projecting the evolved probe state back into the codespace, whilst limiting the effects of deletions.
Deletions create the problem of ``amplitude imbalance'' that reduces our ability to extract the signal. This occurs because the distortion ratio $R$, defined as the ratio between the magnitude of the probability amplitudes of the logical zero and logical one states, can deviate far from its original value of 1.

The second stage performs $v$ rounds of amplitude rebalancing steps via an adaptive sequence of quantum channels (see Figure \ref{fig:protocol2}). This amplitude rebalancing, with probability almost 1, restores the distortion ratio $R$ back towards 1. Amplitude rebalancing occurs while the signal continues to accumulate, and its effect is analogous to taking a controlled biased random walk to get within a small distance of the target position. We conclude the second stage once $R$ is sufficiently close to 1, which makes $v$ a number that is not determined in advance.

In Stage 1, in each of the $r$ rounds, we allow the signal to accumulate for a small amount of dimensionless time $1/r$, during which deletions may occur. 
Since the signal commutes with deletions, we may without loss of generality take the signal to accumulate either before or after the deletions.
The time of each timestep is $t_0/r$ for some constant time $t_0$.
In each round, $U_{\Delta}$ acts on the probe state, where $\Delta = \theta/r$.
We assume that $\theta$ is a non-zero constant independent of $N$.  
The QEC part of this stage proceeds in four steps.
First, we count the number of qubits deleted from the system.  
Second, we perform a modular measurement in the $\hat J^z$ basis, which determines which set of Dicke states the state is supported on.
Third, we perform a projective measurement, that determines whether the state is in (1) the codespace $\mathcal C$, 
or (2) a subspace $\mathcal Q$ orthogonal to the codespace.
In (2), quantum control returns the state into a shifted gnu code's codespace on the remaining qubits. 
Intuitively, these steps allow the signal to accumulate within the codespace as an effective rotation generated by the code's logical $Z$ operator.

Stage 2, the amplitude rebalancing stage, uses information gathered from Stage 1 on the number of deletions and the shift in the weights of the Dicke states in each round. 
Based on this information from Stage 1, a classical computer estimates the value of the distortion ratio $R$.
In each amplitude rebalancing round, we first count the number of deletions. 
If there is at least 1 deletion, 
we perform the second to fourth steps of the QEC steps in Stage 1. 
If there are no deletions, then the second step considers the value of $R$, and if $R> 1$ we project the state with accumulated phase onto one of two orthogonal spaces $S^0_-$ and $S^1_-$.
If $R < 1$,
we project the state with accumulated phase onto one of two orthogonal spaces $S^0_+$ and $S^1_+$.
If the state projects onto $S^0_\pm$, the distortion ratio shifts closer to 1 by a constant factor, and effectively no phase accumulates. 
Otherwise, the distortion ratio shifts in the worst case away from 1 by a constant factor, 
and we keep track of the accumulated phase. 
There is some uncertainty in the updated values of the distortion ratios because of the intrinsic uncertainty in the parameter $\theta$ that we wish to estimate.
After each round, we bring the state back into a shifted gnu codespace.
The algorithm stops when the distortion ratio is within a constant of 1.

We propose implementing state preparation, QEC, and measurements entirely with geometric phase gates (GPGs) \cite{johnsson2020geometric}, which rely on a dispersive coupling of the qubits with a catalytic bosonic mode. GPGs require only four native operations. The first operation is the initialization of the bosonic mode, which is achievable with the use of a laser.
The second operation is driving field on a mode that is also linearly coupled to the spins. This kind of interaction has been generated with trapped ions whose spin state is coupled to a motional mode \cite{trappedionrev2025}, trapped neutral atoms coupled to optical~\cite{Neutralatomoptical} or microwave cavity mode~\cite{RydbergMicrowave}, and in cavity QED architectures~\cite{eickbusch2022fast,wang2024dispersive}. Third, we need displacement of the mode, and fourth, homodyne detection, both of which have mature implementations across an array of platforms \cite{PhysRevLett.94.113601,PhysRevLett.130.143004}. 
By moving beyond the usual Clifford + $T$ paradigm, GPGs eliminate the need for individual qubit addressability, and exploit 
readily available bosonic manipulations. This offers a hardware-efficient route to realizing our protocol in near-term devices. 

We carefully choose the parameters of the protocol according to the current precision of the parameter $\theta$. Namely, given a current value of the precision of $\theta$, we choose a corresponding value for $g$. Then $r$ is chosen as $g^{1+\delta}$ for some small positive constant $\delta$.
Next, the number of steps $v$ required in Stage 2, is, on average, roughly $\sqrt {g^3/N}$ (see Table \ref{tab:notation-ecsense-tab3}). We denote the number of iterations within \texttt{ECSense} as a constant $k$.

 \begin{figure}[!h]
 \centering  
\includegraphics[width=0.5\textwidth]{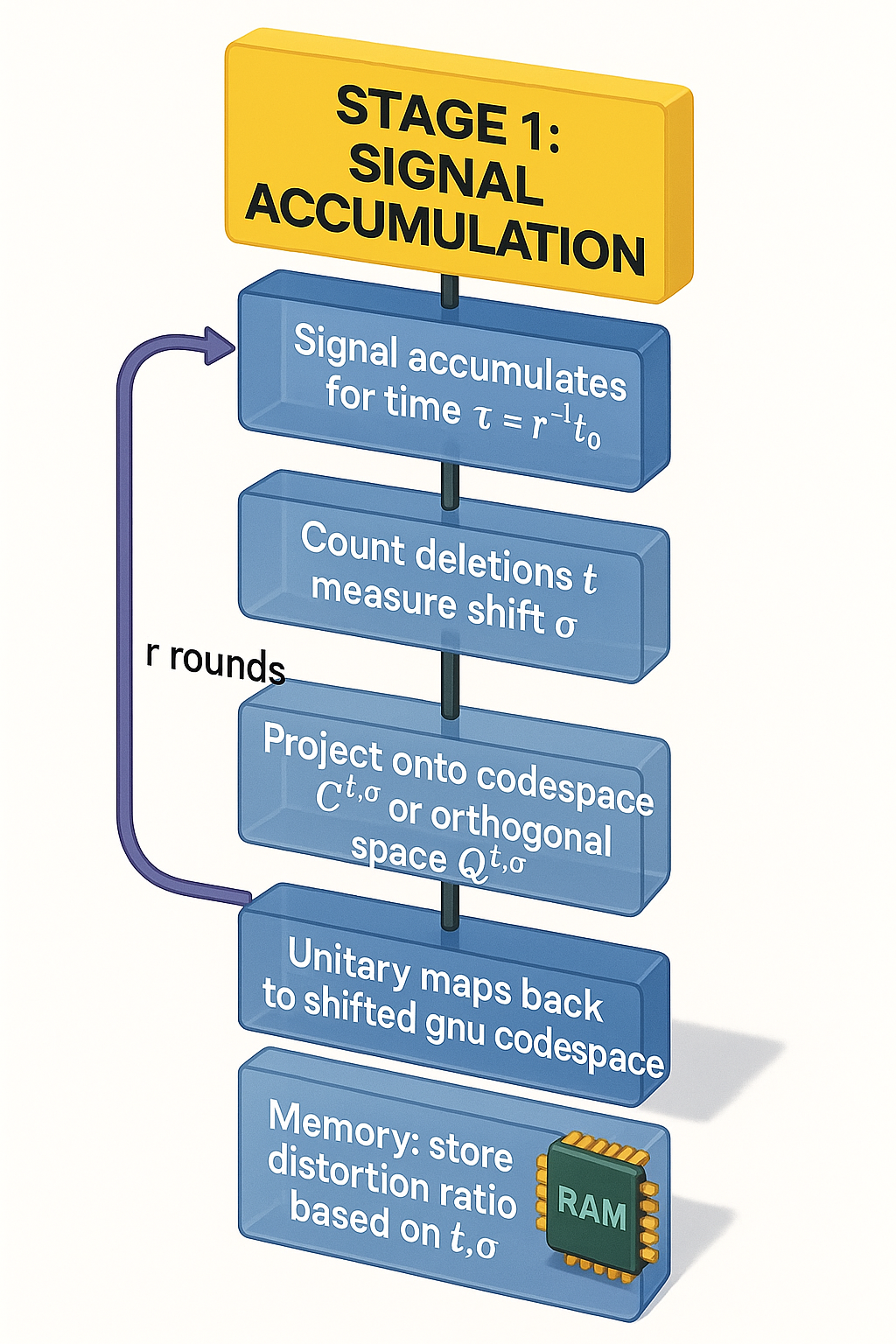} 
 \caption{{\bf Stage 1 in \texttt{ECSense}}: $r$ rounds of signal accumulation, each over a time $\tau t_0$, followed by QEC. 
 If the projection is on the codespace $\mathcal C^{t,\sigma}$, the subsequent unitary does nothing.
 Otherwise, if the 
  projection is on the orthogonal space $\mathcal Q^{t,\sigma}$, the subsequent unitary maps $\mathcal Q^{t,\sigma}$ back to the codespace.
 Here, the dimensionless time is $\tau = r^{-1}$, and $t_0$ is a dimension-full time unit.
 Here, the size of $r$ is determined by the gnu code parameter $g$ and a positive exponent $\delta$ that 
shows how $r/g$ grows with $N$.}
 \label{fig:protocol1}
 \end{figure}

 \begin{figure}[!h]
 \centering  
\includegraphics[width=0.5\textwidth]{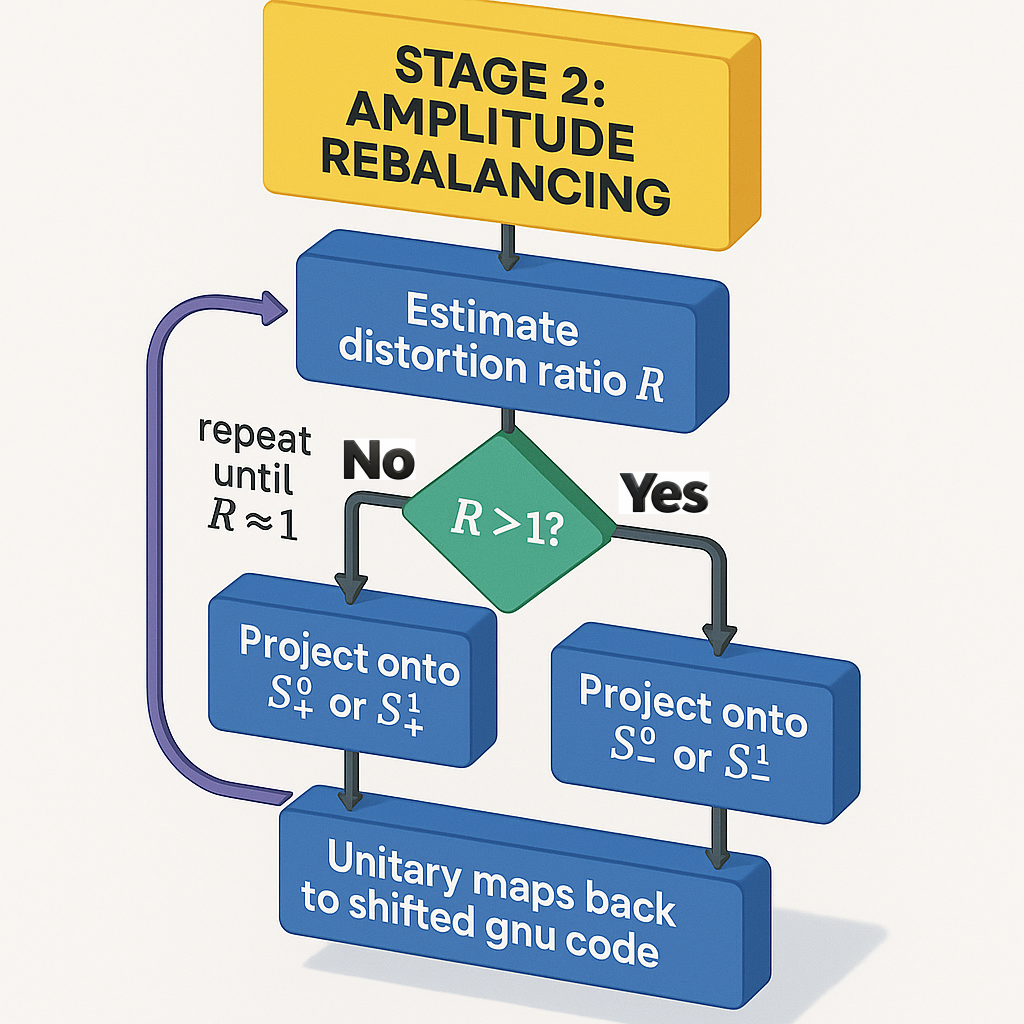} 
 \caption{{\bf Stage 2 in \texttt{ECSense}}: Rounds of amplitude rebalancing until the amplitudes are balanced. Each timestep takes a time of $t_0 / N^{1+\delta}$, and so essentially no deletions occur in this stage. 
 Successful rebalancing occurs with projection onto $S^0_\pm$ with probability at least 5/8.
 If deletions do happen, we measure the number of deletions and the resultant random shift in Dicke weight, and calculate the resultant change in the distortion ratio. The expected number of rounds is $v$.
 }\label{fig:protocol2}
 \end{figure}

\subsection{The case of no deletion when $n=3$}

Prior to the occurrence of deletions in each round, we denote the probe state as $\xi_0|0_{L}\rangle  + \xi_1|1_{L}\rangle .$ We indicate how our calculations generalize to larger values of odd $n$ by keeping the dependence of expressions on $n$ until the very end, and supply full details of what happens with our protocol for larger values of $n$ in the \app.

During each round with no deletions, 
the QEC step 
first performs a projective measurement 
of the state 
$U_{\Delta}
(\xi_0 |0_{L}\rangle 
+ \xi_1 |1_{L}\rangle )$
onto one of two spaces. 
The first space is the codespace $\mathcal{C}$ of the shifted gnu code, and
the second space $\mathcal{Q}$ is spanned by
\begin{align}
|{q_0}\rangle  &= 
\frac{1}{ g\sqrt n }
\left({\hat J^z} | 0_{L}\rangle  - \langle 0_{L} | {\hat J^z} | 0_{L}\rangle   |0_{L}\rangle \right) ,\notag
\\
|{q_1}\rangle  &=  
\frac{1}{ g\sqrt n }
\left({\hat J^z} | 1_{L}\rangle  - \langle 1_{L} | {\hat J^z} | 1_{L}\rangle   |1_{L}\rangle\right)
\label{eq:q0q1}.
\end{align}
Since each of the logical codewords is supported on two Dicke states, each logical codeword evolves under $U_\Delta$ to a superposition of two Dicke states. Hence, we project the evolved codespace onto either 
(1) the original codespace, or 
(2) an orthogonal space $\mathcal Q$.

The projected evolved state on $\mathcal C$ has the form 
$\xi_0|0_{L}\rangle  + \xi_1 e^{i \zeta_0}|1_{L}\rangle $
 where
\begin{align} 
\zeta_0 &= 
-2\arctan\left(  \tan^{3}(g\Delta/2)  \right) \approx -(g \Delta)^3/4.\notag
\end{align}
Projection onto the space $\mathcal Q$
gives the projected evolved state 
$\xi_0|q_{0}\rangle  + \xi_1 e^{i \zeta_1}|q_{1}\rangle $
where 
$\zeta_1 =  g\Delta \notag. $

The probabilities of projecting onto the spaces 
$\mathcal C$ and $\mathcal Q$ are 
$\cos^6(g\Delta/2) + \sin^6(g\Delta/2)$ and
$(3/4) \sin^2(g\Delta)$ respectively, and since these probabilities sum to 1, there is no chance of projection onto any other space.
When $g \Delta = g\theta/r$ is small, then the expected accumulated phase per round is dominated by 
$(g \Delta)^3/2,$
and hence a non-trivial phase accumulates in expectation. 
Hence, the dominant contribution in phase arises from our projection onto the space $\mathcal Q$.
After projection onto the space $\mathcal Q$, we perform a unitary that maps the space
$\mathcal Q$ to the codespace $\mathcal C$, mapping the states $|q_j\rangle $ to the states $|j_L\rangle $ for $j=0,1.$


\subsection{The case of multiple deletions}

In a general setting during signal accumulation for each round, a number of deletions may occur. Let $t$ denote the number of such deletions. 
By the law of large numbers, in each round, the number of such deletions is a random variable that is between $N/(2r)$ and $2N/r$ for some positive constant $c$, with probability at least $1-e^{-c N/r}$.
Starting from
an initial shifted gnu state 
$\xi_0|0_{L}\rangle  + \xi_1 |1_{L}\rangle $, 
the occurrence of $t$ deletions produces
a probabilistic mixture of $(N-t)$-qubit states
\begin{align}
|\psi^t_\sigma\rangle   = 
\xi_0|0^t_\sigma\rangle 
+ \xi_1 |1^t_\sigma\rangle  \notag
\end{align}
where the (unnormalized) basis states are 
\begin{align}
|0^t_\sigma\rangle  &\coloneqq 2^{(-n+1)/2}\sum_{k\ {\rm even}}  \sqrt{\binom n k} \sqrt{\frac{\binom{N-t}{gk+s-\sigma}} {\binom{N}{gk+s} } } | D^{N-t}_{gk+s-\sigma}\rangle ,\notag\\
|1^t_\sigma\rangle  &\coloneqq 2^{(-n+1)/2} \sum_{k\ {\rm odd}}  \sqrt{\binom n k} \sqrt{\frac{\binom{N-t}{gk+s-\sigma}} {\binom{N}{gk+s} } } | D^{N-t}_{gk+s-\sigma}\rangle .\notag
\end{align}
Here $\sigma$ is a random shift in the weight of the Dicke states that ranges from 0 to $t$. Signal accumulation on such a state gives the evolved probe state $U_{\Delta} |\psi^{t}_\sigma\rangle $.

Subsequently, we project $U_{\Delta} |\psi^t_\sigma\rangle $ onto either the codespace
$\mathcal C^{t,\sigma}$ 
of shifted gnu codes on $N-t$ qubits and with shift $s-\sigma$,
or the space $\mathcal Q^{t,\sigma}$ which comprises of the part of vectors in $\hat J^z \mathcal C^{t,\sigma}$ that are orthogonal to 
$\mathcal C^{t,\sigma}$.
Applying the Gram-Schmidt procedure to $|j^{t,\sigma}_L\rangle $ and $\hat J^z |j^{t,\sigma}_L\rangle  $, 
we obtain the orthonormal states 
$|j^{t,\sigma}_L\rangle  $
and 
$|q_j^{t,\sigma} \rangle $ respectively.
The space $\mathcal Q^{t,\sigma}$ is therefore spanned by
$|q_0^{t,\sigma} \rangle   $ 
and 
$|q_1^{t,\sigma} \rangle $.

We observe that the random shift $\sigma$ is for large $N$ almost surely concentrated about the value $t/2$,
that is,
\begin{align}
|\sigma - t/2 | \le c t^{1/2+\delta}
\label{sigma-concentrated}
\end{align}
for some positive constants $c$ and $\delta$,
when $r=\Theta(g^{1+\delta})$. Here, we are free to choose a positive $\delta$, and the constant $c$ then depends on the size of $N$ and $\delta$.
In such a situation, when $g \ge \sqrt N$, the distortion ratio that accumulates in a given round is 
$
|\langle 1_L^{t,\sigma}|
  U_\Delta 
|1^t_\sigma\rangle |/
|\langle 0_L^{t,\sigma}|
 U_\Delta 
|0^t_\sigma\rangle |$
and 
$|\langle q_1^{t,\sigma}|
  U_\Delta 
|1^t_\sigma\rangle |/
|\langle q_{0}^{t,\sigma}|
 U_\Delta 
|0^t_\sigma\rangle |$ for projections onto 
$\mathcal C^{t,\sigma}$
and $\mathcal Q^{t,\sigma}$ respectively.
These quantities and their inverses are upper-bounded by 
\begin{align}
1+O(gt^{1/2+\delta}/N).
 \label{distortion-bound}
\end{align} 
Hence, with probability at least $(1- e^{-c' t^{2\delta}})$, the distortion ratio in both
$\mathcal C^{t,\sigma}$ and $\mathcal Q^{t,\sigma}$ changes by a multiplicative factor of at most $1+ O(gt^{1/2+\delta}/N)$ in each round for some positive constant $c'$.
Most importantly, we show in the \app that this distortion ratio is independent of the signal strength $\theta$, regardless of the number of deletions and shifts encountered. 
We therefore know precisely the distortion ratio at each round.

Reminiscent to the no-deletion scenario, the effective phase that accumulates in the space 
$\mathcal C^{t,\sigma}$ and
$\mathcal Q^{t,\sigma}$ 
is of the order between $\Theta( g\Delta)$ and 
$\Theta( g^3\Delta^3)$ for vanishing $g\Delta$. Pessimistically, we can take the effective phase that accumulates to be of order $\Theta( g^3\Delta^3)$. A similar result also applies to shifted gnu codes with larger values of odd $n$.

In a nutshell, Stage 1 has the following steps.
First, the signal accumulates in the quantum state. 
Second, deletions happen, and we count the number of deletions. Third, we determine the random shift $\sigma$ by measurement of the weights of the Dicke states modulo $g$.
Fourth, we perform projections onto the orthogonal spaces given by 
$\mathcal C^{t,\sigma}$ and 
$\mathcal Q^{t,\sigma}$. If needed, we perform a unitary that brings 
$\mathcal Q^{t,\sigma} $ to the codespace $\mathcal C^{t,\sigma} $. Then, the probe state is mapped to a shifted gnu code with $N-t$ qubits and shift $s-\lfloor t/2 \rfloor$,
with an effective accumulated phase in the codespace.
Then we update the value of  $N$ with $N-t$, and $s$ with $s-\lfloor t/2 \rfloor$.

The distortion ratio 
$R_{j} 
= 
|\langle \psi_j| 1_L\rangle |^2 /
|\langle \psi_j| 0_L\rangle |^2 $
quantifies the amplitude imbalance at the $j$th round,
where $|\psi_j\rangle $ denotes the state at the conclusion of the $j$th round of signal accumulation.
In each run of Stage 1, we can calculate $R_r$ precisely and store this information into a memory. 
For the purpose of theoretical analysis, we estimate the likely value of $R_r$ after $r$ rounds of signal accumulation.  
Since the number of deletions per round is concentrated around a constant factor of $N/r$,
the distortion ratio $R_j$ changes by 
at most $(1+g O(N/r)^{1/2+\delta}/N)$
with probability at least 
$1-e^{-cN/r}-e^{-c''(N/r)^{2\delta}}
\ge 1- 2 e^{-c'''(N/r)^{2\delta}}$ for some positive constant $c'',c'''$.
Therefore the likely maximum value of 
$R_r$ and $R_r^{-1}$ is at most
$
O(\exp(
 O( g^{1+(1/2-\delta)(1+\delta)}
 N^{-1/2+\delta})
))$
with probability at least 
$1- 2r e^{-c'''(N/r)^{2\delta}}$.
Since $r$ is bounded by a polynomial in $N$, this probability approaches 1 for large $N$.
The logarithm of this maximum rounded off to the nearest integer is then  
$w = O(g^{1+(1/2-\delta)(1+\delta)}
 N^{-1/2+\delta})$.
 Here, $w$ is the number of amplitude rebalancing steps  
of constant magnitude 
needed to bring the amplitude imbalance back within a constant multiplicative factor of 1.
 
  \subsection{Amplitude rebalancing}

\begin{table}[t]
\centering
\caption{Notation within \texttt{ECSense}}
\label{tab:notation-ecsense-tab3}
\renewcommand{\arraystretch}{1.25}
\begin{tabular}{|l|p{0.75\linewidth}|}
\hline
\multicolumn{2}{|c|}{\textbf{Internal \texttt{ECSense} protocol parameters}}\\
\hline
$t$ & number of deletions in a given Stage 1 round\\
$\tau$ & dimensionless timestep in Stage 1 equal to $1/r$.\\
$\sigma$ & random shift of Dicke weight in a given Stage 1 round\\
$\Delta$ & equal to $\theta/r$.\\  
$v$ & expected number of rounds in Stage 2 to rebalance the amplitudes:
\newline
$v= 
\Theta(
g^{3/2 + \delta - 3 \delta^2/2 - \delta^3}
 N^{-1/2+\delta/2+\delta^2})$.\\
$\Delta'$ & equal to $\theta/N^{1+\delta}$.\\  
$x$ & equal to $g\Delta/2$ in Stage 1.\\  
$x$ & equal to $\Theta(g/{N^{1+\delta}})$ in Stage 2.\\  
$S^0_+$ & successful amplitude rebalancing: returns more amplitude on logical zero.\\
$S^0_-$ & successful amplitude rebalancing: returns more amplitude on logical one.\\
$S^1_\pm$ & failed amplitude rebalancing projection.\\
\hline
\end{tabular}
\end{table}

Amplitude rebalancing (see Figure \ref{fig:protocol2}) is an adaptive algorithm that takes as input the state at the end of Stage 1 ($|\psi_r\rangle $) and outputs the state $|\bar \psi\rangle $,
where
 $|\langle \bar \psi|1_L\rangle | \approx |\langle \bar \psi|0_L\rangle |$ while preserving the accumulated signal. 
Rebalancing the amplitudes maximizes the FI on $\theta$.
The algorithm proceeds in repeated tiny timesteps of $t_0 /N^{1+\delta}$ (with $\Delta'=\theta/N^{1+\delta}$), during which practically no deletions occur and signal accumulates.
In each timestep, we project the state onto one of two two-dimensional spaces,
motivated by the amplitude balancing construction of Ref \cite{newpaper}.
In Ref \cite{newpaper}, given a base code with logical codewords $|0_L\>$ and $|1_L\>$,
the paper's amplitude rebalancing algorithm takes any input state  
\begin{align}
     |\psi\> = 
     \cos \alpha |0_L\> + e^{i \phi }\sin \alpha |1_L\>\label{w-input},
 \end{align}
 where $\alpha, \phi \in \mathbb R$ and outputs either a state
\begin{align}
     |\psi_h\> = 
     \frac{ \sqrt{3+h}  \cos \alpha |0_L\> 
     + e^{i \phi }\sqrt{3-h} \sin \alpha |1_L\>}
     {\sqrt{ 3 + h \cos 2\alpha } } \label{w-form}
 \end{align}
with probability $\frac{3}{4} + \frac{h}{4} \cos 2\alpha$
 or outputs the state
\begin{align}
|\bar \psi_h\> =  \frac{\sqrt{1-h }}{2} \cos \alpha |0_h\>
+
\frac{\sqrt{1+h}}{2} e^{i\phi }\sin \alpha |1_h\>
\end{align}
 with probability $\frac{1}{4} - \frac{h}{4} \cos 2\alpha $, 
 where $h \in [-1/2, 1/2]$.
 Intuitively, the effect of the amplitude rebalancing algorithm deforms the scalings of the logical codewords by a real amount.
 By setting $|h|=1/4$, the probability of obtaining $|\psi_h\>$ varies between $11/16$ and $13/16$ for all values of $\alpha.$

We use the amplitude rebalancing algorithm of \cite{newpaper} to project onto one of two orthogonal spaces 
spanned respectively by the vectors
\begin{align}
|0_h\>  &\coloneqq 
(\sqrt{3+h} |0_L\> + \sqrt{1-h}|q_0\>)/2,\\
|1_h\>  &\coloneqq 
(\sqrt{3-h} |1_L\> + \sqrt{1+h}|q_1\>)/2
\end{align}
and 
\begin{align}
|\bar 0_h\>  &\coloneqq 
(\sqrt{1-h} |0_L\> - \sqrt{3+h}|q_0\>)/2,\\
|\bar 1_h\>  &\coloneqq 
(\sqrt{1+h} |1_L\> - \sqrt{3-h}|q_1\>)/2,
\end{align}
where $|q_0\rangle$ and $|q_1\rangle$ are as defined in \eqref{eq:q0q1}.  

Ideally, we like to project our state onto the spaces given by 
$S^0_\pm = {\rm span}\{ |0_{\pm 1/4}\rangle , |1_{\pm 1/4} \rangle  \}$.
Projection on the spaces
$S^0_+$ 
and
$S^0_-$ 
returns a state with more amplitude on the logical zero and the logical one state respectively. We interpret the projection of the state onto $S^0_\pm$ as a successful amplitude rebalancing step, because such a projection brings the distortion ratio closer to 1. There is effectively no phase accumulation on the codespace.

The precise effects of projection onto $S^0_\pm$ are as follows.
Let $x= g\Delta'/2$. Then
\begin{align} 
&\frac
{\langle    1_{1/4} | U_{{\Delta'}}   | 1_L \rangle   }
{\langle   0_{1/4} | U_{{\Delta'}}   | 0_L \rangle   }
=\sqrt{\frac{11}{13}}
- \left(
\sqrt{\frac{15}{13}} 
+
\frac{3\sqrt{11}}{13} 
\right)x +  O(x^2)
\end{align}
and
\begin{align} 
&\frac
{\langle    1_{-1/4} | U_{{\Delta'}}   | 1_L \rangle   }
{\langle   0_{-1/4} | U_{{\Delta'}}   | 0_L \rangle   }
=\sqrt{\frac{13}{11}}
- \left(
\frac{3}{\sqrt{11}} 
+
\frac{\sqrt{195}}{11} 
\right)x +  O(x^2).
\end{align}
Moreover the accumulated phase $\arg( {\langle   1_{\pm 1/4} | U_{{\Delta'}}   | 1_L \rangle   }/
{\langle   0_{\pm 1/4} | U_{{\Delta'}}   | 0_L \rangle   }  )$ is $O(x^2)$. The probability of projecting onto the space $S^0_\pm$ is $11/16+O(x^2)$, which is at least 5/8 for small enough $x$.

If we fail to project the state onto $S^0_{\pm}$, the state is in a space $S_\pm^1$ spanned by the vectors 
$|\bar 0_{\pm h}\>$ and $|\bar 1_{\pm 1/4}\>$.  
 We find that 
\begin{align} 
&\frac
{\langle    \bar 1_{1/4} | U_{{\Delta'}}   | 1_L \rangle   }
{\langle   \bar 0_{1/4} | U_{{\Delta'}}   | 0_L \rangle   }
=\sqrt{\frac{5}{3}}
- \left(
\sqrt{11}
+
\sqrt{\frac{65}{3} }
\right)x +  O(x^2)
\end{align}
and
\begin{align} 
&\frac
{\langle   \bar  1_{-1/4} | U_{{\Delta'}}   | 1_L \rangle   }
{\langle  \bar  0_{-1/4} | U_{{\Delta'}}   | 0_L \rangle   }
=\sqrt{\frac{3}{5}}
- \left(
\sqrt{\frac{39}{5}} 
+
\frac{3\sqrt{11}}{5} 
\right)x +  O(x^2).
\end{align}
Moreover the accumulated phase $\arg( {\langle \bar   1_{\pm 1/4} | U_{{\Delta'}}   | 1_L \rangle   }/
{\langle  \bar  0_{\pm 1/4} | U_{{\Delta'}}   | 0_L \rangle   }  )$ is $O(x^2)$.

If we find that $t$ deletions occur, instead of projecting onto the spaces $S^j_\pm$, we measure the Dicke shift $\sigma$, and we project onto the spaces $\mathcal C^{t,\sigma}$ and $\mathcal Q^{t,\sigma}$ just like we did in Stage 1, but with timestep $t_0/N^{1+\delta}$. This allows a small signal to accumulate, and we update the change in the distortion ratio into our memory. In each time-step, the probability that at least one deletion occurs is vanishingly small, approaching zero. 

After projections onto either $S^0_\pm$ or $S^1_\pm$,
we map the state to the codespace of a shifted gnu code with $s \approx N/2$ via either a unitary that maps a normalized $|\tilde j_{\pm}\rangle $ to the state $|j_L\rangle $ for $j=0,1$,
or a unitary that maps a normalized $|\tilde Q^{j}_{\pm}\rangle $ to the state $|j_L\rangle $ for $j=0,1$.

In summary, the case of no deletion and successful projection onto $S^0_\pm$ happens with probability greater than 5/8, 
and projection onto $S^1_\pm$ or having at least one deletion happens with probability at most 3/8.
In the worst case, projection onto $S^1_\pm$ causes the amplitude imbalance to shift away from 1. 
We apply a classically conditioned unitary that maps the state back to a shifted gnu codespace after each of these projections and repeat the process until we are confident that the amplitude imbalance is close to 1, {\em i.e.}, there is a negligible amount of amplitude imbalance that remains. This two-dimensional subspace mapping can be efficiently implemented using the methods in~\cite{newpaper}.
We can describe this with a mean-reverting model, which we approximate as a biased random walk on the integer line,
where the value of the walk, starting at the value of 0, increases by 1 or decreases by 1 with probabilities 3/8 and 5/8 respectively, until it reaches a target value of $-w$.
We show in the \app that such a walk terminates on average in $4 w$ steps.
Hence, Markov's inequality shows that such a walk is not terminated after 
$(4 w)^{1+\delta}$ steps with vanishing probability at most $(4w)^{-\delta}$.
Hence, the number of timesteps needed for amplitude rebalancing is almost surely at most
$v= (4w)^{1+\delta}$.
Then we can calculate
\begin{align}
v 
=
\Theta(
g^{3/2 + \delta - 3 \delta^2/2 - \delta^3}
 N^{-1/2+\delta/2+\delta^2}).
\end{align}

 \subsection{Towards the Heisenberg limit}

Iterate $k$ of \texttt{ECSense} begins with a level of precision for the parameter $\theta$. We quantify the uncertainty of $\theta$ with the notation ${B}_{k-1}$. The non-zero value of ${B}_{k-1}$ translates into uncertainty in the changes of the distortion ratio under application of amplitude rebalancing steps.
At the conclusion of iterate $k$, the uncertainty of $\theta$ is updated to ${B}_{k}$.

In the first iterate of \texttt{ECSense}, we choose ${B}_{0} = \Theta({1/\sqrt{N}})$ which corresponds to a precision achieved by a sensor that attains the SQL.
The purpose of subsequent iterates to improve the precision of $\theta$ beyond the SQL.
To explain this, we denote the exponent of the uncertainty with ${{b}_k}$, where 
${{B}_k} = \Theta(N^{-{{{b}_k}}})$.
In this notation, the first iterate takes in ${b}_0=1/2$.

In iterate $k$, each amplitude rebalancing step introduces an uncertainty of $g {B}_{k-1}/(2N^{1+\delta})$ in the amplitude imbalance, because of the $O(x)$ deviation of $|\langle 1_{\pm 1/4}|U_{\Delta'}|1_L\rangle / \langle  0_{\pm 1/4}|U_{\Delta'}|0_L\rangle |$ and 
$|\langle \bar 1_{\pm 1/4}|U_{\Delta'}|1_L\rangle / \langle \bar 0_{\pm 1/4}|U_{\Delta'}|0_L\rangle |$ from 1.
Therefore with the conclusion of the amplitude rebalancing algorithm that uses $v$ rebalancing steps, the deviation in the amplitude imbalance is $O(v g {B}_{k-1}/N^{1+\delta})$.
 We want {these} incurred shifts to vanish, and hence we impose the constraint
 \begin{align} 
v g {B}_{k-1}/(2N^{1+\delta})
\le \Theta(N^{-\delta}). \label{above-constraint}
 \end{align}
Asymptotically, the constraint \eqref{above-constraint} holds with equality when
 \begin{align}
     \log_N g = 
\frac
{{b}_{k-1} + 3/2 - \delta/2 - \delta^2}
{5/2+\delta-3\delta^2/2- \delta^3}
     \label{alpha-(2b+1)/3}.
 \end{align}
After implementing Stage 1 and Stage 2 in \texttt{ECSense}, 
we extract FI on $\theta$ by measuring the final state in the plus-minus basis of the gnu code. 
For constant non-zero $\theta$,
the expected FI of each iteration of \texttt{ECSense} is
$\mathbb E [ {F}]
 = \Theta( r^2g^6\tau^6 ) $
where $\tau = 1/r$ is the dimensionless timestep for signal accumulation. 
Since $r=\Theta(g^{1+\delta})$, we get
  $  \mathbb E [ {F}]
  =\Theta ( g^{2-4\delta}) 
  ,$
  and from the Cram\'er-Rao bound, the output uncertainty is  
  \begin{align}
   {B}_{k} = \Theta(g^{-1+2\delta} ).\label{b-crb}
  \end{align}
  Hence we can iteratively use \texttt{ECSense} to obtain an update on the output exponent ${{{b}}}_{k}$
  in terms of input exponent ${{{b}}}_{k-1}$.
  Using \eqref{alpha-(2b+1)/3} with \eqref{b-crb}, we obtain a recurrence relation in ${b}_k$
  that is given as
  \begin{align}
  {b}_k =
      \frac
{({b}_{k-1} + 3/2 - \delta/2 - \delta^2)(1-2 \delta)}
{5/2+\delta-3\delta^2/2- \delta^3}
     \label{final-recurrence}.
  \end{align}
  Solving this recurrence relation \eqref{final-recurrence} with the initial condition ${b}_0=1/2$, we obtain the solution
  \begin{align}
    {b}_k
    =&
    1 - \delta A_1
    -\left(\frac 1 2 - \delta A_1 \right)
    \left(
\frac{2(1-2\delta) }
{
5+ 2\delta - 3\delta^2 - 2\delta^3 
}
    \right)^k
\end{align} 
where 
\begin{align}
   A_1 &=  
   \frac{ 13 - 3\delta -
   6\delta^2  }
   {
   3 + 6\delta - 3\delta^2  - 2\delta^3 }, 
\end{align}
  To leading order in $\delta$, we have $A_1 = \frac{13}{3}
   -
   \frac{29 \delta  }{3} + O(\delta^2)$. 
In the limit of infinite $k$, we have
\begin{align}
\lim_{ k \to \infty }
    {{{b}_{k} } }
=
1
-\frac{13 \delta}{3} 
+\frac{29 \delta^2}{3} +O(\delta ^3)
\end{align}
which approaches the HL in the sense that it gives the HL in the small $\delta$ limit. 
The corresponding scaling for the shifted gnu code's gap on the $k$th iterate is therefore 
\begin{align}
    g = \Theta(
    N^{1 + O(\delta)
    -(2/5 + O(\delta))^{k}(1/2+O(\delta)) } 
    ) .
\end{align}
In the limit of small $\delta$, we roughly have 
$
g=\Theta(N^{4/5}),
g=\Theta(N^{23/25}),
g=\Theta(N^{121/125})$ for $k=1,2,3$ respectively.

When $\delta$ is too large ($\delta \approx 0.156$), we will unfortunately lose the quantum advantage, so it is important to choose $\delta$ to be sufficiently small. However if $\delta$ is too small, the probability of the protocol failing becomes large.
The failure of the protocol because of $\delta$ can happen from two places as quantified above. First the random shift $\sigma$ in the Dicke weights during each round of Stage 1 could drift too far from half of the number of deletions, thereby making the amplitude shift larger than expected. Second, probability that the number of amplitude rebalancing steps in Stage 2 takes no more than $v$ steps decreases as $\delta$ decreases.

Figure \ref{fig:iteration} illustrates how our estimation of $\theta$ has standard deviation that approaches the Heisenberg scaling exponentially fast in the number of iterations within \texttt{ECSense}, in spite of the deletion errors.

Consider a simplified variation of the \texttt{ECSense} protcol where we do not perform Stage 2, and we run only one round in Stage 1 before projection onto the logical plus-minus basis, and repeating this $r$ times so as to take roughly the same time as our \texttt{ECSense} protocol. Then each run of this protocol yields an expected FI of $\Theta(g^6\tau^6)$, 
and therefore this overall protocol’s total FI scales as $\Theta(rg^6\tau^6)$. Since $r$ increases with $N$, \texttt{ECSense}'s FI of $ F = \Theta(r^2g^6\tau^6)$ delivers a growing precision as compared with the trivial protocol as $N$ becomes large.

 \begin{figure}[t!]
\centering  
\includegraphics[width=0.48\textwidth]{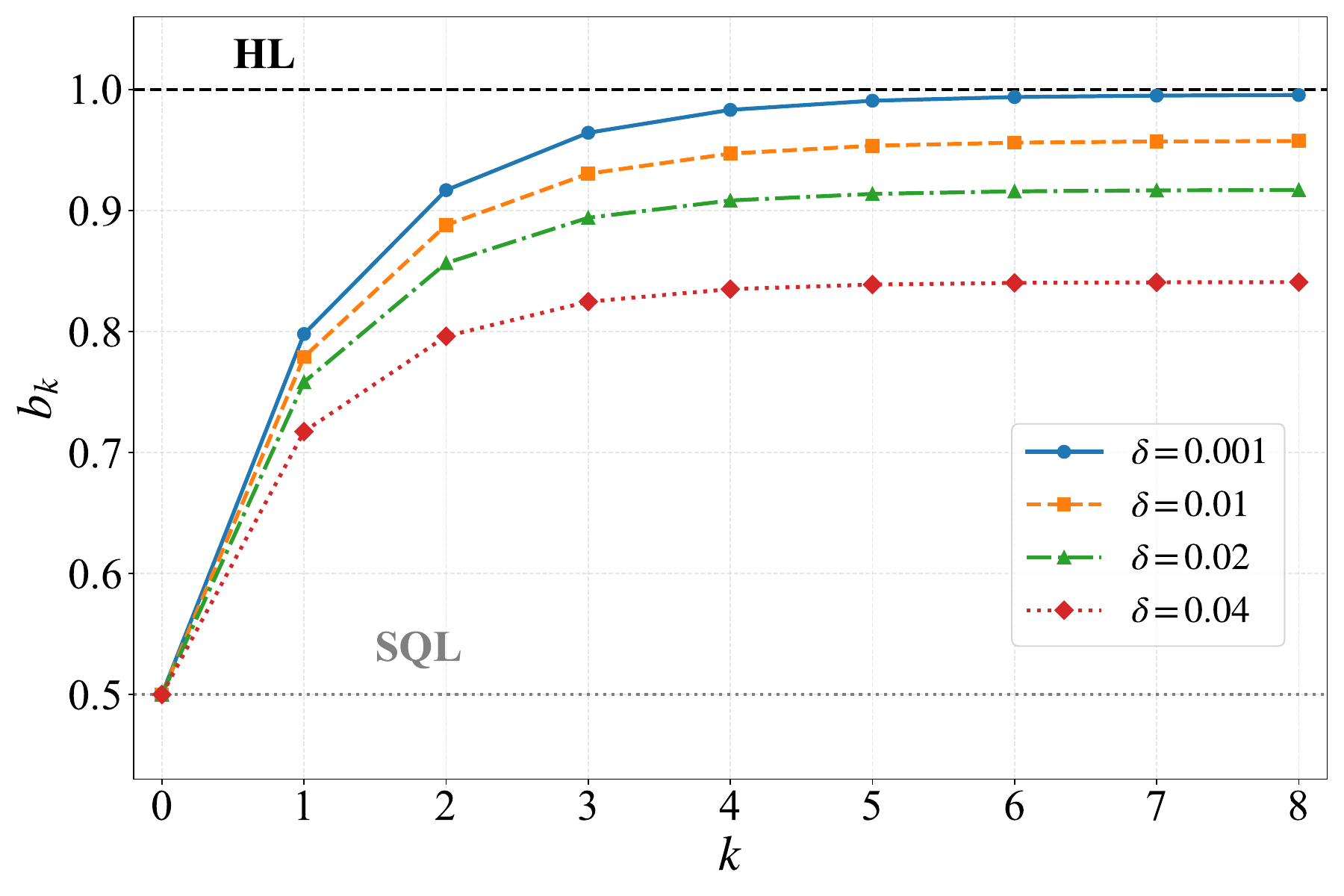} 
 \caption{
We can approach the HL exponentially fast in the number of iterations $k$ within \texttt{ECSense}.
At the $k$th iteration, we
estimate $\theta$ with standard deviation 
  ${{{B}}}_k = \Theta(N^{- {{{b}}}_k} )$.
The SQL corresponds to estimating 
$\theta$ with 
 ${{{b}}}_k =   1/2  $,
 which is achievable using classical techniques.
The HL corresponds to estimating 
$\theta$ with 
${{{b}}}_k = 1$,
which is the best possible precision using quantum techniques. 
}\label{fig:iteration}
 \end{figure}
 
\section{Complexity of \texttt{ECSense}}
\label{complexity}
{
The method of QEC used in our protocol is somewhat flexible give the resources available. Because we are using permutation invariant codes, and the target signal evolution is assumed uniform over the probe, our operations naturally inherit permutation symmetry over the qubits. 
A general purpose strategy uses either the Quantum Schur Transform (QST) or quantum teleportation for the recovery operation after extracting SYT syndromes to extract error syndrome data \cite{newpaper}. The QST on qubits has gate complexity $O(N \log(N/\epsilon)$ in the Clifford+$T$ computational model, where $\epsilon$ is the precision of performing the controlled-rotations part of the algorithm according to Solovay-Kitaev types of arguments. When the number of errors is $t$, the gate complexity of the quantum Schur transform can further reduce to $O(t \log(N/\epsilon)$, which can be advantageous when $t$ is sublinear in $N$, which we will see is the case if we are to recover a quantum advantage in sensing in the setting of errors first accumulating before noiseless signal accumulation.}
 
{
For teleportation-based QEC \cite{newpaper},
the unitary operations require $2N+9$ geometric phase gates (GPG)s and $\lceil{4N/3\rceil}+3$ transversal spin rotations, and the measurements involve $N$ measurements of a bosonic mode. None of the operations need to address individual spins but it is assumed that the ensemble of control qubits can be distinguished from target ensemble.}

\section{Discussions} 
We have presented a QEC-enhanced quantum sensing protocol that recovers the quantum advantage in quantum sensing, even when there is a constant rate of deletion errors. 
Since one may carry out our protocol using near-term quantum control techniques as we describe in Section \ref{sec:Methods},
near-term QEC-enhanced quantum sensing becomes a compelling possibility.

Imperfect QEC steps unavoidably degrade the sensing performance, but quantifying the impact of imperfect QEC and mitigating these effects remains an open problem in the nascent field of fault-tolerant quantum sensing \cite{FT-quantum-sensing}. Our protocol is the first to use QEC to correct deletion errors in a metrological setting before full fault-tolerance. 
Analysis of the detailed tradeoff between resources required and performance of a suitably modified variant of our protocol would constitute developing a fault-tolerance theory for quantum sensing in the presence of deletions, 
and this would help pave the path towards achieving a practical quantum advantage for sensing in near-term devices.

The performance of GPGs, used for unitary operations in the teleportation-based QEC, in the presence of error has been extensively studied in Refs.~\cite{PhysRevA.110.062610,PhysRevResearch.7.L022072,Srivastava_2026}. In the context where the bosonic mode is a microwave or optical cavity mode, the infidelity is dominated by errors due to amplitude damping at rate $\gamma$ and cavity decay at rate $\kappa$, leading to a process infidelity that is upper bounded by $\frac{N\pi}{2\sqrt{2(1+2^{-N})C}}$, where the cooperativity is $C=g_1^2/\gamma\kappa$, with $g_1$ the single qubit-cavity coupling rate. In the presence of inhomogenous couplings of the spins, modeled as quenched random deviations of each spin's coupling from the average coupling $\bar{g_1}$ and with variance ${\rm Var}[g_1^2]$, the expected infidelity is bounded as $\mathbb{E}[1-f]\leq N^2(N+3)\pi^2{\rm Var}[g_1^2]/8\bar{g_1}^4$. 
Recent experiments have demonstrated \cite{CavityExp} 
entangling gates between 
$^{87}$Rb atoms trapped
in optical tweezers and coupled to a fiber Fabry-Perot
cavity using essentially the same mechanism as that needed for the GPGs discussed here. While the cooperativity obtained using those optical cavities is rather low at $C\sim 25$, higher values, $C\sim 1500$, are achievable (see Ref.\cite{PhysRevA.110.062610} and references therein). Much higher values are obtainable with platforms utilizing circular Rydberg transitions coupled to a microwave cavity. Toward this goal high-finesse microwave resonators have been built that predict a single-particle cooperativity of $C =6.75\times10^5$ \cite{zhang2025opticallyaccessiblehighfinessemillimeterwave}. Another promising platform is superconducting fluxonium qubits coupled to a driven microwave resonator~\cite{PhysRevA.101.022321,PhysRevA.110.062610}.


{Other} interesting avenues for future work include optimizing our protocol for a fixed initial number of qubits,
or extending our results to the simultaneous estimation of all three components of classical fields, using recent developments in multiparameter quantum metrology \cite{sidhu2021tight,hayashi2023tight,hayashi2024finding}. 
We also leave the investigation of other noise models for QEC-enhanced metrology for future work.

\section{Acknowledgements}\label{eq:acknow}

Y.O. acknowledges support from EPSRC Grant No. EP/W028115/1 and also the EPSRC funded QCI3 Hub under Grant No. EP/Z53318X/1.   Y.O. also acknowledges the Quantum Engineering Programme grant NRF2021-QEP2-01-P06, and the NUS startup grants (R-263-000-E32-133 and R-263-000-E32-731).
G.K.B. acknowledges support from the Australian Research Council Centre of Excellence for Engineered Quantum Systems (Grant No. CE 170100009).

\section{Methods}\label{sec:Methods}

\subsection{Estimating the field parameter $\theta$}

When we estimate $\theta$, we measure in the logical plus-minus basis of the appropriate shifted gnu code. 
That is, we will project the state onto either
$|a_+\rangle = \frac{1}{\sqrt 2} (|a_0\rangle  + |a_1\rangle ) $
or 
 $|a_-\rangle = \frac{1}{\sqrt 2} (|a_0\rangle  - |a_1\rangle ) $,
 where $|a_0\rangle $ and $|a_1\rangle $ denote the logical zero and logical one state of the appropriate shifted gnu code.
The amount of Fisher Information (FI) that we extract will depend on the total phase that the signal has accumulated within the shifted gnu code's codespace.

Interpreting the total accumulated phase $\Phi$ as a continuous function of $\theta$,
the state that we wish to measure is
$|\psi(\Phi)\rangle  = 
\cos \phi |a_0\rangle  + e^{i \Phi}\sin\phi |a_1\rangle $.
Recall that the initial probe state of ECSense is a logical plus shifted gnu state, which we can write as 
$\frac{ |a_0\rangle  + |a_1\rangle  }{\sqrt 2}$, corresponding to a total phase of 0. 
We show in the \app that the FI of $\theta$ by performing projective measurements on $|\psi(\Phi)\rangle $ with respect to the projectors $|a_+\rangle \langle a_+|$ and $|a_-\rangle \langle a_-|$ is 
\begin{align}
  F  =   \frac{\sin^2 2\phi \sin^2 \Phi}
    {    (1-\sin^2 2\phi \cos^2 \Phi)}
    \left(\frac{\partial\Phi}{\partial\theta}\right)^2.
\end{align} 
We can use this result by noting that the quantum state just before measurement has the form
$a |0_L\rangle  + b e^{i\Phi} |1_L\rangle $, where $a = \cos \phi$ and $b = \sin \phi$, where the amplitudes $a$ and $b$ are close to $1/\sqrt 2$.
Using information about only the phase $\Phi$ and not the amplitudes $a$ and $b$, we find that the Fisher information of ECSense is proportional to the square of the partial derivative of the total phase with respect to $\theta$. 
Hence, by evaluating the expected total phase, we can determine the expected Fisher information. Full details of the calculations are given in the \app.

\subsection{FI of evolved shifted gnu states}
\label{ssec:pisensors}
Here, Lemma \ref{lem:gnus-qfi} calculates how much QFI we can get if we use $|+_{g,n,u,s}\rangle $ as a probe state to estimate $\theta$ in the noiseless setting. 
 
\begin{lemma}
\label{lem:gnus-qfi}
The QFI of $|+_{g,n,u,s}\rangle $ with respect to the signal $U_\theta$ is $g^2 n$.
\end{lemma}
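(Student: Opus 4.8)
The plan is to invoke the identity, established immediately after Theorem \ref{thm:SLD-pure-state}, that the QFI of any pure symmetric state with respect to the signal $U_\theta$ equals $4 {v_{\psi}}$, where ${v_{\psi}} = {m_2} - m_1^2$ is the variance of the Dicke weight under the amplitude distribution $|a_w|^2$. The entire computation then reduces to identifying this distribution for $|+_{g,n,u,s}\>$ and evaluating a single variance.

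First I would write $|+_{g,n,u,s}\>$ explicitly in the Dicke basis. Since $|0_{g,n,u,s}\>$ collects the even-$k$ terms and $|1_{g,n,u,s}\>$ the odd-$k$ terms, and these supports are disjoint, forming $(|0_{g,n,u,s}\> + |1_{g,n,u,s}\>)/\sqrt 2$ merges them into a single sum over all $k=0,\dots,n$, giving
\begin{align}
|+_{g,n,u,s}\> = 2^{-n/2} \sum_{k=0}^{n} \sqrt{\binom n k}\, |D^{gnu+s}_{gk+s}\>.
\end{align}
A quick normalization check using $\sum_{k=0}^n \binom n k = 2^n$ confirms this is a unit vector. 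Consequently the probability assigned to the Dicke weight $w = gk+s$ is $|a_w|^2 = 2^{-n} \binom n k$, which is exactly the binomial distribution $\mathrm{Bin}(n,1/2)$ in the variable $k$.

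Next I would compute ${v_{\psi}}$. Because the weight $w = gk + s$ is an affine function of $k$, the shift $s$ contributes no variance and the scaling $g$ pulls out as $g^2$, so ${v_{\psi}} = \Var(w) = g^2 \Var(k)$. For $\mathrm{Bin}(n,1/2)$ the variance is $n p (1-p) = n/4$, hence ${v_{\psi}} = g^2 n / 4$. Substituting into the QFI formula yields $4 {v_{\psi}} = g^2 n$, as claimed.

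The computation is essentially routine, and the only point requiring care is bookkeeping rather than genuine difficulty: one must verify that the normalization works out so that both the scaling parameter $u$ and the shift $s$ drop out of the final answer. The parameter $u$ enters only through the total qubit number $N = gnu+s$, which fixes the ambient symmetric space but leaves the amplitude distribution over $k$ and the weights $gk+s$ unchanged; thus the QFI is manifestly independent of $u$. Confirming that $s$ enters only as an additive constant in the weight—and so cancels in the variance—is the last detail to check, after which the result follows.
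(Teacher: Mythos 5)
Your proposal is correct and follows essentially the same route as the paper: both reduce the QFI to four times the variance of the Dicke-weight distribution and evaluate that variance for the binomial amplitudes $2^{-n}\binom{n}{k}$. The only difference is cosmetic — you use the affine-transformation property of the variance to factor out $g^2$ and drop $s$ immediately, whereas the paper expands $(gk+s)^2$ and applies the binomial moment identities explicitly; both yield $v_\psi = g^2 n/4$ and hence the QFI $g^2 n$.
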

\begin{proof}
The QFI is four times of the variance of the state $|+_{g,n,u,s}\rangle $.
Note that 
\begin{align}
\sum_{k =0}^n \binom  n k k   &=2^{n-1} n ,\quad
\sum_{k =0}^n \binom  n k k^2 = 2^{n-2} n(n+1).
\end{align}
 
Hence the variance of $|+_{g,n,u,s}\rangle $ is
\begin{align}
&
\frac{1}{2^n}\sum_{k =0}^n \binom  n k (g^2k^2 +2 g k s + s^2) 
-
\bigl(\frac{1}{2^n}\sum_{k =0}^n \binom  n k (gk+s) \bigr)^2.\notag
\end{align}
Using binomial identities, this variance is $
\frac{g^2 n(n+1)}{4}
+ \frac{2gs n }{2}
+ s^2
-
\left( \frac{g n}{2} +  s \right)^2
=\frac{g^2 n}{4}.$ The result follows.
\end{proof}
 
Theorem \ref{thm:signal-on-code} shows that the FI can be proportional to $g^2$ by measuring an evolved gnu probe state in the code basis, spanned by the logical plus and minus operators.

\begin{theorem}
\label{thm:signal-on-code}
Consider a shifted gnu state $\sum_{w}a_w |D^N_w\rangle $
with parameters $g,n,u,s$, where $a_{gk+s} = \sqrt{\binom n k}2^{-n/2}$ for $k=0,1,\dots,n$ and $a_w = 0$ for all $w$ that cannot be written as $gk+s$.
After $U_\theta$ applies on the probe state, the FI of estimating $\theta$ by measuring in the gnu code's logical plus-minus basis is 
\begin{align}
 g^2 n^2
\left(
\sin^2(g \theta/2) 
\cos^{2n-2}(g \theta/2) 
+
\cos^2(g \theta\yo{/2}) 
\sin^{2n-2}(g \theta/2) 
\right).\notag
\end{align}
\end{theorem}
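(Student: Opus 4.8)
The plan is to turn the Fisher-information (FI) computation into the evaluation of two binomial sums. Measuring the evolved state $U_\theta|+_{g,n,u,s}\rangle$ in the code's logical plus-minus basis records the outcome probabilities $p_{\pm}(\theta)=|\langle\pm_{g,n,u,s}|U_\theta|+_{g,n,u,s}\rangle|^{2}$, where $|-_{g,n,u,s}\rangle=(|0_{g,n,u,s}\rangle-|1_{g,n,u,s}\rangle)/\sqrt2$. First I would write both logical states in the Dicke basis,
\[
|+_{g,n,u,s}\rangle=2^{-n/2}\sum_{k=0}^{n}\sqrt{\tbinom{n}{k}}\,|D^{N}_{gk+s}\rangle,\qquad |-_{g,n,u,s}\rangle=2^{-n/2}\sum_{k=0}^{n}(-1)^{k}\sqrt{\tbinom{n}{k}}\,|D^{N}_{gk+s}\rangle,
\]
and use that $U_\theta$ acts diagonally as $U_\theta|D^{N}_{gk+s}\rangle=e^{-i\theta(gk+s-N/2)}|D^{N}_{gk+s}\rangle$, so the signal merely multiplies the $k$-th amplitude by a phase.

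The key step is evaluating the two transition amplitudes. Substituting the expansions and using orthonormality of the Dicke states, each inner product collapses to a single sum over $k$; pulling out the $k$-independent phase $e^{-i\theta(s-N/2)}$ leaves
\[
\langle\pm_{g,n,u,s}|U_\theta|+_{g,n,u,s}\rangle=2^{-n}e^{-i\theta(s-N/2)}\sum_{k=0}^{n}\tbinom{n}{k}(\pm1)^{k}e^{-ig\theta k}=2^{-n}e^{-i\theta(s-N/2)}\bigl(1\pm e^{-ig\theta}\bigr)^{n}
\]
by the binomial theorem. Taking the squared modulus and using $|1\pm e^{-ig\theta}|^{2}=2\pm2\cos(g\theta)$ shows that $p_+$ and $p_-$ are, up to the common normalization, the $n$-th powers of $1+\cos(g\theta)$ and $1-\cos(g\theta)$, i.e.\ even powers of a cosine and a sine of the accumulated phase. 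This is where all the structure of the gnu codewords enters, through the single identity $\sum_k \tbinom{n}{k} x^{k}=(1+x)^{n}$ applied with $x=\pm e^{-ig\theta}$.

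Finally I would assemble the classical FI of the measurement, $F(\theta)=\sum_{\pm}(\partial_\theta p_{\pm})^{2}/p_{\pm}$. Differentiating the cosine/sine powers brings down the prefactor (the exponent together with the chain-rule factor proportional to $g$), and the ratios $(\partial_\theta p_{\pm})^{2}/p_{\pm}$ reduce to products of powers of $\cos(g\theta)$ and $\sin(g\theta)$; collecting them and repeatedly using $\cos^{2}+\sin^{2}=1$ then yields the closed form in the statement. The main obstacle, and the point that needs genuine care, is the bookkeeping of the measurement outcomes: for $n\ge2$ the state $U_\theta|+_{g,n,u,s}\rangle$ has support outside the two-dimensional logical span, so the projective measurement is not binary, and I must be explicit that the reported FI is the information carried by the two logical clicks $p_+$ and $p_-$ (the leakage outcome being handled by the surrounding QEC protocol rather than counted as an informative result). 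What remains is the routine but slightly delicate trigonometric simplification of $F(\theta)$.
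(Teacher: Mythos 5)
Your proposal follows essentially the same route as the paper's proof in the appendix: expand the logical plus and minus states in the Dicke basis, evaluate $\langle\pm_L|U_\theta|+_L\rangle$ via the binomial theorem to get $p_+=\cos^{2n}(g\theta)$ and $p_-=\sin^{2n}(g\theta)$, and then assemble $\sum_\pm(\partial_\theta p_\pm)^2/p_\pm$. Your remark about the leakage outcome (the evolved state having support outside the logical span, so $p_++p_-\neq 1$ for $n\ge 2$) is a correct and worthwhile caveat, consistent with the paper's appendix framing this quantity as a lower bound on the FI obtainable from the plus-minus measurement.
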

 We prove this theorem in the \app.

  \begin{figure}[H]
 \centering  
\includegraphics[width=0.45\textwidth]{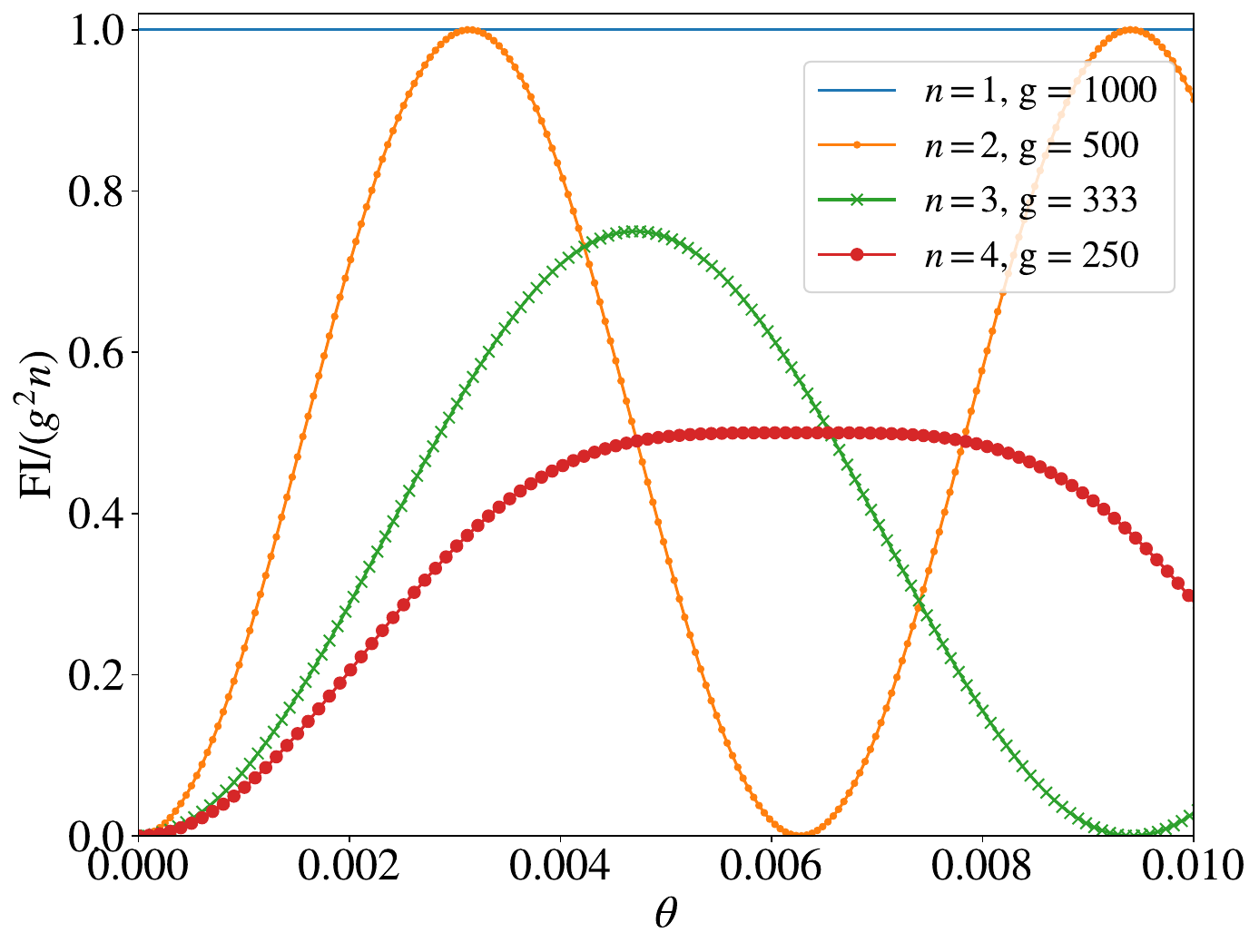} 
 \caption{{\bf FI and QFI of shifted gnu codes on 1000 qubits in the absence of noise.} Here, ($u=1$). The QFI for shifted gnu codes is $g^2n$. The FI is obtained by measuring the evolved shifted gnu states in the code's logical plus-minus basis. The plot shows how the ratio of the FI to the QFI depends on the true value of $\theta$.  For reference, the 1000-qubit GHZ state corresponds to $g=1000,n=1$.
 }\label{fig:FIcodebasis}
 \end{figure}
Using Theorem \ref{thm:signal-on-code}, we can compute the performance of the FI obtained by measuring evolved shifted gnu states on the logical plus-minus basis, and compare this FI to the QFI in Fig.~\ref{fig:FIcodebasis}. Fig.~\ref{fig:FIcodebasis} shows that the FI can be within a constant factor of the QFI, and also the tradeoff between increasing the distance of the code $d=\min\{g,n\}$ and the ratio of the FI to the QFI. Note that Theorem \ref{thm:signal-on-code} also shows that the FI is exponentially suppressed with increasing $n$, and for this reason, we focus our attention on shifted gnu codes with constant $n$.

When shifted gnu codes accumulate the signal given by the unitary $U_\theta$, they transform into what we call 
{\em $\theta$-rotated and $s$-shifted gnu codes}.
Such codes have logical codewords
\begin{align}
|0_{g,n,u,s,\theta}\rangle  & = U_\theta |0_{g,n,u,s}\rangle  ,
\\
|1_{g,n,u,s,\theta}\rangle   &= U_\theta |1_{g,n,u,s}\rangle  .
\end{align}
Using the same proof technique as in \cite{ouyang2014permutation}, we can see that these $\theta$-rotated codes also have a distance of ${\rm min}(g,n)$.
This is because for any multi-qubit Pauli operator $P$ that acts on at most $\min(g,n)-1$ qubits, we can easily check that the Knill-Laflamme quantum error criterion \cite{KnL97} for shifted gnu codes is equivalent to that for $\theta$-rotated codes:
\begin{align}
\langle 0_{g,n,u,s,\theta}|P|0_{g,n,u,s,\theta}\rangle  &= \langle 0_{g,n,u,s}|P|0_{g,n,u,s}\rangle  \\
\langle 1_{g,n,u,s,\theta}|P|1_{g,n,u,s,\theta}\rangle  &= \langle 1_{g,n,u,s}|P|1_{g,n,u,s}\rangle  \\
\langle 1_{g,n,u,s,\theta}|P|0_{g,n,u,s,\theta}\rangle  &= 0 .
\end{align}
The QEC properties of these $\theta$-rotated gnu codes is invariant of $\theta$.
It is the distinguishability of these codes with respect to $\theta$ that makes them useful as probe states for classical field sensing.

\appendix

\section{Symmetric probe states for field sensing}
\label{sec:field-sensing}
When $\rho_\theta$ is pure symmetric state used for noiseless field-sensing and ${\frac{d\rho}{d\theta} =  -i [{\hat J^z},\rho]}$, we derive a corresponding rank two SLD and find its spectral decomposition in Theorem \ref{thm:SLD-pure-state}.
\begin{theorem}[SLD for pure symmetric states]
\label{thm:SLD-pure-state}
Let $\rho_\theta = |\psi\>\<\psi|$ be a pure symmetric state where 
${|\psi\> = \sum_w a_w |D^N_w\>}$.
For $j=1,2$, let ${m_j} = \sum_{w=1}^N  |a_w|^2 w^j,$ and let 
${v_{\psi}} = {m_2} - m_1^2.$
Then a solution to $    \frac{d\rho_\theta }{d \theta}
    =\frac{1}{2}(L \rho_\theta +  \rho_\theta L)$ has the spectral decomposition
\begin{align}
    L =& 
      \sqrt{{v_{\psi}}} (|\psi\>+i|b\>) (\<\psi|-i\<b|) \notag\\
    &- \sqrt{{v_{\psi}}} (|\psi\>-i|b\>) (\<\psi|+i\<b|)
\end{align}
where 
$|b\> = \left(\sum_w a_w w |D^N_w\> - {m_1} |\psi\>\right) / \sqrt{{v_{\psi}}},$
and $\{|\psi\>, | b\> \}$ is an orthonormal basis.
\end{theorem}

\begin{figure}[!h]
 \centering  
\includegraphics[width=\columnwidth]{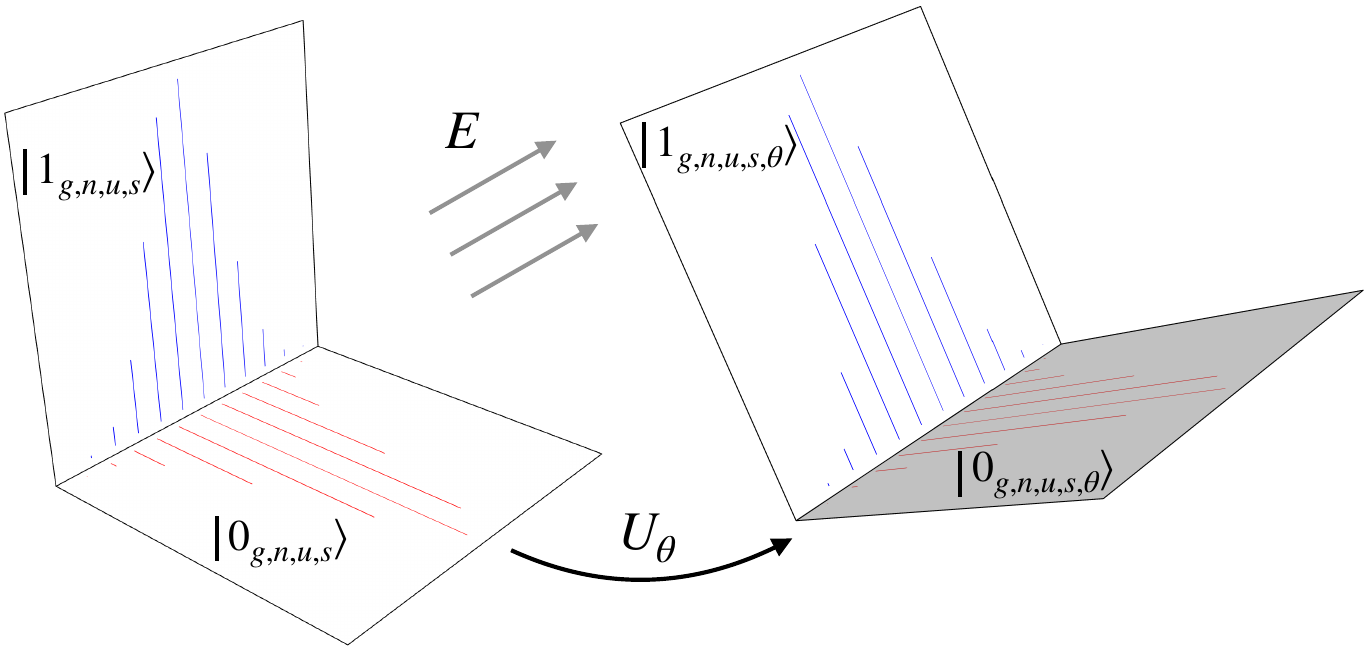} 
 \caption{Quantum sensing with the shifted gnu code to infer the strength of a field $E$ which acts uniformly on the spins. 
 We depict the amplitudes of the logical basis states in the Dicke basis.
A unitary $U_{\theta}=e^{-i\theta \hat{J}^z}$ rotates the logical basis to a new logical frame, which satisfies the same error correction criteria. Here, we estimate $\theta$.  
 }\label{fig:Sensing}
 \end{figure} 

Based on Theorem \ref{thm:SLD-pure-state}, since the QFI is $\<\psi|L^2|\psi\>$, we can ascertain that for symmetric states, the QFI is indeed ${4v_{\psi}}$ as expected.
Hence, for any pure symmetric state $|\psi\>$, and when 
$\rho_\theta = U_\theta|\psi\>\<\psi| U_\theta^\dagger$,
we can write
\begin{align}
\Var(\hat \theta) \ge Q(\rho_\theta, \frac{d\rho_\theta}{d\theta} )^{-1}
=1/(4{v_{\psi}}).
\end{align}
In order to achieve the precision given by the QFI, the optimal observable is given by the SLD $L$ which we show to have two eigenvectors
\[
\frac{|b_1\> \pm i|\bar b_2\>}{\sqrt 2}=
\frac{|\psi\> \pm i|\bar b_2\>}{\sqrt 2},
\]
with eigenvalues $\pm 2 \sqrt {v_{\psi}}$.
This can be realized by (1) performing any unitary extension of the mapping: $\frac{|b_1\> \pm i|\bar b_2\>}{\sqrt 2}\rightarrow|D^N_0\>$, (2) 
measuring $\hat{J}^z$ with outcome $m$, and (3) weighting the result with $2 {v_{\psi}}$ for the outcome $m=-N/2$ and $-2 {v_{\psi}}$ otherwise. 
The state synthesis mapping can be done using $N$ GPGs as described in the Methods section of our main manuscript.

\begin{proof}[Proof of Theorem \ref{thm:SLD-pure-state}]
Now 
\begin{align}
{\hat J^z} | D^N_{w}\> =(N/2-w) |D^N_{w}\>.
\end{align}
Hence, for $|\psi\> = \sum_w a_w |D^N_w\>$,
we have 
\begin{align}
{\hat J^z}|\psi\> &= \sum_w a_w(N/2 -w) |D^N_w\> 
\notag\\
&= (N/2)|\psi\> -  \sum_w w a_w |D^N_w\> \label{Jzpsi}.
\end{align}
For us, $\rho_\theta  = |\psi\>\<\psi|$ is a pure state. Hence a feasible solution of $L$ to 
$    \frac{d\rho_\theta }{d \theta}
    =\frac{1}{2}(L \rho_\theta +  \rho_\theta L)$. 
 is $L  =    2i (|\psi\>\<\psi| {\hat J^z}  -  {\hat J^z}|\psi\>\<\psi|)$ \cite{toth2014quantum}. Using \eqref{Jzpsi}, we can express $L$ as
\begin{align}
L  
    &= 
    2i  |\psi\>(\<\psi| (N/2) - \sum_{w'} w' a_{w'}^* \<D^N_w| ) \notag\\
    &- 2i((N/2)|\psi\> - \sum_w w a_w |D^N_w\>)\<\psi|)\notag\\
    &= 
    -2i  \sum_{w'} w' a_{w'}^* |\psi\>\<D^N_w|   +2i \sum_w w a_w |D^N_w\>\<\psi|\notag\\
    &= 
    i \sum_{w,w'} a_w a_{w'}^* |D_w\>\<D_{w'}| (2w-2w').
    \label{eq:SLD-from-commutator}
\end{align}
Note the simple identity
\begin{align}
    (u+i)(v-i)-(u-i)(v+i)
    =2i (v-u).
\end{align}
This implies that by setting $a(u) = u+i$, we have that
\begin{align}
a(u)a(v)^* - a(u)^*a(v)  =2i( v-u ).\label{eq:linear-to-auav}
\end{align}
for real $u$ and $v$.
Using \eqref{eq:linear-to-auav} in \eqref{eq:SLD-from-commutator}, we get
\begin{align}
L &= 
  \sum_{w,w'}a_{w} a_{w'}^* 
      a(w)a(w')^*
    |D^N_w\>\<D^N_{w'}| \notag\\
    &\quad
     -
       \sum_{w,w'}a_{w} a_{w'}^*
    a(w)^*a(w')
    |D^N_w\>\<D^N_{w'}|   .
\end{align}
Let 
\begin{align}
|v_1\> &= \sum_w a_w (w+i)|D^N_w\>,\\
|v_2\> &= \sum_w a_w (w-i)|D^N_w\>.
\end{align}
Substituting the definitions of $|v_1\>$ and $|v_2\>$, we find 
\begin{align}
    L = 
      |v_1\>\<v_1|
    -  |v_2\>\<v_2|.
\end{align}
While $|v_1\>$ and $|v_2\>$ are not necessarily orthorgonal, the vector 
\begin{align}
|v'_2\> 
=  |v_2\> - 
\<v_1 |v_2\>
\frac{|v_1\>}{\<v_1|v_1\>}
\end{align}
is orthogonal to $|v_1\>$.
Now $|v_1\>$ and $|v_2\>$ lie in the span of 
$|b_1\>$ and $|b_2\>$ where
\begin{align}
|b_1\> &= \sum_w a_w |D^N_w\>\\
|b_2\> &=  \sum_w w a_w |D^N_w\>.
\end{align}
By definition $\<b_1|b_1\> = 1$. Using Gram-Schmidt we see that
\begin{align}
|b_2'\> 
&= |b_2\> -  \frac{|b_1\>\<b_1| } {\< b_1| b_1\>  }  |b_2\>\notag\\
&= |b_2\> - {m_1} |b_1\>.
\end{align}
Hence $|b_2\> = |b_2'\> + {m_1}|b_1\>$.
Also note that 
\begin{align}
\<b_2'|b_2'\>
&= \<b_2|b_2\> - {m_1} \<b_2|b_1\> - {m_1} \<b_1| b_2\> + {m_1}^2 \<b_1|b_1\>\notag\\
& = {m_2} - {m_1}^2  = {v_{\psi}}.
\end{align} 
Now let $|\bar b_2\> = |b_2'\>/\sqrt{\<b_2'|b_2'\>}$ so that $\{|b_1\> , |\bar b_2\> \}$ is an orthonormal basis.
In this orthonormal basis, we can write
\begin{align}
|v_1\> 
&= i |b_1\> + |b_2\> \notag\\
&= i |b_1\> + |b_2'\> + {m_1}|b_1\> \notag\\
&= ({m_1} + i ) |b_1\> + \sqrt{{v_{\psi}}}|\bar b_2\>
\end{align}
Similarly, we get
\begin{align}
|v_2\> 
&= i |b_1\> + |b_2\> \notag\\
&= i |b_1\> + |b_2'\> + {m_1}|b_1\> \notag\\
&= ({m_1} - i ) |b_1\> + \sqrt{{v_{\psi}}}|\bar b_2\>
\end{align}
Hence on the orthonormal basis $\{|b_1\> , |\bar b_2\> \}$, the SLD is 
an effective size two matrix
\begin{align}
M =& \begin{pmatrix}
|{m_1}+i|^2 & ({m_1}+i)\sqrt{{v_{\psi}}} \\
({m_1}-i)\sqrt{{v_{\psi}}} & {v_{\psi}} \\
\end{pmatrix}\notag\\
&-
 \begin{pmatrix}
|{m_1}-i|^2 & ({m_1}-i)\sqrt{{v_{\psi}}} \\
({m_1}+i)\sqrt{{v_{\psi}}} & {v_{\psi}} \\
\end{pmatrix}\notag\\
&=
 2 \sqrt{{v_{\psi}}} \begin{pmatrix}
0 & -i \\
i & 0 \\
\end{pmatrix}
.\label{M-matrix}
\end{align}
Thus the eigenvalues of the SLD operator are given by $\pm 2 \sqrt{v_{\psi}}$ with corresponding eigenvectors 
\begin{align}
\frac{|b_1\> \pm i|\bar b_2\>}{\sqrt 2}=
\frac{|\psi\> \pm i|\bar b_2\>}{\sqrt 2}.
\end{align}
\end{proof}


\subsection{Impact of errors on uncorrected symmetric states.}
In the following we derive the QFI for shifted gnu states affected by either deletion errors or amplitude damping errors. The effect of dephasing errors on gnu codes is described in Ref. \cite{ouyang2019robust}.
\label{ssec:error-impact}
\subsubsection{Deletion errors}
The action of $t$ deletion errors on any symmetric state, is equivalent to taking the partial trace $\tr_t(\cdot)$ on the first $t$ qubits.
To discuss the action of deletion errors, it is often convenient to use the unnormalized Dicke states
\begin{align}
|H^N_w\> = \sum_{ \substack{{\bf x} \in \{0,1\}^N \\ \wt({\bf x}) = w} } |x_1\>\otimes \dots \otimes |x_N\>,\label{eq:unnormalized-dicke-defi}
\end{align}
where $\wt(\bx)$ denotes the Hamming weight of a binary string. 
Normalizing the basis states in \eqref{eq:unnormalized-dicke-defi} gives Dicke states $|D^N_w\> = |H^N_w\> / \sqrt{\binom N w}$. 
Then, Lemma \ref{lem:delete} gives the partial trace of any symmetric state.
\begin{lemma}[Impact of deletions]\label{lem:delete}
Let $N$ be a positive integer, and let $v,w$ be integers such that $0\le v,w \le N$. 
 Let $t$ be a positive integer where $t\le N$.
Let $|\psi\> = \sum_{w=0}^N a_w |D^N_w\>,$
and for all $a = 0, \dots, t$, define
\begin{align}
|\psi\>_a = \sum_{w=a}^{N-t+a} a_w  
\frac{
\sqrt{\binom {N-t}{w-a}}
}
{
\sqrt{\binom {N}{w}}
}
|D^{N-t}_{w-a}\>.
\label{eq:psia-delete}
\end{align}
 Then 
 \begin{align}
 \tr_t( | \psi\> \<\psi| )
 =
 \sum_{a = 0}^t 
 \binom t a 
 |\psi\>_a \<\psi|_a. \label{partial_trace:decomposition}
 \end{align}
\end{lemma}
Lemma \ref{lem:delete} generalizes the result in \cite[Lemma 5]{ouyang2021permutation} where only shifted gnu codes were considered. 

If $|\psi\>$ is a shifted gnu state, we have 
\begin{align}
|\psi\>_a = 2^{-n/2}\sum_{j=0}^n
\sqrt{\binom{n}{j}  \frac{ \binom {N-t}{gj+s-a} }{ \binom N {gj+s}  } } 
|D^{N-t}_{gj+s-a} \>,
\end{align}
and hence its norm squared $n_a = \<\psi_a| \psi_a\>$ is
\begin{align}
n_ a  = 
2^{-n}\sum_{j=0}^n
\binom{n}{j}  \frac{ \binom {N-t}{gj+s-a} }{ \binom N {gj+s}  }  ,
\end{align}
and the variance of $|\psi_a \> / \sqrt {n_a}$ is
\begin{align}
v_a = &
2^{-n}\sum_{j=0}^n
\binom{n}{j}  \frac{ \binom {N-t}{gj+s-a} }{n_a \binom N {gj+s}  } (gj+s-a)^2 \notag\\
&-
\left(
2^{-n}\sum_{j=0}^n
\binom{n}{j}  \frac{ \binom {N-t}{gj+s-a} }{n_a \binom N {gj+s}  }  (gj+s-a)
\right)^2.
\end{align}
Thus, we can determine the QFI for shifted gnu states.
\begin{lemma}
\label{lem:gnu-delete-qfi}
The QFI of $|+_{g,n,u,s}\>$ after $t$ deletions, when $g,n \ge t+1$, is 
$4\sum_{a =0}^t  n_a \binom t a  v_a.$
\end{lemma}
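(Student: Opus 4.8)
The plan is to bring the post-deletion state into spectral form and then argue that the signal generator $\hat{J}^z$ acts within mutually orthogonal, $\theta$-independent blocks, so that the QFI is simply the convex combination of the single-block QFIs. First I would apply Lemma \ref{lem:delete} to write $\rho := \tr_t(|+_{g,n,u,s}\>\<+_{g,n,u,s}|) = \sum_{a=0}^t \binom{t}{a}\,|\psi\>_a\<\psi|_a$, with $|\psi\>_a$ the unnormalized $(N-t)$-qubit states of \eqref{eq:psia-delete}. For a gnu state, $|\psi\>_a$ is supported only on the Dicke states $|D^{N-t}_{gk+s-a}\>$, $k=0,\dots,n$. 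The decisive observation is that $|\psi\>_a$ and $|\psi\>_{a'}$ occupy the weight sets $\{gk+s-a\}_k$ and $\{gk'+s-a'\}_{k'}$; a shared weight would require $g(k-k')=a-a'$, which is impossible for $a\ne a'$ since $|a-a'|\le t<g$ under the hypothesis $g\ge t+1$. Hence the $|\psi\>_a$ are mutually orthogonal, and writing $|\hat\psi_a\>=|\psi\>_a/\sqrt{n_a}$, $p_a=\binom{t}{a}n_a$ gives the spectral decomposition $\rho=\sum_a p_a|\hat\psi_a\>\<\hat\psi_a|$ into orthonormal pure states with probabilities summing to $\tr\rho=1$.

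Next I would exploit that $U_\theta=e^{-i\theta\hat{J}^z}$ conserves total weight: since $\hat{J}^z$ is diagonal in the Dicke basis, each $U_\theta|\hat\psi_a\>$ stays inside the fixed weight sector $\mathcal{W}_a=\mathrm{span}\{|D^{N-t}_{gk+s-a}\>:k\}$. By the orthogonality just established, the $\mathcal{W}_a$ are mutually orthogonal and independent of $\theta$, so $\rho_\theta=U_\theta\rho U_\theta^\dagger$ is block-diagonal across the $\mathcal{W}_a$ with $\theta$-independent weights $p_a$. For a state that is block-diagonal in a fixed orthogonal decomposition, the SLD is block-diagonal and the QFI is additive over the blocks, giving $Q[\rho_\theta]=\sum_a p_a\,Q\!\left[U_\theta|\hat\psi_a\>\<\hat\psi_a|U_\theta^\dagger\right]$.

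Each block is a pure symmetric state, so its QFI equals four times the variance of $\hat{J}^z$, as established after Theorem \ref{thm:SLD-pure-state}: $Q[\,\cdot\,]=4\Var_{\hat\psi_a}(\hat{J}^z)$. A constant shift of $\hat{J}^z$ leaves a variance unchanged, so $\Var_{\hat\psi_a}(\hat{J}^z)$ is precisely the weight-variance $v_a$ recorded above, and $p_a=\binom{t}{a}n_a$. Assembling these yields $Q=4\sum_{a=0}^t\binom{t}{a}n_a v_a$, as claimed.

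The main obstacle is justifying the block additivity of the QFI, equivalently showing that $\hat{J}^z$ has no matrix elements between distinct eigenvectors of $\rho$; this is exactly what collapses the generally intricate mixed-state QFI into a weighted sum of the single-state values $4v_a$, and it rests entirely on the disjoint-support orthogonality guaranteed by $g\ge t+1$. If one prefers to avoid invoking block additivity, the same conclusion follows by inserting $\<\hat\psi_a|\hat{J}^z|\hat\psi_{a'}\>=0$ into the explicit mixed-state QFI formula $Q=4\sum_a p_a\Var_{\hat\psi_a}(\hat{J}^z)-8\sum_{a\ne a'}\frac{p_ap_{a'}}{p_a+p_{a'}}|\<\hat\psi_a|\hat{J}^z|\hat\psi_{a'}\>|^2$ and observing that every coherence correction vanishes.
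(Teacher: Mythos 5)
Your proposal is correct and follows essentially the same route as the paper: decompose the post-deletion state via Lemma \ref{lem:delete}, observe that the hypothesis $g\ge t+1$ forces the $|\psi\>_a$ onto disjoint Dicke-weight sectors and hence makes them orthogonal, conclude that the QFI reduces to the convex combination $\sum_a \binom{t}{a} n_a$ of the pure-block QFIs, and evaluate each block as $4v_a$ via Theorem \ref{thm:SLD-pure-state}. Your justification of the additivity step (block-diagonality of $\rho_\theta$ in $\theta$-independent sectors, or equivalently the vanishing of $\<\hat\psi_a|\hat J^z|\hat\psi_{a'}\>$ in the explicit mixed-state QFI formula) is a more explicit version of the paper's remark that the SLDs of the blocks are pairwise orthogonal, not a different argument.
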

\begin{proof}
After $t$ deletions, as long as $g,n \ge t+1$, the vectors $|\psi\>_a$ are pairwise orthogonal, and furthermore are supported on Dicke states of distinct weights modulo $g$. From Theorem \ref{thm:SLD-pure-state}, we can see that the SLDs of each of the states $|\psi\>_a/ \sqrt{n_a}$ is pairwise orthogonal.
This implies that the QFI of $|+_{g,n,u,s}\>$ after $t$ deletions is a convex combination of the QFIs of the $|\psi\>_a/ \sqrt{n_a}$, and hence is given by $4
\sum_{a =0}^t  n_a \binom t a  v_a.$
\end{proof}

\subsubsection{Amplitude damping errors}
Amplitude damping (AD) errors are introduced by an AD channel $\mathcal A_\gamma$ which has Kraus operators ${A_0 = |0\>\<0| + \sqrt{1-\gamma} |1\>\<1|}$ and 
${A_1 = \sqrt \gamma |0\>\<1|}$. These Kraus operators model the relaxation of an excited state to the ground state with probability $\gamma$. We denote an amplitude damping channel on $N$ qubits by $\mathcal A_{N,\gamma}  = \mathcal A_\gamma^{\otimes N}$, which has the Kraus operators $A_{\bf x} = A_{x_1}\otimes \dots \otimes A_{x_N}$ where ${\bf x} = (x_1,\dots, x_N) \in \{0,1\}^N$. 

Given a subset $P$ of $\{1,\dots, N\}$, we define an insertion channel ${\rm Ins}_P$ on an $N-|P|$-qubit state to insert the pure state $|0\>^{\otimes |P|}$ in the positions labeled by $P$ to result in an $N$ qubit state. Insertion channels are discussed in more detail for instance in	 \cite{shibayama2021permutation,shibayama2021equivalence}.
Lemma \ref{lem:AD-representation} then expresses any symmetric state after amplitude damping in terms of insertions channels.  
\begin{lemma}[Impact of AD errors]
\label{lem:AD-representation}
Let $|\psi\> = \sum_{w=0}^N a_w |D^N_w\>$ by any symmetric pure state.
Now for any $x=0,\dots,N$, define the subnormalized states
\begin{align}
|\phi_x\> &= \sum_{w=x}^{N-x} a_w \sqrt{p_w(x)} |D^{N-x}_{w-x}\>,
\end{align}
where 
\begin{align}
 p_w(x) = \binom w x \gamma^x (1-\gamma)^{w-x}   \label{eq:pw}.
\end{align}
Then
\begin{align}
    \mathcal A_{N,\gamma}
    (|\psi\>\<\psi|)
    = \sum_{x=0}^N  \frac{1}{\binom N x} \sum_{|P| = x} 
   {\rm Ins}_{P}(   |\phi_x\>\<\phi_x| ) .
\end{align}
\end{lemma}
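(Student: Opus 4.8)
The plan is to expand the channel in its Kraus representation and evaluate the action of each Kraus operator directly on the Dicke-basis expansion of $|\psi\rangle$, regrouping the Kraus sum according to the set of damped positions. Writing $\mathcal{A}_{N,\gamma}(|\psi\rangle\langle\psi|) = \sum_{{\bf x}\in\{0,1\}^N} A_{\bf x}|\psi\rangle\langle\psi|A_{\bf x}^\dagger$, I would note that $A_{\bf x}$ depends on ${\bf x}$ only through its support $S=\{j:x_j=1\}$, so I can rename $A_{\bf x}=A_S$ and reindex the outer sum as $\sum_{{\bf x}} = \sum_{x=0}^N\sum_{|S|=x}$, where $S$ records exactly which qubits undergo the $|1\rangle\to|0\rangle$ decay.

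The central computation is the action of a single $A_S$ (with $|S|=x$) on a computational basis vector $|{\bf y}\rangle$. Since $A_1|0\rangle=0$, $A_1|1\rangle=\sqrt{\gamma}|0\rangle$, $A_0|0\rangle=|0\rangle$ and $A_0|1\rangle=\sqrt{1-\gamma}|1\rangle$, the vector $A_S|{\bf y}\rangle$ is nonzero exactly when $S\subseteq\mathrm{supp}({\bf y})$, in which case it equals $\gamma^{x/2}(1-\gamma)^{(\wt({\bf y})-x)/2}|{\bf y}\setminus S\rangle$, where ${\bf y}\setminus S$ is ${\bf y}$ with the $S$-positions reset to $0$. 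First I would apply this to a Dicke state $|D^N_w\rangle$: summing over the $\binom{N-x}{w-x}$ weight-$w$ strings that contain $S$, and recognizing that resetting the $S$-coordinates to $0$ is precisely the insertion of $|0\rangle^{\otimes x}$ at positions $S$, the result collapses to $\mathrm{Ins}_S$ applied to an unnormalized Dicke state on the complementary $N-x$ qubits, giving $A_S|D^N_w\rangle = \gamma^{x/2}(1-\gamma)^{(w-x)/2}\binom{N}{w}^{-1/2}\,\mathrm{Ins}_S(|H^{N-x}_{w-x}\rangle)$.

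Next I would normalize via $|H^{N-x}_{w-x}\rangle = \binom{N-x}{w-x}^{1/2}|D^{N-x}_{w-x}\rangle$ and apply the elementary identity $\binom{N-x}{w-x}\binom{N}{w}^{-1} = \binom{w}{x}\binom{N}{x}^{-1}$; this reorganizes the coefficient into exactly $\binom{N}{x}^{-1/2}\sqrt{p_w(x)}$ with $p_w(x)=\binom{w}{x}\gamma^x(1-\gamma)^{w-x}$. Summing over $w$ and using linearity of $\mathrm{Ins}_S$ then yields the clean identity $A_S|\psi\rangle = \binom{N}{x}^{-1/2}\,\mathrm{Ins}_S(|\phi_x\rangle)$, with $|\phi_x\rangle$ exactly as in the statement. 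Taking the outer product and using that the pure-state insertion map satisfies $\mathrm{Ins}_S(|\phi_x\rangle\langle\phi_x|) = (\mathrm{Ins}_S|\phi_x\rangle)(\mathrm{Ins}_S|\phi_x\rangle)^\dagger$, each damping term becomes $\binom{N}{x}^{-1}\,\mathrm{Ins}_S(|\phi_x\rangle\langle\phi_x|)$, where $|\phi_x\rangle$ depends only on $x=|S|$. Summing back over all $S$ with $|S|=x$ and then over $x$ reproduces the claimed decomposition, with the prefactor $\binom{N}{x}^{-1}$ sitting out front.

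I expect the main obstacle to be bookkeeping rather than anything conceptual: correctly enforcing that the damped set $S$ lies inside the support of each basis string (so the relevant count is $\binom{N-x}{w-x}$, not $\binom{N}{w-x}$), and cleanly identifying the reset map ${\bf y}\mapsto{\bf y}\setminus S$ with $\mathrm{Ins}_S$ acting on the restriction of ${\bf y}$ to the complement of $S$. The single algebraic step that makes the $\gamma$-weighting fold into the stated $p_w(x)$ is the binomial identity $\binom{N-x}{w-x}\binom{N}{w}^{-1} = \binom{w}{x}\binom{N}{x}^{-1}$, which I would verify by writing both sides in factorials; this is also precisely where the normalization factor $\binom{N}{x}^{-1}$ in the final formula is born.
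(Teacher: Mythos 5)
Your proposal is correct and follows essentially the same route as the paper: group the Kraus operators by the set of damped positions, compute the action on Dicke states (the paper does this for the canonical operator $A_{({\bf 1}^x,{\bf 0}^{N-x})}$ and extends to general $P$ by the insertion channel, which is equivalent to your direct treatment of arbitrary $S$), and fold the normalization into $p_w(x)$ via the identity $\binom{N-x}{w-x}/\binom{N}{w}=\binom{w}{x}/\binom{N}{x}$. No gaps.
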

We prove Lemma \ref{lem:AD-representation} in the appendix.
Let $n_x = \< \phi_x | \phi_x\>$, and let the variance of $|\phi_x\> / \sqrt{n_x} $ be
\begin{align}
q_x =&2^{-n}\sum_{
\substack{
0\le k \le n \\
x\le gk \le N-x
}}
\binom n k \frac{p_w(x)}{n_x} (gk+s-x)^2
\notag\\
&-
\Bigl(
2^{-n}\sum_{
\substack{
0\le k \le n \\
x\le gk \le N-x
}}
\binom n k \frac{p_w(x)}{n_x} (gk+s-x)
\Bigr)^2
\end{align}
Then, we obtain an upper bound on the QFI for shifted gnu states after AD errors.
\begin{lemma}
\label{lem:AD-gnu-qfi}
The QFI of $|+_{g,n,u,s}\>$ after AD errors introduced by $\mathcal A_{N,\gamma}$ is at most $4 \sum_{x=0}^N n_x q_x$.
\end{lemma}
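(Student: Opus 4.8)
The plan is to combine the mixed-state representation of Lemma~\ref{lem:AD-representation} with the convexity of the quantum Fisher information and the pure-state QFI identity underlying Theorem~\ref{thm:SLD-pure-state}. First I would recall from Lemma~\ref{lem:AD-representation} that the corrupted probe state is
\[
\rho = \mathcal A_{N,\gamma}(|\psi\>\<\psi|) = \sum_{x=0}^N \frac{1}{\binom N x}\sum_{|P|=x}{\rm Ins}_{P}(|\phi_x\>\<\phi_x|).
\]
Each inner term is a subnormalized pure state of trace $n_x/\binom N x$, and since $\mathcal A_{N,\gamma}$ is trace preserving we have $\sum_x n_x = 1$. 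Hence this is a genuine convex decomposition of $\rho$ into the normalized pure states ${\rm Ins}_{P}(|\phi_x\>\<\phi_x|)/n_x$, each carrying $\theta$-independent probability weight $n_x/\binom N x$, with the weights summing to $\sum_x \binom N x \cdot (n_x/\binom N x) = 1$. The signal is imprinted as $\rho_\theta = U_\theta\rho U_\theta^\dagger$, which acts term by term, so $\rho_\theta$ is a convex mixture (with fixed weights) of the pure parametrized families $U_\theta{\rm Ins}_{P}(|\phi_x\>\<\phi_x|)U_\theta^\dagger/n_x$.

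Next I would invoke the convexity of the QFI: for a mixture with $\theta$-independent weights, the QFI of the mixture is at most the weighted sum of the component QFIs. This is precisely where the inequality, rather than equality, enters; in contrast to the deletion case of Lemma~\ref{lem:gnu-delete-qfi}, the pure components here need not be mutually orthogonal, so convexity yields only an upper bound. Writing $Q_{x,P}$ for the QFI of the single pure family indexed by $(x,P)$, convexity gives a bound of the form $\sum_x\sum_{|P|=x}\frac{n_x}{\binom N x}\,Q_{x,P}$.

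The third step is to evaluate $Q_{x,P}$. For any pure state the QFI with respect to the generator $\hat J^z$ equals four times the variance of $\hat J^z$, which is the content of Theorem~\ref{thm:SLD-pure-state} and the remark that the QFI equals ${4v_\psi}$. The state ${\rm Ins}_{P}(|\phi_x\>/\sqrt{n_x})$ carries $|0\>^{\otimes x}$ in the positions $P$; those qubits are eigenstates of their local $Z$, so they contribute only a constant shift to $\hat J^z$ and nothing to its variance. Consequently $Q_{x,P} = 4\Var(\hat J^z) = 4q_x$ for every subset $P$, where $q_x$ is exactly the variance of the Dicke weight of the normalized damped state $|\phi_x\>/\sqrt{n_x}$. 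Substituting, each of the $\binom N x$ subsets of size $x$ contributes $\frac{n_x}{\binom N x}\cdot 4q_x$, the binomial factors cancel, and summing over $x$ delivers the claimed bound $4\sum_{x=0}^N n_x q_x$.

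The main obstacle I anticipate is the bookkeeping in the third step: one must verify carefully that inserting $|0\>^{\otimes x}$ leaves the variance of $\hat J^z$ unchanged, so that the $P$-dependence drops out and $Q_{x,P}$ collapses to the stated $4q_x$, and that the $q_x$ defined before the lemma is genuinely the variance of $\hat J^z$ (equivalently of the weight operator) on $|\phi_x\>/\sqrt{n_x}$. Once these identifications are confirmed, the remainder is a direct application of QFI convexity together with the pure-state variance formula.
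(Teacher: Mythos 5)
Your proposal is correct and follows essentially the same route as the paper: the paper's own proof is a two-line appeal to the convexity of the QFI applied to the decomposition in Lemma~\ref{lem:AD-representation}, and your argument is a careful unpacking of exactly that, including the (correct) observations that the weights are $\theta$-independent and sum to one, and that inserting $|0\>^{\otimes x}$ only shifts $\hat J^z$ by a constant so each pure component has QFI $4q_x$.
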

\begin{proof}
The lemma follows directly from the convexity of the QFI with respect to the probe state, and the decomposition in Lemma  \ref{lem:AD-representation}.
\end{proof}

\section{QEC before noise-free signal accumulation}
\label{sec:integrating}

Now, from Lemma \ref{lem:gnus-qfi}, the QFI of using gnu codes is $g^2n$. Hence, for $t$ general errors, if $t \le (\min\{g,n\}-1)/2$, we can perform QEC to recover this QFI of $g^2n$. 
Hence, when $g = \Theta(N^\alpha)$ and $n = \Theta(N^{1-\alpha})$ where $\alpha \ge 1/2$, the maximum possible QFI for $t = \Omega(N^{1-\alpha})$ is given by the following theorem.

\begin{theorem}
\label{thm:QEC-QFI}
Let $\alpha \ge 1/2$, $ g = \Theta(N^\alpha)$ and $n = \Theta(N^{1-\alpha})$. 
Then any $o(N^{1-\alpha})$ errors can be corrected on a corresponding shifted gnu code, and the corresponding QFI thereafter on $\theta$ for noiseless signal accumulation of $U_\theta$ on $|+_{g,n,u,s}\>$ is $g^2n = \Theta(N^{1+\alpha})$.
\end{theorem}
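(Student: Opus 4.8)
The plan is to verify Theorem~\ref{thm:QEC-QFI} by assembling three ingredients that are all available from the preceding results. First, I would establish the error-correction claim: a shifted gnu code with parameters $g,n,u,s$ has code distance $d=\min\{g,n\}$, so by the standard Knill-Laflamme condition it corrects any channel whose Kraus operators have weight at most $t$ whenever $d\ge 2t+1$, i.e.\ $t\le(\min\{g,n\}-1)/2$. Under the hypotheses $g=\Theta(N^\alpha)$, $n=\Theta(N^{1-\alpha})$ with $\alpha\ge 1/2$, we have $\min\{g,n\}=\Theta(N^{1-\alpha})$ (since $1-\alpha\le\alpha$), so the correctible error weight budget is $\Theta(N^{1-\alpha})$; any $t=o(N^{1-\alpha})$ therefore eventually satisfies $t\le(\min\{g,n\}-1)/2$ and is correctible. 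I would invoke Lemma~\ref{lem:sym} together with Algorithm~3 (and Algorithm~4 for the deletion case) to note that this correction is not merely information-theoretically possible but implementable, returning the probe state to the codespace.

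Second, I would recall that once the errors have been corrected the probe state is restored to $|+_{g,n,u,s}\rangle$ exactly, so the subsequent noiseless signal accumulation $U_\theta$ acts on the undamaged code state. Then Lemma~\ref{lem:gnus-qfi} applies verbatim: the QFI of $|+_{g,n,u,s}\rangle$ with respect to $U_\theta$ is $g^2 n$.

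Third, I would substitute the asymptotic parameter choices into this QFI. We compute
\begin{align}
g^2 n = \Theta(N^\alpha)^2 \cdot \Theta(N^{1-\alpha}) = \Theta(N^{2\alpha})\cdot\Theta(N^{1-\alpha}) = \Theta(N^{1+\alpha}),
\end{align}
which is the stated scaling. The theorem then follows by combining the three observations.

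I do not expect any serious obstacle, as the statement is essentially a corollary that packages the code-distance bound, the implementability furnished by Lemma~\ref{lem:sym} and Algorithms~3--4, and the exact QFI from Lemma~\ref{lem:gnus-qfi}. The only point requiring a little care is the asymptotic bookkeeping: one must check that $\min\{g,n\}=\Theta(N^{1-\alpha})$ under the constraint $\alpha\ge 1/2$ (so that it is $n$, not $g$, that governs the distance), and that $t=o(N^{1-\alpha})$ is indeed eventually dominated by $(\min\{g,n\}-1)/2$; this is immediate from the definition of little-$o$. A secondary subtlety worth a sentence is that the theorem implicitly assumes the QEC recovery itself is noiseless and fast, as flagged in the surrounding discussion, so that the full QFI is genuinely recovered rather than merely upper-bounded.
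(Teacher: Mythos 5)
Your proposal is correct and follows essentially the same route as the paper's proof: bound the correctible error weight by $(\min\{g,n\}-1)/2=\Theta(N^{1-\alpha})$ using $\alpha\ge 1/2$, note any $o(N^{1-\alpha})$ errors are then correctible, and invoke Lemma~\ref{lem:gnus-qfi} to get QFI $g^2n=\Theta(N^{1+\alpha})$. Your extra remarks on implementability via Lemma~\ref{lem:sym} and Algorithms~3--4 and on the idealized (noiseless, fast) recovery match the paper's surrounding discussion rather than its proof, but add nothing that changes the argument.
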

\begin{proof}
For general errors, since $\alpha \ge 1/2$, the number of correctible errors is 
at most $(\min\{g,n\}-1)/2 = \Theta(N^{1-\alpha})$.
Hence any $o(N^{1-\alpha})$ errors are also correctible.
From Lemma \ref{lem:gnus-qfi}, we know that the QFI of $|+_{g,n,u,s}\>$ is $g^2n$. Since $\alpha \ge 1/2$, we have $g^2n =\Theta(N^{1+\alpha})$.
\end{proof}
Consider Theorem \ref{thm:QEC-QFI} when $\alpha = 1/2$. Then, any $o(\sqrt N)$ errors can be corrected and the QFI is $\Theta(N^{3/2})$.

\section{No deletions during signal accumulation}
\label{ssec:QEC-during-sensing}

\subsection{Overview}

\yo{Consider the case where $n$ is odd and also at least three.} We consider the set of states
\begin{align}
\left\{ ( (s-N/2) I +  \hat J^z  )^k|j_L\> : 
k=0,\dots, (n-1)/2 \right\}    ,
\end{align}
and denote the orthonormal basis obtained by the Gram Schmidt procedure to be given by
\begin{align}
\left\{ |\psi_{j,k} \> : 
k=0,\dots, (n-1)/2 \right\}   . 
\end{align}
We define the rank 2 projectors 
$    \tilde \Pi_{k} 
    =
    |\psi_{0,k}\>\<\psi_{0,k}|
    +
    |\psi_{1,k}\>\<\psi_{1,k}|,$
and the corresponding spaces as $\mathcal Q_k$.
In our protocol, we measure according to the POVM 
$\tilde M = \{\tilde \Pi_k: k=0,\dots, (n-1)/2 \}
\cup\{I - \sum_{k=0}^{(n-1)/2} \tilde \Pi_k\}$.
With probability 1, the measurement of $\tilde M$ outputs a state in one of the spaces $\mathcal Q_0,\dots, \mathcal Q_{(n-1)/2}$.

Next, for all $k=0,\dots, (n-1)/2$,
we find that 
\begin{align}
\frac
{\<\psi_{1,k} | U_{\Delta} |1_L\>}
{\<\psi_{0,k} | U_{\Delta} |0_L\>}
=e^{i \zeta_k}
\end{align}
for some real number $\zeta_k.$
After projecting onto the space $\mathcal Q_k$ and mapping $\mathcal Q_k$ back to the codespace after the signal $U_\Delta$ acts on the codespace, we obtain an effective evolution 
in the code's logical basis with phase $\zeta_k$.
We can write $\zeta_k = f_k(gb\tau/2)$ where $f_k$ is analytic function.

After obtaining the value of $k$ by measurement, 
we \yo{use a unitary to map} $|\psi_{j,k}\>$ \yo{to} the state $|j_L\>$ for $j=0,1.$



\subsection{Details}
   

Defining the $N$-qubit operator 
\begin{align}
 \hat K \coloneqq \hat J^z +  (s-N/2)I  + (gn/2) I   
\end{align}
where $I$ denotes the identity operator, and $g,n,s$ are parameters of an $N$-qubit shifted gnu code, we have the following lemma which we prove in Appendix \ref{app:sandwiches}.
\begin{lemma}
\label{lem:sandwich}
Let $|0_L\>$ and $|1_L\>$ be logical codewords of a shifted gnu code with parameters $g,n,u$ and $s$. Then 
$\<j_L| e^{-i \Delta \hat K} |j_L\> = \cos^n(g\Delta/2)  + (-1)^j(-i)^n  \sin^n(g\Delta/2) $ for $j=0,1$.
\end{lemma}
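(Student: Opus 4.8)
The plan is to evaluate $\langle j_L| U_\Delta |j_L\rangle$ directly from the definition of the shifted gnu codewords, collapsing everything to a single binomial sum and then factoring it in half-angle form. First I would recall that
$|j_L\rangle = 2^{-(n-1)/2}\sum_{k\equiv j\,(\mathrm{mod}\,2)}\sqrt{\binom{n}{k}}\,|D^{N}_{gk+s}\rangle$ with $N=gnu+s$, and that each Dicke state $|D^N_w\rangle$ is an eigenvector of $\hat J^z$. Applying $U_\Delta = \exp(-i\Delta\hat J^z)$ therefore multiplies the term indexed by $k$ by the pure phase $e^{-i\Delta(N/2-gk-s)}$. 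Since the Dicke states are orthonormal, the inner product keeps only diagonal terms, and we get
\begin{align}
\langle j_L|U_\Delta|j_L\rangle = 2^{-(n-1)}\!\!\!\sum_{\substack{0\le k\le n\\ k\equiv j\,(\mathrm{mod}\,2)}}\!\!\!\binom{n}{k}\,e^{-i\Delta(N/2-gk-s)}.
\end{align}

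Second, I would pull out the $k$-independent phase $e^{-i\Delta(N/2-s)}$ and set $z=e^{ig\Delta}$, so that the residual sum becomes $\sum_{k\equiv j}\binom{n}{k}z^k$. The key combinatorial identity is the parity-selection formula $\sum_{k\equiv j\,(\mathrm{mod}\,2)}\binom{n}{k}z^k=\tfrac12\bigl[(1+z)^n+(-1)^j(1-z)^n\bigr]$, obtained by adding and subtracting the two binomial expansions of $(1\pm z)^n$. This reduces the task to evaluating $(1+z)^n$ and $(1-z)^n$ in closed form.

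Third, I would factor the half-angles $1+e^{ig\Delta}=2e^{ig\Delta/2}\cos(g\Delta/2)$ and $1-e^{ig\Delta}=-2i\,e^{ig\Delta/2}\sin(g\Delta/2)$. Raising to the $n$-th power produces a common factor $2^n e^{ign\Delta/2}$, yielding $\cos^n(g\Delta/2)$ from the first term and $(-i)^n\sin^n(g\Delta/2)$ from the second. Collecting the $2^{-(n-1)}$ prefactor, the extracted phase $e^{-i\Delta(N/2-s)}$, and the factor $e^{ign\Delta/2}$ reproduces precisely $\phi_{n,j}(\Delta)$ as written in \eqref{phinj-deriv-2}, which closes the argument for both $j=0$ and $j=1$.

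There is no substantive obstacle here; the computation is elementary once the binomial parity identity is invoked. The only points requiring genuine care are the bookkeeping of the $\hat J^z$-eigenvalue sign convention (so that $z=e^{ig\Delta}$ rather than its conjugate, which is exactly what makes the factor $(-i)^n$ appear rather than $(+i)^n$), and confirming that the even-$k$ and odd-$k$ codewords are supported on Dicke states of disjoint weights so that no cross terms with $k\not\equiv j$ survive in the inner product. Both are immediate from the structure of the shifted gnu code.
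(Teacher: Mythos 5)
Your proposal is correct and follows essentially the same route as the paper's proof: expand in the Dicke basis using the $\hat J^z$ eigenvalues $N/2-w$, extract the common phase $e^{-i\Delta(N/2-s)}$, apply the binomial parity identity $\sum_{k\equiv j}\binom{n}{k}z^k=\tfrac12[(1+z)^n+(-1)^j(1-z)^n]$, and factor the half-angles to obtain $\phi_{n,j}(\Delta)$. The only cosmetic difference is that you treat $j=0$ and $j=1$ uniformly via the $(-1)^j$ sign, whereas the paper writes out the even and odd sums separately.
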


Given any shifted gnu logical codeword $|0_L\>$ and $|1_L\>$, we can define the vectors
\begin{align}
|{Q_0}\> &= {\hat J^z} | 0_L\> - \<0_L | {\hat J^z} | 0_L\>  |0_L\>,
\\
|{Q_1}\> &= {\hat J^z} | 1_L\> - \<1_L | {\hat J^z} | 1_L\>  |1_L\>,
\end{align}
and define $|q_j\> = |Q_j\>/\sqrt{\<Q_j|Q_j\>}$.
The vectors $|Q_j\>$ are orthogonal to $|j_L\>$ by construction.
Note that 
\begin{align}
\<Q_j|Q_j\>=  \< j_L| ({\hat J^z})^2 | j_L\> - \<j_L | {\hat J^z} | j_L\>^2.
\end{align}
When $n \ge 3$ for a shifted gnu code, we have $\<Q_0|Q_0\>=\<Q_1|Q_1\>$. Then for $j=0,1$, we have 
\begin{align}
\<Q_j|Q_j\>=  \< +_L| ({\hat J^z})^2 | +_L\> - \<+_L | {\hat J^z} | +_L\>^2.\label{qj-norm}
\end{align}
The quantity in \eqref{qj-norm} is the variance of a state with respect to the operator $
\hat J^z$. Hence, from Lemma \ref{lem:gnus-qfi},
we can see that 
\begin{align}
\<Q_j|Q_j\>=  g^2n/4.
\label{Q-normalization}
\end{align}
Furthermore, 
\begin{align}
\<Q_j| \hat J_z |j_L\>
&=
  \<j_L|\hat J_z  \hat J_z |j_L\>
  -\<j_L| \hat J_z |j_L\>
  \<j_L|\hat J_z |j_L\>,
\end{align}
which also evaluates to $\frac{g^2n}{4}$.

We introduce the function  
\begin{align}
\phi_{n,j}(\Delta) =& 
 e^{-i \Delta (N/2-s)} e^{ign\Delta/2}
\notag\\
& \times 
\left( \cos^n(g\Delta/2)  + (-1)^j(-i)^n  \sin^n(g\Delta/2) \right).
 \label{phinj-deriv-2}
\end{align}
Next Lemma \ref{lem:qU-norm-sandwich} evaluates the probability of certain projections.
\begin{lemma}
\label{lem:qU-norm-sandwich}
Let $n$ be odd and $\ge 3$. Then 
\begin{align}
| \<q_j| U_\Delta |j_L\> |^2 
&=
n
\left|
 \phi_{n,j}(\Delta) -  e^{ig\Delta}\phi_{n-1,j \oplus 1}(\Delta) \right|^2
 \notag\\
 &=
\frac{n}{4} \sin^2 2x \left(\sin^{2n-4}x+  \cos^{2n-4}x \right) \notag
,
\end{align}
where $x= g \Delta/2$.
\end{lemma}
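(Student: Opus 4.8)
The plan is to reduce everything to the single overlap $\langle Q_j| U_\Delta | j_L\rangle$ and then divide by the norm $\|Q_j\|=\sqrt{\langle Q_j|Q_j\rangle}=g\sqrt{n}/2$, which is supplied by \eqref{Q-normalization}. Writing $U_\Delta|j_L\rangle$ in the Dicke basis attaches a phase $e^{-i\Delta(gk+s-N/2)}$ to each $|D^N_{gk+s}\rangle$, so that $\langle j_L|U_\Delta|j_L\rangle$ is a binomial sum in $z=e^{-ig\Delta}$ which, via the even/odd binomial identities, reproduces $\phi_{n,j}(\Delta)$; this is exactly Lemma \ref{lem:sandwich}, which I would invoke rather than reprove, and whose sign convention I would carry throughout.

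First I would compute the weighted overlap $\langle j_L|\hat J^z U_\Delta|j_L\rangle$ directly. Writing $\hat J^z|D^N_{gk+s}\rangle=(gk+c)|D^N_{gk+s}\rangle$ with $c=s-N/2$, the overlap splits into a piece proportional to $c$ (which reproduces $\phi_{n,j}$) and a piece carrying the extra factor $gk$. The key move is $k\binom nk=n\binom{n-1}{k-1}$: after relabelling $k\mapsto k-1$ this turns the $k$-weighted sum over $k\equiv j\ (\mathrm{mod}\ 2)$ into an $(n-1)$-gnu binomial sum over the opposite parity $k\equiv j\oplus1$, i.e.\ $\phi_{n-1,j\oplus1}$, with the relabelling producing the prefactor $e^{-ig\Delta}$. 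In the convention of Lemma \ref{lem:sandwich} this gives
\begin{align}
\langle j_L|\hat J^z U_\Delta|j_L\rangle=\tfrac{gn}{2}\,e^{-ig\Delta}\phi_{n-1,j\oplus1}(\Delta)+c\,\phi_{n,j}(\Delta).
\end{align}
Setting $\Delta=0$ yields the mean $\langle j_L|\hat J^z|j_L\rangle=gn/2+c$, so upon subtracting $\langle j_L|\hat J^z|j_L\rangle\,\phi_{n,j}$ to form $\langle Q_j|U_\Delta|j_L\rangle$ the $c$-terms cancel exactly, leaving $\tfrac{gn}{2}\bigl(e^{-ig\Delta}\phi_{n-1,j\oplus1}-\phi_{n,j}\bigr)$. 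Dividing by $\|Q_j\|=g\sqrt n/2$ and taking the modulus gives the first equality, $|\langle q_j|U_\Delta|j_L\rangle|^2=n\,|\phi_{n,j}-e^{ig\Delta}\phi_{n-1,j\oplus1}|^2$.

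For the second equality I would factor the common unit-modulus prefactor $e^{-i\Delta(N/2-s)}e^{inx}$ (with $x=g\Delta/2$) out of $\phi_{n,j}-e^{ig\Delta}\phi_{n-1,j\oplus1}$ and expand $e^{ix}=\cos x+i\sin x$. The pure $\cos^n x$ terms cancel immediately, and the pure $\sin^n x$ terms cancel via $(-i)^{n-1}i=-(-i)^n$ together with $(-1)^{j\oplus1}=-(-1)^j$. What remains is $-i\bigl(\sin x\cos^{n-1}x+(-1)^j(-i)^{n-2}\cos x\sin^{n-1}x\bigr)$; here is where $n$ odd is essential, since then $(-i)^{n-2}$ is purely imaginary, the two surviving terms are orthogonal in the complex plane, and $|\cdot|^2=\sin^2x\cos^{2n-2}x+\cos^2x\sin^{2n-2}x$. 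Pulling out $\sin^2x\cos^2x=\tfrac14\sin^2 2x$ gives $\tfrac14\sin^2 2x\,(\sin^{2n-4}x+\cos^{2n-4}x)$, and multiplying by $n$ gives the stated form. The third equality is a Taylor expansion at $x=0$: $\sin^2 2x=4x^2+O(x^4)$ and, since $n\ge3$ forces $2n-4\ge2$, $\sin^{2n-4}x+\cos^{2n-4}x=1+O(x^2)$, so the product is $nx^2+O(x^4)=\tfrac{n(g\Delta)^2}{4}+O((g\Delta)^4)$.

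The main obstacle I anticipate is the first equality: one must recognize the $(n-1)$-gnu structure hidden in the $\hat J^z$-weighted overlap and verify that the mean-subtraction defining $|Q_j\rangle$ precisely annihilates the shift-dependent $c\,\phi_{n,j}$ contribution, leaving an expression independent of $s$ and of $j$ (consistent with $\langle Q_0|Q_0\rangle=\langle Q_1|Q_1\rangle$ for $n\ge3$). The trigonometric collapse in the second equality is routine but hinges on the parity of $n$, so I would track the powers of $-i$ there with care.
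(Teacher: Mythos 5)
Your proposal is correct and follows essentially the same route as the paper: the paper's Lemma \ref{lem:Q-U-sandwich} establishes exactly your intermediate identity $\<Q_j|U_\Delta|j_L\> = \frac{gn}{2}\left(\phi_{n,j}(\Delta) - e^{ig\Delta}\phi_{n-1,j\oplus 1}(\Delta)\right)$ (packaging your $k\binom{n}{k}=n\binom{n-1}{k-1}$ manipulation as $i\partial_\Delta$ acting on $\phi_{n,j}$ plus a recursion relation, which is the same computation), and the subsequent trigonometric collapse via $\cos x - e^{ix}=-i\sin x$, $-i\sin x + e^{ix}=\cos x$ and the parity of $n$ is identical to the paper's, with division by $\|Q_j\|=g\sqrt{n}/2$ from \eqref{Q-normalization}. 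The only blemish is that your intermediate formula carries the convention $\hat J^z|D^N_w\> = (w-N/2)|D^N_w\>$ (hence the prefactor $e^{-ig\Delta}$ and $c=s-N/2$) rather than the $(N/2-w)$ convention used in the paper's appendix, but since only the modulus of the mean-subtracted overlap enters, this discrepancy is harmless and the stated equalities all follow.
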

Now Lemma \ref{lem:pflag} calculates the probability of projections onto various spaces. 
\begin{lemma}
\label{lem:pflag}
Let $n$ be odd, let $|\psi\> = a|0_L\> + b|1_L\>$ be a shifted gnu codestate, and let $|\phi\> = U_\Delta |\psi\>$.   
Let $\Pi$ be the code projector and $\Pi_1$ be a projector onto the space spanned by $|Q_0\>$ and $|Q_1\>$.
Let $x = g \Delta/ 2$.
Then
\begin{align}
\| \Pi  |\phi\> \|^2 =& \cos^{2n} x + \sin^{2n} x,
\label{Pi proj prob}
\\
\| \Pi_1  |\phi\> \|^2 =&
\frac{n}{4} \sin^2 2x \left( \sin^{2n-4}x + \cos^{2n-4}x \right)
\label{Pi1 proj prob}.
\end{align}
Also we have 
\begin{align}
1-\| \Pi |\phi\> \|^2-\| \Pi_0 |\phi\> \|^2
= 
\sum_{k=2}^{n-2} \binom n k \cos^{2k} x \sin^{2n-2k}x ,
\end{align} 
and this expression is $O(x^{2n-2})$.
\end{lemma}
\begin{proof}[Proof of Lemma \ref{lem:pflag}] 
From Lemma \ref{lem:sandwich}, we have
\begin{align}
&\| \Pi |\phi\> \|^2  \notag\\
=&
\| \xi_0|0_L\>\<0_L|U_\Delta |0_L\> + \xi_1|1_L\>\<1_L|U_\Delta |1_L\>\|^2
\notag\\
=&
|\xi_0|^2 |\phi_{n,0}(\Delta)|^2
+
|\xi_1|^2 |\phi_{n,1}(\Delta)|^2
\notag\\
=&
(|\xi_0|^2+|\xi_1|^2) (\cos^{2n} x + \sin^{2n} x)
\notag\\
=&
\cos^{2n} x + \sin^{2n} x.
\end{align}
Similarly, using Lemma \ref{lem:qU-norm-sandwich}, we can see that
\begin{align}
&\| \Pi_1 |\phi\> \|^2  \notag\\
=&
\| \xi_0|q_0\>\<q_0|U_\Delta |0_L\> +  \xi_1|q_1\>\<q_1|U_\Delta |1_L\>\|^2
\notag\\
=&
 |\xi_0|^2 |\<q_0|U_\Delta |0_L\>|^2 +  |\xi_1|^2 |\<q_1|U_\Delta |1_L\>|^2
\notag\\
=&
( |\xi_0|^2 + |\xi_1|^2 ) 
\frac{n}{4} \sin^2 2x \left( \sin^{2n-4}x + \cos^{2n-4}x \right)
\notag\\
=&
\frac{n}{4} \sin^2 2x \left( \sin^{2n-4}x + \cos^{2n-4}x \right).
\end{align}
Now note that 
\begin{align}
&\cos^{2n} x + \sin^{2n} x
+ \frac{n}{4} \sin^2 2x \left( \sin^{2n-4}x + \cos^{2n-4}x \right)
\notag\\
=&
\cos^{2n} x + \sin^{2n} x
+ n  \left( \sin^{2n-2}x\cos^2 x + \cos^{2n-2}x \sin^2 x \right)
\notag\\
=& \sum_{k=0,1,n-1,n} \binom n k \cos^{2k} x \sin^{2n-2k}x \notag\\
=&
(\cos^2 x + \sin^2 x)^n -
 \sum_{k=2}^{n-2} \binom n k \cos^{2k} x \sin^{2n-2k}x \notag\\
 =&
 1- \sum_{k=2}^{n-2} \binom n k \cos^{2k} x \sin^{2n-2k}x .
\end{align}
Hence the result follows.
\end{proof}

From Lemma \ref{lem:sandwich-ratios}
we can determine the effective evolution on the codespace performing projections on $U_\Delta|\psi\>$
where $|\psi\>$ is a shifted gnu state with no deletions. 
\begin{lemma}
\label{lem:sandwich-ratios}
Let $|0_L\>$ and $|1_L\>$ be the logical codewords of a shifted gnu code, 
and let $n$ be odd and $\ge 3$. Let $\Delta$ be a real number.
Then 
\begin{align}
\frac{\<1_L|  U_\Delta  |1_L\>}{ \<0_L|  U_\Delta  |0_L\>} = 
e^{i\zeta_0},
\quad
\frac
{\<{q_1}|U_\Delta|1_L\>}
{\<{q_0}|U_\Delta|0_L\>}
=  
e^{i\zeta_1}  ,
\end{align}
where
\begin{align} 
\zeta_j &= 2\arctan\left( (-1)^j i^{n-1} \tan^{n-2j}(g\Delta/2)  \right).
\label{def:zetas}
\end{align}
\end{lemma}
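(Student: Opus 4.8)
The plan is to reduce both identities to the single elementary fact that $e^{2i\arctan w} = \tfrac{1+iw}{1-iw}$ for any $w$, after writing each inner product in closed form. The workhorse is that $n$ is odd, which forces $(-i)^n = -i^n$ and $(-i)^{n-1} = i^{n-1}$; these are exactly the relations that make the stray powers of $i$ collapse into the clean form of $\zeta_j$ in \eqref{def:zetas}. Throughout I would set $x = g\Delta/2$ and carry the common prefactor $P = e^{-i\Delta(N/2-s)}e^{ign\Delta/2}$ symbolically, since it will cancel in every ratio.

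For the first ratio I would invoke Lemma~\ref{lem:sandwich} directly. Both $\<j_L|U_\Delta|j_L\> = \phi_{n,j}(\Delta)$ share the prefactor $e^{-i\Delta(N/2-s)}e^{ign\Delta/2}$, so
\[
\frac{\<1_L|U_\Delta|1_L\>}{\<0_L|U_\Delta|0_L\>}
= \frac{\cos^n x - (-i)^n\sin^n x}{\cos^n x + (-i)^n\sin^n x}.
\]
Dividing numerator and denominator by $\cos^n x$ and using $(-i)^n = -i^n$ (here oddness of $n$ enters) gives $\tfrac{1+i^n\tan^n x}{1-i^n\tan^n x}$, which is $e^{2i\arctan w}$ with $w = i^{n-1}\tan^n x$, i.e.\ $e^{i\zeta_0}$.

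For the second ratio I need the complex \emph{value} of $\<q_j|U_\Delta|j_L\>$, not merely its modulus; I would read this off from the intermediate step in the proof of Lemma~\ref{lem:qU-norm-sandwich}, namely $\<q_j|U_\Delta|j_L\> = \sqrt n\,(\phi_{n,j}(\Delta) - e^{ig\Delta}\phi_{n-1,j\oplus1}(\Delta))$. Factoring out $P e^{ign\Delta/2}$ and expanding $e^{ig\Delta/2} = \cos x + i\sin x$, the leading $\cos^n x$ and $\sin^n x$ contributions cancel for odd $n$, leaving a common factor $\sqrt n\, P e^{ign\Delta/2}\sin x\cos x$ times the bracket $(-1)^j i^{n-1}\sin^{n-2}x - i\cos^{n-2}x$. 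Taking the ratio, every common factor drops out, yielding
\[
\frac{\<q_1|U_\Delta|1_L\>}{\<q_0|U_\Delta|0_L\>}
= \frac{-i^{n-1}\sin^{n-2}x - i\cos^{n-2}x}{\;i^{n-1}\sin^{n-2}x - i\cos^{n-2}x\;}.
\]
Dividing top and bottom by $-i\cos^{n-2}x$ and using $i^{n-2} = -i^n$ turns this into $\tfrac{1-i^n\tan^{n-2}x}{1+i^n\tan^{n-2}x}$, which is $e^{2i\arctan w}$ with $w = -i^{n-1}\tan^{n-2}x$, i.e.\ $e^{i\zeta_1}$.

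I expect the only genuinely nonroutine point to be pinning down the phase of $\<q_j|U_\Delta|j_L\>$ rather than just its magnitude. I would handle this by noting that $|q_0\>$ and $|q_1\>$ are defined by the \emph{same} recipe (normalizing $\hat J^z|j_L\> - \<j_L|\hat J^z|j_L\>|j_L\>$, with common norm $\sqrt{g^2n}/2$ from \eqref{Q-normalization}), so any overall phase convention is identical for $j=0$ and $j=1$ and therefore cancels in the ratio. Everything else is bookkeeping of powers of $i$, entirely controlled by the parity of $n$, so the argument is short once the closed form for $\<q_j|U_\Delta|j_L\>$ is in hand.
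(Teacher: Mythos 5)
Your proposal is correct and follows essentially the same route as the paper: both ratios are reduced, via Lemma~\ref{lem:sandwich} and the closed form $\<Q_j|U_\Delta|j_L\> = \frac{gn}{2}\bigl(\phi_{n,j}(\Delta) - e^{ig\Delta}\phi_{n-1,j\oplus 1}(\Delta)\bigr)$ from Lemma~\ref{lem:Q-U-sandwich}, to unimodular fractions of the form $\frac{\cos^{m}x \pm i^{n}\sin^{m}x}{\cos^{m}x \mp i^{n}\sin^{m}x}$ whose phase is then read off using the oddness of $n$ (which makes $i^{n-1}$ real). The only difference is presentational: you extract the phase via $e^{2i\arctan w}=\frac{1+iw}{1-iw}$, whereas the paper writes numerator and denominator as $Re^{\pm ib}$ and solves $\tan b = i^{n-1}\tan^{n-2j}(g\Delta/2)$ by cases on $n \bmod 4$; your version is marginally cleaner but the computation is the same.
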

When $n=3$, we have
\begin{align}
\zeta_0 &= -2\arctan(\tan^3(g\Delta/2) )\\
\zeta_1 &= g\Delta = gb\tau.
\end{align}
From Lemma \ref{lem:sandwich-ratios}, if we project $U_\Delta |\psi\>$ on the codespace where  $|\psi\> = \xi_0|0_L\> + \xi_1|1_L\>$,
then the normalized projected state is either  
$\xi_0|0_L\> + \xi_1 e^{i \zeta_0}|1_L\>$.
If $U_\Delta |\psi\>$ is projected onto the support of $\Pi_1$, then the normalized projected state is
$\xi_0|0_L\> + \xi_1 e^{i \zeta_1}|1_L\>$.

When $g\Delta/2$ is close to zero, we have the approximation 
\begin{align}
    \zeta_j   &\approx 2(-1)^j i^{n-1} (g \Delta/2)^{n-2j}.
\end{align}
Since the probability of projecting onto the supports of $\Pi$ and $\Pi_1$ are approximately 1 and $n (g\Delta/2)^2$ respectively, the expected accumulated phase is approximately
\begin{align}
2(1-n) i^{n-1} (g \Delta/2)^n.
\end{align} 

The proof of Lemma \ref{lem:sandwich-ratios} uses trigonometric identities.
\begin{proof}[Proof of Lemma \ref{lem:sandwich-ratios}]
From Lemma \ref{lem:sandwich} and \eqref{phinj-deriv-2}, for odd $n$ we have 
\begin{align}
\frac{\<1_L|  U_\Delta  |1_L\>}{ \<0_L|  U_\Delta  |0_L\>} &= 
\frac{ \cos^n( g\Delta/2 ) + i^n\sin^n (g\Delta/2) } 
{ \cos^n( g\Delta/2 )  -   i^n \sin^n (g\Delta/2) }.
\label{eq:basic-fraction}
\end{align}
Since $n$ is odd, the absolute values of both the numerator and denominator in  \eqref{eq:basic-fraction} are identical.
Hence the quantity in \eqref{eq:basic-fraction} can be written as $e^{i a}$ for some real number $a$.
In fact, we can write the numerator and denominator in the form $R e^{i b}$ and $R e^{-ib}$ respectively for some positive number $R$, so that $a = 2b$.
We proceed to determine the value of $b$ using trigonometrical identities.
If $\mod(n,4)=1$, then 
\begin{align}
\frac{\<1_L|  U_\Delta  |1_L\>}{ \<0_L|  U_\Delta  |0_L\>} &= 
\frac{ \cos^n( g\Delta/2 ) + i\sin^n (g\Delta/2) } 
{ \cos^n( g\Delta/2 )  -   i \sin^n (g\Delta/2) }.
\label{eq:basic-fraction-mod4=1}
\end{align}
and $\tan b = \tan^n (g \Delta/2 )$.
If $\mod(n,4)=3$, then 
\begin{align}
\frac{\<1_L|  U_\Delta  |1_L\>}{ \<0_L|  U_\Delta  |0_L\>} &= 
\frac{ \cos^n( g\Delta/2 ) - i\sin^n (g\Delta/2) } 
{ \cos^n( g\Delta/2 )  +  i \sin^n (g\Delta/2) }.
\label{eq:basic-fraction-mod4=2}
\end{align}
and $- \tan b = \tan^{n-1} (g \Delta/2 )$.
Hence we conclude that 
$\tan b = i^n \tan^n (g \Delta/2 )$, and hence
$b = \arctan (i^{n-1} \tan^n (g \Delta/2 ))$.
By setting $\zeta_0 = 2b$ we get the first part of the lemma.

Next, note that 
\begin{align}
\frac
{\<{q_1}|U_\Delta|1_L\>}
{\<{q_0}|U_\Delta|0_L\>}
=
\frac
{\<{Q_1}|U_\Delta|1_L\>}
{\<{Q_0}|U_\Delta|0_L\>}
.
\end{align}
For simplicity, let $x = g\Delta/2$. 
Using \eqref{eq:83} in \eqref{lem:Q-U-sandwich} we get
\begin{align}
\frac
{\<{q_1}|U_\Delta|1_L\>}
{\<{q_0}|U_\Delta|0_L\>}
=&
\frac
{
-i  \cos^{n-1}x  \sin x 
- i^{n-1}   \sin^{n-1}x  \cos x 
}
{
-i  \cos^{n-1}x  \sin x 
 + i^{n-1}   \sin^{n-1}x  \cos x 
 }
 \notag\\
=&
\frac
{
 \cos^{n-2}x    
- i^{n }   \sin^{n-2}x  
}
{
   \cos^{n-2}x   
 + i^{n }   \sin^{n-2}x  
 }.
 \end{align}
Hence 
$\frac {\<{q_1}|U_\Delta|1_L\>} {\<{q_0}|U_\Delta|0_L\>}$ is of the form $\frac{e^{-i c}}{e^{+i c}}$ 
where $i^{n-1} \tan c = \tan^{n-2} x$, and setting $\zeta_1=-c$, we get the second result of the lemma.
\end{proof}

Now we consider the case of larger $n$. 
Let $\hat K_0 = \hat J^z + (s-N/2)I$, and note that 
$\hat K_0|D^N_{s+gk}\>=-gk$.
In particular, we consider the set of states
\begin{align}
\mathcal K_j \coloneqq
\left\{   (\hat K_0/g) ^k|j_L\> : 
k=0,\dots, (n-1)/2 \right\}    ,
\end{align}
and denote the orthonormal vectors obtained from $\mathcal K_j$ by the Gram-Schmidt procedure to be given by
\begin{align}
\left\{ |\psi_{j,k} \> : 
k=0,\dots, (n-1)/2 \right\}   . 
\end{align}
We define the rank 2 projectors 
\begin{align}
    \tilde \Pi_{k} 
    =
    |\psi_{0,k}\>\<\psi_{0,k}|
    +
    |\psi_{1,k}\>\<\psi_{1,k}|,
\end{align}
and the corresponding spaces as $\mathcal Q_k$.
We show that for all $k=0,\dots, (n-1)/2$, we have
\begin{align}
\frac
{\<\psi_{1,k} | e^{-i \Delta \hat K }|1_L\>}
{\<\psi_{0,k} | e^{-i \Delta \hat K }|0_L\>}
\end{align}
is equal to $e^{i \zeta_k}$ for some real number $\zeta_k.$
Furthermore, when $g \Delta/2$ is close to zero, we have 
$|\zeta_k| = O( (g\Delta)^{n-k} ) $. 
This shows that after projection onto the space $\mathcal Q^k$ and mapping $\mathcal Q^k$ back to the codespace after the signal $U_\Delta$ accumulates, we pick up an effective evolution 
with phase $\zeta_k$ in the code's logical basis.

To see this, we first note that $|j^t_\sigma\>$ is supported on $(n+1)/2$ Dicke states, which we label as $|D^N_{f(j,k)}\>$ where $f(j,k)$ is some non-negative integer for $k=0,\dots, (n-1)/2$. We denote the space spanned by these Dicke states as $\mathcal Z_j$.

Second, since Dicke states are eigenstates of the unitary operator $U_\Delta$,
the unitary operator $U_\Delta$ only maps $|j^t_\sigma\>$ 
to quantum states that are superposition of the $(n+1)/2$ Dicke states that $|j^t_\sigma\>$ is supported on.

We numerically verify that the states 
$|j_L\>, (\hat J^z)|j_L\>, \dots,  (\hat J^z)^{(n-1)/2}|j_L\>$ are basis vectors of $\mathcal Z_j$ for \yo{values of} odd $n$ from 3 to 55.
We verify this by evaluating \yo{symbolically} the rank of the Gramian matrix $G_j
=\sum_{k,\ell=0}^{(n-1)/2} G_{j,k,\ell}|k\>\<\ell|$ for an orthonormal basis $\{|0\>,\dots,|(n-1)/2\>\}$ where
\begin{align}
    G_{j,k,\ell} =
    \<j_L | ((s-N/2)I + \hat J^z)^k ((s-N/2)I + \hat J^z)^\ell | j_L \>,
\end{align}
because the linear independence of the vectors
$|j_L\>, (\hat J^z)|j_L\>, \dots,  (\hat J^z)^{(n-1)/2}|j_L\>$ 
is equivalent to the linear independence of the vectors 
$|j_L\>, ((s-N/2)I + \hat J^z)|j_L\>, \dots,  ((s-N/2)I + \hat J^z)^{(n-1)/2}|j_L\>$. 

Note that $G_0 = G_1$ because 
 Lemma \ref{lem:Jzj sandwich} allows us to show that $\<0_L | (\hat J^z)^k (\hat J^z)^\ell | 0_L \>
=
\<1_L | (\hat J^z)^k (\hat J^z)^\ell | 1_L \>$
for all $k,\ell =0,\dots, (n-1)/2$.
When the Gramian matrix $G_j$ is full rank, we can write down basis vectors of $\mathcal Z_j$ 
as 
\begin{align}
    |\phi_{j,k}\>
    = \sum_{\ell} u_{k,\ell} (\hat J^z)^\ell |j_L\>
\end{align}
for $k=1,\dots, (n+1)/2$ and $j=0,1$. The coefficients $u_{k,\ell}$ are independent of $j$.

\section{Multiple deletions during signal accumulation}

\yo{
\subsection{Concentration of $\sigma$ about $t/2$ in every one of $r$ rounds}}

\yo{
With probability approaching 1, the random shift $\sigma$ is close to $t/2$. This is because the probability of obtaining a shift $\sigma$ is 
between 
$\binom t \sigma 
\<0^t_\sigma
|0^t_\sigma\>
$
and 
$\binom t \sigma 
\<1^t_\sigma
|1^t_\sigma\>
$,
which approaches the binomial distribution $\binom t \sigma 2^{-t}$ for large $N$ when $gt=o(N)$. 
This condition is almost surely satisfied when $\tau = 1/r$ for $r = \Theta(g^{1+\delta})$ for some positive $\delta$.}
 
From the Hoeffding inequality, 
given \yo{any} $e_2$ in the interval $[0,1/2]$,
the probability that the shift $\sigma$ does not satisfy the 
inequality  
\begin{align}
|\sigma - t/2| \le t^{1/2+e_2}
\label{eq:sigma-concentration}
\end{align}
is at most 
$2 \exp[-c t^{2e_2}]$ for some positive constant $c$.

 \subsection{Amplitude distortions} 

When $t < g$, $t$ deletions shifts the Dicke weights randomly either from 0 to $t$.
From \eqref{eq:psia-delete}, deleting $t$ qubits from an initial $N$-qubit logical state $|\psi\> = \xi_0 |0_L\>+\xi_1 e^{i\varphi}|1_L\>$ in 
a shifted gnu code gives a probabilistic mixture of the (unnormalized) $(N-1)$-qubit states
\begin{align}
|\psi^t_\sigma\>  = 
\xi_0|0^t_\sigma\>
+ \xi_1 e^{i\varphi}|1^t_\sigma\> 
\end{align}
where $\sigma = 0,\dots, t$. 
Note that 
\begin{align}
\frac{\binom{N-1}{gk+s-\sigma}} {\binom{N}{gk+s} }
&=
\case{
1-(gk+s)/N &, \quad \sigma= 0 \\
(gk+s)/N&, \quad \sigma= 1
}\notag\\
&=\case{
1/2 + (\lfloor{gn/2}\rfloor-gk)/N &, \quad \sigma= 0 \\
1/2 - (\lfloor{gn/2}\rfloor-gk)/N&, \quad \sigma= 1
}.\notag
\end{align}
In the case of general $t$, we have 
\begin{align}
\frac{\binom{N-t}{gk+s-\sigma}} {\binom{N}{gk+s} }
&=
\frac{(gk+s)_{(\sigma)} (N-gk-s)_{(t-\sigma)}}{N_{(t)}} .\label{binom fraction}
\end{align}
When $t=1$, we have 
\begin{align}
\frac{\binom{N-1}{gk+s-\sigma}} {\binom{N}{gk+s} }
=&\case{
(1-s/N) -gk/N &, \quad \sigma= 0 \\
s/N  + gk/N&, \quad \sigma= 1
} 
.\end{align}

When $s = N/2 - gn/2 + O(1)$, we have 
\begin{align}
   \frac{\binom{N-t}{gk+s-\sigma}} {\binom{N}{gk+s} }= &
\left(N/2+g y_k\right)_{(\sigma)}
\left(N/2 - g y_k \right)_{(t-\sigma)}\notag\\
&\times N_{(t)}^{-1}(1+O(t/N)),
\label{eq:binom-fraction-approx}
\end{align}
where $y_k = k-n/2$.

Now define 
\begin{align}
    \hat K^{t,\sigma}_{0}
    \coloneqq 
    \hat J^z + (s-\sigma - (N-t)/2)I,
\end{align}
and 
\begin{align}
    \hat K^{t,\sigma}
    \coloneqq 
    \hat K^{t,\sigma}_0 + (gn/2)I.
\end{align}
The purpose of this definition is exploit the identity 
\begin{align}
\hat K^{t,\sigma}_0|D^{N-t}_{s-\sigma + gk}\> = -gk |D^{N-t}_{s-\sigma + gk}\>,
\end{align}
from which we have
\begin{align}
    \hat K^{t,\sigma}|D^{N-t}_{s-\sigma + gk}\> = -g y_k|D^{N-t}_{s-\sigma + gk}\>,
\end{align}
which allows us to write
\begin{align}
(I/2 + \hat K^{t,\sigma}/N)|D^{N-t}_{s-\sigma + gk}\> 
=(1/2 -g y_k /N) |D^{N-t}_{s-\sigma + gk}\>  \label{K-id}.
\end{align}
Moreover, we have that 
\begin{align}
&\<j^{t,\sigma}_L|
e^{ -i \Delta    \hat K^{t,\sigma} }
|j^{t,\sigma}_L\>\notag\\
=& \cos^{n} (g\Delta/ 2) 
+
(-1)^j(-i)^n
\sin^{n} (g\Delta/ 2).
\end{align}
Then for odd $n$, 
for any even function $F(\cdot)$, we have that
for any choice of $\Delta$ that
\begin{align}
\frac{
|\<0^{t,\sigma}_L|
F(\hat K^{t,\sigma})
e^{ -i \Delta    \hat K^{t,\sigma} }
|0^{t,\sigma}_L\>|}
{
|\<1^{t,\sigma}_L|
F(\hat K^{t,\sigma})
e^{ -i \Delta    \hat K^{t,\sigma} }
|1^{t,\sigma}_L\>|} = 1.
\label{even-function-same-amplitude}
\end{align}
To see why this relation holds, observe that when $n$ is an odd integer, the index reflection $k \mapsto n-k$ maps the even parity support of the logical state $|0_{t,\sigma}^L\rangle$ has a one-to-one correspondence onto the odd parity support of $|1_{t,\sigma}^L\rangle$. 
Under this transformation, the binomial coefficients $\binom n k$ are symmetric, in the sense that $\binom n k = \binom n {n-k}$.
Furthermore, the eigenvalues of $\hat K^{t,\sigma}$ transform from $y_k$ to $y_{n-k} = -y_k$. Because $F(\cdot)$ is an even function, for any even $\bar z$, the monomial
$y_k^{\bar z}$ transform to $y_{n-k}^{\bar z} = y_k^{\bar z}$.

Furthermore, we have that
\begin{align}
c x^n \le 
\left|\arg \left(\frac{\<0^{t,\sigma}_L|
F(\hat K^{t,\sigma})
e^{ -i \Delta    \hat K^{t,\sigma} }
|0^{t,\sigma}_L\>}
{
\<1^{t,\sigma}_L|
F(\hat K^{t,\sigma})
e^{ -i \Delta    \hat K^{t,\sigma} }
|1^{t,\sigma}_L\>}\right)\right| \le c x.
\label{even-function-same-amplitude-arg}
\end{align}
for some real constants $c,c'$ for some positive constants $c$ and $c'$, where $x=g \Delta/2$.

The signal accumulates on the state
$\xi_0|0^t_\sigma\>
+\xi_1|1^t_\sigma\>$
now to yield the state
\begin{align}
e^{-i \Delta  \hat K^{t,\sigma}}(\xi_0|0^t_\sigma\>
+\xi_1|1^t_\sigma\>)    
\end{align}
Next we consider the logical codewords of a shifted gnu code with total number of qubits $N-t$, shift $s-\sigma$, and parameters $g$ and $n$, with logical codewords given by 
$|0^{t,\sigma}_L\>$ and $|1^{t,\sigma}_L\>$.
Then we consider the set of states
\begin{align}
    \mathcal K_j^{t,\sigma} 
    \coloneqq 
    \left\{
(\hat K_{t,\sigma,0}/g)^k |j^{t,\sigma}_L\>
    : k = 0,\dots, (n-1)/2
    \right\}
\end{align}
and denote the orthonormal vectors obtained from $\mathcal K_j^{t,\sigma}$ by the Gram-Schmidt procedure to be given by 
\begin{align}
\left\{ |\psi_{j,k}^{t,\sigma} \> : 
k=0,\dots, (n-1)/2 \right\}   . 
\end{align}
We write the Gram matrix associated with $\mathcal K^{t,\sigma}_j$ as 
\begin{align}
    G^{t,\sigma}_j 
    \coloneqq
    \sum_{a,b=0}^{(n-1)/2}
    \<j^{t,\sigma}_L|
    (\hat K_{t,\sigma,0}/g)^{a+b}
    |j^{t,\sigma}_L\>
    |a\>\<b|.
\end{align}
The error correction property of the shifted gnu code allows us to find that the Gram matrices $G^{t,\sigma}_0$ and $G^{t,\sigma}_1$ are identical, and hence we denote 
$G^{t,\sigma} \coloneqq G^{t,\sigma}_j$.
When the Gram matrix $G^{t,\sigma}$ is full rank, we have that 
\begin{align}
    |\psi^{t,\sigma}_{j,k} \>
    &= \sum_{a=0}^{(n-1)/2} 
    c_{k,a}
    (\hat K_{t,\sigma,0}/g)^a |j^{t,\sigma}_L\>, \label{Gram-linear-combi}\\
     &= \sum_{w=0}^{(n-1)/2} 
    b_{k,w}
    (I/2 + \hat K_{t,\sigma}/N)^w |j^{t,\sigma}_L\>
    \label{nice-linear-combi}
\end{align}
for some real constants $c_{k,a},b_{k,w}$.
Since all of the matrix elements of $G^{t,\sigma}$ are $O(1)$ when $n=O(1)$,
the matrix elements $ \<a|(G^{t,\sigma})^{-1/2})|k\>$ are also $O(1)$.

\begin{lemma}
Let $\epsilon > 0$, and let $t$ and $\sigma$ be non-negative integers such that 
$|t-2\sigma| \le c t^{1/2+\epsilon}$
for some positive constant $c$.
Then for all $k=0,\dots, \lfloor (n-1)/2 \rfloor$, and for any value of $\Delta$, we have 
\begin{align}
\frac{
|\<\psi_{1,k}^{t,\sigma}|
  e^{-i \Delta  \hat K^{t,\sigma} }
|1^t_\sigma\>|}
{
|\<\psi_{0,k}^{t,\sigma}|
  e^{-i \Delta  \hat K^{t,\sigma} }
|0^t_\sigma\>|
}
=
1+O(gt^{1/2+\epsilon}/N)
 +O( t/N).
\end{align}
\end{lemma}
\begin{proof}
Then, from \eqref{nice-linear-combi}
and \eqref{eq:binom-fraction-approx}, 
we can write $\<\psi_{j,k}^{t,\sigma}|
  e^{-i \Delta  \hat K^{t,\sigma} }
|j^t_\sigma\>$ as
\begin{align} 
&
\sum_{w=0}^{(n-1)/2}
b_{k,w}
\<j^{t,\sigma}_L|
(I/2 +\hat K_{t,\sigma}/N)^w 
   N_{(t)}^{-1/2} 
     e^{-i \Delta  \hat K^{t,\sigma} }
\notag\\
&\times\!\!\!
\sum_{{\rm mod}(a,2)=j}\!\!\!
(1+O(t/N)) \sqrt{
  (N/2+g y_a)_{(\sigma)} 
  (N/2-g y_a )_{(t-\sigma)}
  }
 \notag\\
 &\times
 2^{-(n-1)/2}
 \sqrt { \binom n a }
 |D^{N-t}_{s-\sigma+ga}\> .
 \label{complicated-expr}
\end{align}
For any non-negative integer $\mu$ such that $\mu\le  \sigma$ and $\mu \le t-\sigma,$ we can rewrite \eqref{complicated-expr} as
\begin{align} 
&
\sum_{w=0}^{(n-1)/2}
b_{k,w}
\<j^{t,\sigma}_L|
(I/2 +\hat K_{t,\sigma}/N)^w 
   N_{(t)}^{-1/2} 
(1+O(t/N))\notag\\
&\times\!\!\!\!\!\!
\!\!\!\! \sum_{{\rm mod}(a,2)=j}\!\!\!\!\!\!\!\!\!\!
 \sqrt{
  ((N/2-\mu)+g y_a)_{(\sigma-\mu)} 
  ((N/2-\mu)-g y_a )_{(t-\sigma-\mu)}
  }
 \notag\\
 &\times
F_\mu(y_a)
     e^{-i \Delta  \hat K^{t,\sigma} }
 \sqrt { \binom n a }
 |D^{N-t}_{s-\sigma+ga}\> ,
 \label{complicated-expr2}
\end{align}
where
$F_\mu(y_a)\coloneqq \prod_{v=0}^{  \mu-1  }
 \sqrt{ (N/2-v)^2 - g^2 y_a^2  }
$ is an even function of $y_a$.
Now we can always pick $w = \min\{\sigma,t-\sigma\}$ so that 
both $\sigma-\mu$ and $t-\sigma-\mu$
are $O(t^{1/2+\epsilon})$.
Then, expanding the logical codeword as a linear combination of Dicke states, \eqref{complicated-expr2} can be written as 
\begin{align} 
&
\!\!\!\! \sum_{{\rm mod}(a,2)=j}\!\!
\sum_{w=0}^{(n-1)/2}\!\!\!\!
b_{k,w}
\<D^{N-t}_{s-\sigma+ga}|
2^{-w}(1+ O(wg/N))
\notag\\
  &\times
(1+O(t/N)+O(gt^{1/2+\epsilon}/N)  )
\notag\\
&\times
 \frac{\binom n a  r'}
 {2^{n-1}}
 F_\mu(y_a)
  e^{-i \Delta  \hat K^{t,\sigma} }
 |D^{N-t}_{s-\sigma+ga}\> ,
 \label{complicated-expr3}
\end{align}
where 
\begin{align}
r' = 
   N_{(t)}^{-1/2} 
 \sqrt{(N/2-\mu)_{\sigma-\mu}
(N/2-\mu)_{t-\sigma-\mu}}.
\end{align}
We further simplify \eqref{complicated-expr3} as
\begin{align}
&
r'' \<j^{t,\sigma}_L |  
F_\mu(\hat K^{t,\sigma})
  e^{-i \Delta  \hat K^{t,\sigma} }
|j^{t,\sigma}_L\> 
\notag\\
  &\times
(1+O(t/N)+O(gt^{1/2+\epsilon}/N) ).
 \label{complicated-expr4}
\end{align}
for some real number $r''$ that does not depend on $j$. 

Since
 $F_\mu(y_a)$ is an even power series of $y_a$, we use \eqref{even-function-same-amplitude}, which is an equality that is independent of $\Delta$, to show the lemma.
\end{proof}
If we have $ g/\sqrt N = \Omega(1)$, the above lemma simplifies to give
\begin{align}
\frac{
|\<\psi_{1,k}^{t,\sigma}|
  e^{-i \Delta  \hat K^{t,\sigma} }
|1^t_\sigma\>|}
{
|\<\psi_{0,k}^{t,\sigma}|
  e^{-i \Delta  \hat K^{t,\sigma} }
|0^t_\sigma\>|
}
=
1+O(gt^{1/2+\epsilon}/N).
\end{align}
which means that there is a constant $c>0$ such that
\begin{align}
e^{-c gt^{1/2+\epsilon}/N }
\le
\frac{|\<\psi_{1,k}^{t,\sigma}|
  e^{-i \Delta  \hat K^{t,\sigma} }
|1^t_\sigma\>|}
{|\<\psi_{0,k}^{t,\sigma}|
  e^{-i \Delta  \hat K^{t,\sigma} }
|0^t_\sigma\>|}
\le 
e^{c gt^{1/2+\epsilon}/N },
\label{amp-shift-result}
\end{align}
After multiple rounds, the effective amplitude shifts on the logical codespace can become severe.
Later, we discuss how we can reverse this amplitude-shifting process.

 \section{Amplitude rebalancing}

Let $x= g \Delta'/2$. 
Here, we calculate 
$\<j_{h}|U_{\Delta'}|j_L\>$ and focus on the shifted gnu code with $n=3$.
Note that we have
\begin{align}
&\<j_{h}|U_{\Delta'}|j_L\>\notag\\
=&\frac{1}{2}
\left(
\sqrt {3+(-1)^j h}
\<j_L|U_{\Delta'}|j_L\>
+
\sqrt {1-(-1)^j h}
\<q_j|U_{\Delta'}|j_L\>
\right).
\end{align}
We calculate
\begin{align}
\<j_L|U_{\Delta'}|j_L\>
=& 
\left(
\cos^3 x + (-1)^j i \sin^3 x
\right) e^{-i \Delta' (N/2-s)e^{3ix}}
\end{align}
and 
\begin{align}
\<q_j|U_{\Delta'}|j_L\>
=(\sqrt 3/2)
(-1)^j 
 e^{-i \Delta' (N/2-s)e^{3ix}}
 e^{(-1)^j ix}
 \sin 2x.
\end{align}
Therefore, we find that
\begin{align}
&\frac
{ \<1_{h}|U_{\Delta'}|1_L\> }
{ \<0_{h}|U_{\Delta'}|0_L\> }    \notag\\
=&
\frac
{\sqrt {3-h}
\<1_L|U_{\Delta'}|1_L\>
+
\sqrt {1+h}
\<q_1|U_{\Delta'}|1_L\>
}
{
\sqrt {3+h}
\<0_L|U_{\Delta'}|0_L\>
+
\sqrt {1-h}
\<q_0|U_{\Delta'}|0_L\>
}\\
=&
\frac
{\sqrt {3-h}
\left(
\cos^3 x - i \sin^3 x
\right)
-
\sqrt {1+h}(\sqrt 3/2)
 e^{- ix} \sin 2x
}
{
\sqrt {3+h}
\left(
\cos^3 x +  i \sin^3 x
\right)
+
\sqrt {1-h}(\sqrt 3/2)  e^{ix} \sin 2x
}.
\end{align}
Expanding $e^{\pm i x} = \cos x \pm i \sin x$, we get
\begin{widetext}
    \begin{align}
\frac
{ \<1_{h}|U_{\Delta'}|1_L\> }
{ \<0_{h}|U_{\Delta'}|0_L\> }  
=&        
 \frac
{\sqrt {3-h}
\left(
\cos^3 x - i \sin^3 x
\right)
-
\sqrt {1+h}(\sqrt 3/2)
\cos x \sin 2x
+
i \sqrt {1+h}(\sqrt 3/2)
\sin x \sin 2x
}
{
\sqrt {3+h}
\left(
\cos^3 x +  i \sin^3 x
\right)
+
\sqrt {1-h}(\sqrt 3/2) 
\cos x \sin 2x
+
i \sqrt {1-h}(\sqrt 3/2) 
\sin x \sin 2x
} 
\\
=&        
 \frac
{
\left(
\sqrt {3-h} \cos^3 x -
\sqrt {1+h}(\sqrt 3/2)
\cos x \sin 2x
\right)
- i \sqrt {3-h} \sin^3 x
+
i \sqrt {1+h}(\sqrt 3/2)
\sin x \sin 2x
}
{
\left(
\sqrt {3+h}\cos^3 x +
\sqrt {1-h}(\sqrt 3/2) 
\cos x \sin 2x
\right)
+  i \sqrt {3+h} \sin^3 x
+
i \sqrt {1-h}(\sqrt 3/2) 
\sin x \sin 2x
}.
\end{align}  
Next, we calculate $\frac
{ \<1_{h}|U_{\Delta'}|1_L\> }
{ \<0_{h}|U_{\Delta'}|0_L\> }  $.
Now we write
\begin{align}
\frac{ \<1_{h}|U_{\Delta'}|1_L\> }
{ \<0_{h}|U_{\Delta'}|0_L\> }
&=
\frac{A_1 - i A_2}{B_1 + i B_2}
\\
&=
\frac{(A_1 - i A_2)
(B_1 - i B_2)}{B_1^2 + B_2^2}\\
&=
\frac{
(A_1 B_1- A_2 B_2)
- i (A_2 B_1 + A_1 B_2 )
}{B_1^2 + B_2^2}
,
\end{align}
where $A_1, A_2,B_1,B_2$ are real numbers which are given as
\begin{align}
    A_1 &= \sqrt {3-h} \cos^3 x -
\sqrt {1+h}(\sqrt 3/2)
\cos x \sin 2x\\
A_2 &= \sqrt {3-h} \sin^3 x
- \sqrt {1+h}(\sqrt 3/2)
\sin x \sin 2x\\
B_1&= \sqrt {3+h}\cos^3 x +
\sqrt {1-h}(\sqrt 3/2) 
\cos x \sin 2x\\
B_2 &= \sqrt {3+h} \sin^3 x
+
 \sqrt {1-h}(\sqrt 3/2) 
\sin x \sin 2x.
\end{align}

Next we calculate the argument of $\frac
{ \<1_{h}|U_{\Delta'}|1_L\> }
{ \<0_{h}|U_{\Delta'}|0_L\> }  $.
Namely, 
\begin{align}
{\rm arg} \left(    
\frac
{ \<1_{h}|U_{\Delta'}|1_L\> }
{ \<0_{h}|U_{\Delta'}|0_L\> }  
\right)
=&
\arctan
\left( \frac{-A_2}{A_1} \right)
-
\arctan
\left( \frac{B_2}{B_1} \right) 
\notag\\
=&
\left(\frac{\sqrt{1+h}}{\sqrt{3-h}}
-
\frac{ \sqrt{1-h}}{\sqrt{3+h}}\right) \sqrt 3 x^2
+\frac{8 h^2 x^3}
{9-h^2}
   +
   \left(
   \frac{\sqrt{1+h} (3+5 h)}
   {(3-h)^{3/2}}
   -
   \frac{\sqrt{1-h} (3-5 h)}{ (3+h)^{3/2}}
   \right)
   \frac{2 x^4}{\sqrt 3}
   +O\left(x^5\right),
\end{align}
which implies that the phase accumulation here scales as 
$\left(\frac{\sqrt{1+h}}{\sqrt{3-h}}
-
\frac{ \sqrt{1-h}}{\sqrt{3+h}}\right) \sqrt 3 x^2 + 
O(x^3)$.
We also have that 
\begin{align}
\frac
{ \<1_{h}|U_{\Delta'}|1_L\> }
{ \<0_{h}|U_{\Delta'}|0_L\> }  
&=
\sqrt{\frac{3-h}{3+h}}
-
\frac{\sqrt 3 x
(
\sqrt{(1-h)(3-h)}
+
\sqrt{(1+h)(3+h)}
)}
{3+h} + O(x^2).
\end{align}

Next we calculate
\begin{align}
&  \frac
{ \<\bar 1_{h}|U_{\Delta'}|1_L\> }
{ \<\bar 0_{h}|U_{\Delta'}|0_L\> } \notag\\
=&
\frac
{\sqrt {1+h}
\<1_L|U_{\Delta'}|1_L\>
-
\sqrt {3-h}
\<q_1|U_{\Delta'}|1_L\>
}
{
\sqrt {1-h}
\<0_L|U_{\Delta'}|0_L\>
-
\sqrt {3+h}
\<q_0|U_{\Delta'}|0_L\>
}\\
=&
\frac
{\sqrt {1+h}
\left(
\cos^3 x - i \sin^3 x
\right)
+
\sqrt {3-h}(\sqrt 3/2)
 e^{- ix} \sin 2x
}
{
\sqrt {1-h}
\left(
\cos^3 x +  i \sin^3 x
\right)
-
\sqrt {3+h}(\sqrt 3/2)  e^{ix} \sin 2x
}\notag\\
=&
\frac
{\sqrt {1+h}
\cos^3 x
 - i \sqrt {1+h} \sin^3 x
+
\sqrt {3-h}(\sqrt 3/2)
 \cos{x} \sin 2x
- i
\sqrt {3-h}(\sqrt 3/2)
 \sin{x} \sin 2x
}
{
\sqrt {1-h}
\cos^3 x 
+ i  \sqrt {1-h}    \sin^3 x
-
\sqrt {3+h}(\sqrt 3/2)  \cos x \sin 2x
-
i \sqrt {3+h}(\sqrt 3/2)  \sin x \sin 2x
}\notag\\
=&
\frac
{
\left(
\sqrt {1+h} \cos^3 x
+
\sqrt {3-h}(\sqrt 3/2)
 \cos{x} \sin 2x
\right)
 - i \left(\sqrt {1+h} \sin^3 x
+
\sqrt {3-h}(\sqrt 3/2)
 \sin{x} \sin 2x \right)
}
{
\left(
\sqrt {1-h} \cos^3 x 
-
\sqrt {3+h}(\sqrt 3/2)  \cos x \sin 2x
\right)
+ i \left( \sqrt {1-h}    \sin^3 x
-
 \sqrt {3+h}(\sqrt 3/2)  \sin x \sin 2x\right)
}.
\end{align}
Similarly, we write 
\begin{align}
 &  \frac
{ \<\bar 1_{h}|U_{\Delta'}|1_L\> }
{ \<\bar 0_{h}|U_{\Delta'}|0_L\> } 
=
\frac{\bar A_1  - i \bar A_2}
{\bar B_1 + i B_2},
\end{align}
where 
\begin{align}
 \bar   A_1 &=\sqrt {1+h} \cos^3 x
+
\sqrt {3-h}(\sqrt 3/2)
 \cos{x} \sin 2x\\
\bar A_2 &= \sqrt {1+h} \sin^3 x
+
\sqrt {3-h}(\sqrt 3/2)
 \sin{x} \sin 2x\\
\bar  B_1 &=\sqrt {1-h} \cos^3 x 
-
\sqrt {3+h}(\sqrt 3/2)  \cos x \sin 2x\\
\bar B_2 &=
 \sqrt {1-h}    \sin^3 x
-
 \sqrt {3+h}(\sqrt 3/2)  \sin x \sin 2x.
\end{align}
We have 
\begin{align}
&{\rm arg}\left(
{ \<\bar 1_{h}|U_{\Delta'}|1_L\> }
{ \<\bar 0_{h}|U_{\Delta'}|0_L\> } \right)\notag\\
=&
-\arctan\left(
\frac
{-\bar A_2}
{
\bar A_1
}
\right)
-
\arctan\left(
\frac
{\bar B_2}
{
\bar B_1}
\right)\\
=&
\left(\frac{ \sqrt{3+h}}{\sqrt{1-h}}
-
\frac{ \sqrt{3-h}}{\sqrt{1+h}}\right) \sqrt 3 x^2
+
\frac{8 \left(2+h^2\right) x^3}
{1-h^2}
   +
   \left(-
   \frac{ \sqrt{3-h} (13-5 h)}
   { (1+h)^{3/2}}
   +
   \frac{ \sqrt{3+h} (13+5h)}
   { (1-h)^{3/2}}\right)
   \frac{2x^4}{\sqrt 3}
   +O\left(x^5\right),
\end{align}
which shows that the phase accumulation is $\left(\frac{ \sqrt{3+h}}{\sqrt{1-h}}
-
\frac{ \sqrt{3-h}}{\sqrt{1+h}}\right) \sqrt 3 x^2 + O(x^3)$.
Furthermore,
we have 
\begin{align}
&
{ \<\bar 1_{h}|U_{\Delta'}|1_L\> }
{ \<\bar 0_{h}|U_{\Delta'}|0_L\> } \notag\\
=&
\frac
{(\bar A_1 \bar B_1 - \bar A_2 \bar B_2)
-i(\bar A_2 \bar B_1 + \bar A_1 \bar B_2)}
{\bar  B_1^2 + \bar B_2^2}\notag\\
=&
\sqrt{\frac{1+h}{1-h}}
+ 
\frac{
\sqrt 3
\left(
\sqrt{(1-h)(3-h)}
+
\sqrt{(1+h)(3+h)}
\right) x
}{1-h} + O(x^2).
\end{align}

\end{widetext} 
Now consider an asymmetric random walk on the integer line,
where the value of the walk, starting at the value of 0, increases by 1 or decreases by 1 with probabilities 3/8 and 5/8 respectively, until it reaches a target value of $-w$. 
We like to calculate the number of steps of the walk for us to almost surely terminate the walk.

Now we define $E_k$ as the expected number of steps to reach position $-w$ starting from position $k$. Then, for all $ k > -w $, we use the recurrence
$E_k = 1+  \frac{3}{8} E_{k+1} + \frac{5}{8} E_{k-1}.$
The corresponding characteristic equation to the homogeneous part is $3r^2- 8r +5 = 0$,
which factorises to $(3r-5)(r-1)=0$. Thus,
the solution to the homogeneous part of this linear recurrence is $(5/3)^k A + B$ for some constants $A$ and $B$. A particular solution to the inhomogeneous part of the linear recurrence is $4k$, implying that the general solution to this linear recurrence is $E_k=  (5/3)^k A + B + 4k$.
Imposing the boundary condition $E_{-w} = 0,$ we find that 
$B = 4w - (5/3)^{-w} A$.
Now we calculate $E_{-w+1}$ using the idea of Dyck paths and Catalan numbers. 
Namely,
\begin{align}
    E_{-w+1}
    = 
    \sum_{n=0}^\infty
    (2n+1)C_n (3/8)^n  (5/8)^{n+1}
    = 4,
\end{align}
where $(2n+1)$ represents the length of the walk, and the Catalan number $C_n = \frac{1}{n+1} \binom {2n}{n}$ counts the number of Dyck paths of length $n$ and $C_n (3/8)^n (5/8)^{n+1}$ counts the probability that a length $2n+1$ walk that starts from $-w+1$ terminates at $-w$.
The linear recurrence solution implies that 
$E_{-w+1} -E_{-w}
= 4 + A(5/3).$
But since $E_{-w+1}=4$ and $E_{-w} = 0$, this means that $A=0$.
Therefore,
\begin{align}
    E_k = 4k+4w,
\end{align}
which implies that
\begin{align}
E_0 &=  4 w.
\end{align} 
This means that on average, 
a linear number of steps are required to terminate the walk.
Hence, the Markov's inequality shows that the probability that a walk is not terminated after 
$(4 w)^{1+\delta}$ 
steps is at most 
$(4 w)^{-\delta}$.

Our above random walk analysis shows that 
with at least 
$\nu = \Theta(\mathbb E[S_w]^{1+4\delta})$ rebalancing steps,
allowing projections onto both 
$S^0_\pm$
and 
$S^1_\pm$,
we can with probability at least 
$1- (4 w)^{-\delta}$ (and hence almost surely for large $N$)
bring the amplitude shift within a constant factor of 1.
Now, 
\begin{align}
\mathbb E[S_w] 
&= 
O(w (g/N)(N/g)^{1/2+\epsilon})   \\
&=
O(w (g/N)^{1/2-\epsilon}).  
\end{align}

\section{Evaluating the FI of ECSense}


As argued earlier, 
when a timestep has $t$ deletions, the random shifts are almost surely concentrated around $t/2 \pm \Theta(t^{1/2+\epsilon})$,
in which case the phases $\phi_{t,\sigma}$ vary from $\Theta(x)$ to $\Theta(x^n)$,
where $x = g b \tau/ 2$.
Hence the total phase is almost surely at least 
$\Theta(r (g b \tau)^3 )$ which is a function of $b$. Then the following lemma gives the FI that we can extract from our protocol.
 
\begin{lemma}
\label{lem:signal-within-code}
Let $\Phi$ be a continuous function of $b$, and let 
$|\psi(\Phi)\> = 
\cos \phi |a_0\> + e^{i \Phi}\sin\phi |a_1\> $ where
$|a_0\>$ and $|a_1\>$ be orthonormal vectors.
Let 
$|a_+\>= \frac{1}{\sqrt 2} (|a_0\> + |a_1\>) $
 and 
 $|a_-\>= \frac{1}{\sqrt 2} (|a_0\> - |a_1\>) $.
Then the FI of $b$ by performing projective measurements on $|\psi(\Phi)\>$ with respect to the projectors $|a_+\>\<a_+|$ and $|a_-\>\<a_-|$ when $\phi = \pi/4$ is
$(\frac{d\Phi}{db})^2$.
In general the FI is 
\begin{align}
  F  =   \frac{\sin^2 2\phi \sin^2 \Phi}
    {    (1-\sin^2 2\phi \cos^2 \Phi)}
    \left(\frac{\partial\Phi}{\partial b}\right)^2.
\end{align}
\end{lemma}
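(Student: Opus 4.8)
\medskip

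The plan is to compute the Fisher information of $\theta$ directly from its definition, treating $\Phi=\Phi(\theta)$ as the only $\theta$-dependent quantity through the chain rule. The measurement has exactly two outcomes, corresponding to the projectors $|a_+\rangle\langle a_+|$ and $|a_-\rangle\langle a_-|$, so the statistics are Bernoulli and the FI reduces to a single-parameter scalar. First I would expand $|\psi(\Phi)\rangle$ in the $\{|a_+\rangle,|a_-\rangle\}$ basis: writing $|a_0\rangle = (|a_+\rangle+|a_-\rangle)/\sqrt2$ and $|a_1\rangle=(|a_+\rangle-|a_-\rangle)/\sqrt2$, the amplitude on $|a_+\rangle$ becomes $\tfrac{1}{2}(\cos\phi + e^{i\Phi}\sin\phi)$ and on $|a_-\rangle$ it is $\tfrac{1}{2}(\cos\phi - e^{i\Phi}\sin\phi)$.

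\medskip

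Next I would square these amplitudes to get the outcome probabilities. Denoting $p_+ = |\langle a_+|\psi(\Phi)\rangle|^2$, a short computation gives
\begin{align}
p_\pm = \tfrac{1}{2}\bigl(1 \pm \sin 2\phi\,\cos\Phi\bigr),
\end{align}
using $|\cos\phi \pm e^{i\Phi}\sin\phi|^2 = 1 \pm 2\sin\phi\cos\phi\cos\Phi$. For a two-outcome (Bernoulli) distribution with success probability $p=p_+$, the classical Fisher information with respect to $\theta$ is the standard expression
\begin{align}
F = \frac{1}{p_+(1-p_+)}\left(\frac{\partial p_+}{\partial\theta}\right)^2 .
\end{align}
Since $p_+$ depends on $\theta$ only through $\Phi$, the chain rule gives $\partial p_+/\partial\theta = (\partial p_+/\partial\Phi)(\partial\Phi/\partial\theta)$, and $\partial p_+/\partial\Phi = -\tfrac{1}{2}\sin 2\phi\,\sin\Phi$.

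\medskip

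Finally I would assemble the pieces. One has $p_+(1-p_+) = \tfrac{1}{4}(1-\sin^2 2\phi\,\cos^2\Phi)$, and $(\partial p_+/\partial\Phi)^2 = \tfrac{1}{4}\sin^2 2\phi\,\sin^2\Phi$, so the factors of $1/4$ cancel and
\begin{align}
F = \frac{\sin^2 2\phi\,\sin^2\Phi}{1-\sin^2 2\phi\,\cos^2\Phi}\left(\frac{\partial\Phi}{\partial\theta}\right)^2,
\end{align}
which is the general formula claimed. The special case $\phi=\pi/4$ is then immediate: $\sin 2\phi = 1$, so the prefactor collapses to $\sin^2\Phi/(1-\cos^2\Phi) = 1$, leaving $F = (d\Phi/d\theta)^2$. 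I do not expect any genuine obstacle here; the only point requiring minor care is the relative phase bookkeeping when expanding $|\psi(\Phi)\rangle$ in the plus-minus basis and confirming the cross term carries the factor $\cos\Phi$ rather than $e^{\pm i\Phi}$, which follows from taking the real part in $|\cos\phi\pm e^{i\Phi}\sin\phi|^2$. The result is exact and holds for all $\Phi$, so no approximation or limiting argument is needed.
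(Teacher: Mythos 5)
Your proposal is correct and follows essentially the same route as the paper: expand in the $\{|a_+\rangle,|a_-\rangle\}$ basis, obtain $p_\pm = \tfrac{1}{2}(1\pm\sin 2\phi\cos\Phi)$, and apply the chain rule through $\Phi$; your use of the compact Bernoulli form $F = (\partial_\theta p_+)^2/\bigl(p_+(1-p_+)\bigr)$ is algebraically identical to the paper's two-term sum $\sum_\pm p_\pm^{-1}(\partial_\theta p_\pm)^2$ since $\partial_\theta p_- = -\partial_\theta p_+$. The only cosmetic wrinkle is the spurious overall $1/\sqrt 2$ in the lemma's statement of $|\psi(\Phi)\rangle$ (which would make $p_++p_-=\tfrac 1 2$); both you and the paper implicitly drop it, and the final formula is unaffected.
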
  

\begin{proof}[Proof of Lemma \ref{lem:signal-within-code}]
Let us define $\bar U_\Phi$, so that 
\begin{align}
\bar U_{\Phi} |a_+\> 
&=
\frac{|a_0\> + e^{i\Phi}|a_1\>}{\sqrt 2}\\
\bar U_{\Phi} |a_-\> 
&=
\frac{|a_0\> - e^{i\Phi}|a_1\>}{\sqrt 2}.
\end{align}
Note that 
$|\psi(\Phi)\>\<\psi(\Phi)| = 
\bar U_\Phi 
|a_+\>\<a_+| 
\bar U_\Phi ^\dagger$.
Hence we find that 
\begin{align}
\tr |a_+\>\<a_+| \psi(\Phi)\>\<\psi(\Phi)|
    &=\<a_+|  \bar U_\Phi |a_+\>\<a_+| \bar U_\Phi ^\dagger|a_+\> \notag\\
    &=|\<a_+|  \bar U_\Phi |a_+\>|^2\\
\tr |a_-\>\<a_-| \psi(\Phi)\>\<\psi(\Phi)|
    &=\<a_-|  \bar U_\Phi |a_+\>\<a_+| \bar U_\Phi ^\dagger|a_-\>\notag\\
    &=|\<a_-|  \bar U_\Phi |a_+\>|^2.
\end{align}
The FI from measurement corresponding to the projectors 
$|a_+\>\<a_+|$ and $|a_-\>\<a_-|$ is
\begin{align}
I(\theta)
= \frac{1}{p_+}\left(\frac{\partial p_+}{\partial\theta}\right)^2
+ \frac{1}{p_-}\left(\frac{\partial p_-}{\partial\theta}\right)^2,
\end{align}
where 
\begin{align}
p_+
&=
\< a_+ |  \bar U_{\Phi} |a_+\>\<a_+| \bar U_{\Phi}^\dagger  | a_+ \>
\notag\\
p_-
&=
\< a_- |  \bar U_{\Phi} |a_-\>\<a_-| \bar U_{\Phi}^\dagger  | a_- \>.
\end{align}
When $\phi = \pi/4$,
$p_+ = \cos^2(\Phi/2)$ and 
$p_- = \sin^2(\Phi/2).$
Hence
$\frac{\partial p_\pm } {\partial \theta} 
=\mp
  \sin(\Phi/2) \cos(\Phi/2) \frac{d\Phi}{d\theta} $
and it follows that 
\begin{align}
\frac{1}{p_+}\left(\frac{\partial p_+ } {\partial \theta} \right)^2
&=
 \left(\frac{\partial \Phi}{\partial\theta} \right)^2  \sin^2( \Phi / 2) 
\\
\frac{1}{p_-}\left(\frac{\partial p_- } {\partial \theta} \right)^2
&=
 \left(\frac{\partial \Phi}{\partial\theta} \right)^2
\cos^2( \Phi / 2) .
\end{align}
Adding these two terms above gives the FI.

Now let us consider the case for general $\phi$. Then
\begin{align}
    p_+ &= \frac{1}{2}
    (\cos \phi + \sin \phi e^{i\Phi})
    (\cos \phi + \sin \phi e^{-i\Phi})
    \notag\\
    &=
    \frac{1}{2}
    (
     1 + \sin \phi \cos \phi (e^{i\Phi}+e^{-i\Phi})
    )
    \notag\\
    &=
    \frac{1}{2}
    (
     1 +  \sin 2 \phi  \cos \Phi
    ).
\end{align}
We also have 
\begin{align}
    p_- &= \frac{1}{2}
    (\cos \phi - \sin \phi e^{i\Phi})
    (\cos \phi - \sin \phi e^{-i\Phi})
    \notag\\
    &=
    \frac{1}{2}
    (
     1 - \sin \phi \cos \phi (e^{i\Phi}+e^{-i\Phi})
    )
    \notag\\
    &=
    \frac{1}{2}
    (
     1 -  \sin 2 \phi  \cos \Phi
    ).
\end{align}
Hence it follows that
\begin{align}
    \frac{\partial}{\partial \theta} p_\pm
    &=\mp \frac{1}{2}\sin 2\phi \sin \Phi 
    \frac{\partial \Phi}{\partial \theta}.
\end{align}
Therefore the FI is
\begin{align}
    &
    \frac{\frac{1}{4} \sin^2 2\phi \sin^2 \Phi 
    \Bigl(\frac{\partial \Phi}{\partial \theta}\Bigr)^2}
    {\frac{1}{2}(1+\sin 2\phi \cos \Phi) }
    +
    \frac{\frac{1}{4} \sin^2 2\phi \sin^2 \Phi 
    \bigl(\frac{\partial \Phi}{\partial \theta}\Bigr)^2}
    {\frac{1}{2}(1-\sin 2\phi \cos \Phi) }
    \notag\\
    =&
    \frac{ \sin^2 2\phi \sin^2 \Phi 
    \Bigl(\frac{\partial \Phi}{\partial \theta}\Bigr)^2}
    {2(1+\sin 2\phi \cos \Phi) }
    +
    \frac{\sin^2 2\phi \sin^2 \Phi 
    \Bigl(\frac{\partial \Phi}{\partial \theta}\Bigr)^2}
    {2(1-\sin 2\phi \cos \Phi) }    
    \notag\\
    =&
    \frac{1}{2}\sin^2 2\phi \sin^2 \Phi 
    \Bigl(\frac{\partial \Phi}{\partial \theta}\Bigr)^2
    \frac
    {1-\sin 2\phi \cos \Phi + 1 + \sin 2\phi \cos \Phi}
    {1- \sin^2 2\phi \cos^2 \Phi}
    \notag\\
    =&
    \frac{\sin^2 2\phi \sin^2 \Phi }{1- \sin^2 2\phi \cos^2 \Phi}
    \Bigl(\frac{\partial \Phi}{\partial \theta}\Bigr)^2.
\end{align} 
This gives the result.
\end{proof}

The state at the end of a successful run of each iteration of ECSense has the form
$a' |0_L\> + b' e^{i\Phi} |1_L\>$, 
where $a' = \cos \phi$ and $b' = \sin \phi$.
In the phase accumulation stage of the protocol, the amplitude shift becomes very far from 1.
Fortunately, in the amplitude rebalancing stage, we can get the amplitude shift to move back close to a constant factor of 1, without compromising the information of the phase,
provided that $g =\Theta(N^\alpha)$ is not too large. 

Denote the FI at the end of each iteration in ECSense as ${F'}$. Then, almost surely, this FI can be calculated as follows. 
When $\phi = \pi/4 + \epsilon$, we have
$\sin 2 \phi 
=  \sin( \pi / 2+ 2 \epsilon) 
=  \sin( \pi / 2 ) \cos (2 \epsilon) + \sin(2\epsilon) \cos(\pi/2)
= \cos (2 \epsilon).$
When $\epsilon = 0$, the prefactor $f =  \frac{\sin^2 2\phi \sin^2 \Phi}{(1-\sin^2 2\phi \cos^2 \Phi)}$ is 
$f=\sin^2\Phi / (1-\cos^2\Phi) =  1$.
Let $\delta_\phi > 0$. 
When $|\epsilon| \le \delta_\phi$,
it follows that
$|\cos(2 \epsilon) - (1-4\epsilon^2)| \le \frac{16}{3}\delta_\phi^4$.
Hence, for small $\delta_\phi$ and small $\Phi$, we have
\begin{align} 
\left|{F'}  - ( 1 -4 \delta_\phi^2 \sin^{-2} \Phi)  
\left(\frac{\partial\Phi}{\partial b}\right)^2
\right|\le 
\frac{16 \delta_\phi}{3}
\left(\frac{\partial\Phi}{\partial b}\right)^2.
\label{prefactor-perturbation}
\end{align} 

Hence the FI of at the end of each iteration of ECSense is almost surely
\begin{align}
    {F'}  = 
    \Theta(   (r (gb\tau)^2 g \tau)^2 ) .
\end{align}

We like to solve a linear recurrence of the form 
\begin{align}
    b_k = a b_{k-1} + c,
\end{align}
with the initial condition $b_0 = 1/2$.
The characteristic equation for the homogeneous part of the recurrence is 
$r - a = 0$, and the solution is trivially $r=a$. For the particular solution, we try
$d_k = D$ for some constant $D$. 
Then we get $D = a D + c$ which implies that 
$D(1-a)=c$ and so $D = c/(1-a)$.
Therefore the general solution is of the form 
$b_k = A a^k + D$ for some constants $A,D$. 
Solving for the initial condition, we get 
$1/2 = A + D = A + c/(1-a)$.
Therefore, $A = 1/2 - c/(1-a)$.

\section{Technical calculations: Deletions}
Here, we evaluate what happens to an arbitrary symmetric state after $t$ deletions occur.

\begin{proof}[Proof of Lemma \ref{lem:delete}]
Recall that the input state before deletions is $|\psi\> = \sum_w a_w |D^N_w\>$ in the Dicke basis. 
We can alternatively write 
$|\psi\> = \sum_w \bar a_w |H^N_w\>$
where $\bar a_w = a_w / \sqrt{\binom N w}$.

We first determine the domain of the summation index in the Vandermonde decomposition
\begin{align}
|H^N_w\> = \sum_a |H^t_a\> \otimes |H^{N-t}_{w-a}\>.
\end{align}
We must have $0 \le a\le t$ and $0 \le w-a\le N-t$.
Together, these inequalities are equivalent to 
$\max\{ 0, t+w - N \}\le  a \le \min \{ w,t \} $. Hence it follows that 
\begin{align}
|H^N_w\> = \sum_{a \in A_w} |H^t_a\> \otimes |H^{N-t}_{w-a}\>,
\label{eq:vandermonde}
\end{align}
where 
\begin{align}
A_w = \{ a: 0,t+w-N\le a \le w,t\} .
\end{align}
Applying the definition of the partial trace and its linearity, we see that $\tr_t(|\psi\>\<\psi|)$ is equal to 
\begin{align}
&
\sum_{{\bf x} \in \{ 0,1\}^t} 
\sum_{w,v =0 }^N  \bar a_w \bar a_v^*
\left(\< {\bf x} | \otimes I^{\otimes N-t}\right)
|H^N_w\>\<H^N_v|
\notag\\
&\quad\quad\quad\quad\quad\quad\quad\quad\quad\quad\times
\left(| {\bf x} \> \otimes I^{\otimes N-t}\right).
\end{align}
Using \eqref{eq:vandermonde}, we find that $|H^N_w\>\<H^N_v|
$ is equal to 
\begin{align}
 \sum_{a \in A_w}  \sum_{b \in A_v}
  |H^t_a\> \<H^t_b| 
 \otimes |H^{N-t}_{w-a}\>\<H^{N-t}_{v-b}|.
 \end{align}
 Hence we find that $\tr_t(|\psi\>\<\psi|)$ is equal to 
 \begin{align}
\sum_{{\bf x} \in \{ 0,1\}^t} 
\sum_{w,v =0 }^N 
 \sum_{\substack{ a \in A_w\\b \in A_v}}
\bar a_w \bar a_v^*
\< {\bf x}  |H^t_a\> \<H^t_b|  {\bf x} \> 
|H^{N-t}_{w-a}\>\<H^{N-t}_{v-b}| .
\end{align}
Now note that $\< {\bf x}  |H^t_a\> = 1$ when the Hamming weight of ${\bf x}$ is equal to $a$, and $\< {\bf x}  |H^t_a\> = 0$ otherwise. 
Hence for non-trivial contributions to the above sum, we must have $a=b.$
Using these facts and moving the summation over ${\bf x}$ inside, we find that 
$ \tr_t(  |\psi\> \<\psi| )$ equals to 
 \begin{align}
 \sum_{a = 0}^t 
 \binom t a 
 \sum_{w,v=0}^N \bar a_w \bar a_v^*
 \delta[a \in A_{w}]  \delta[a \in A_{v}]
 |H^{N-t}_{w-a}\>\<H^{N-t}_{v-a} |. 
 \end{align}
 Hence
 \begin{align}
 \tr_t (|\psi\>\<\psi| )
 = 
 \sum_{a=0}^t \binom t a |\phi_a\>\<\phi_a|,
 \end{align}
 where 
 \begin{align}
 |\phi_a\> 
 &= \sum_{w} \delta[a \in A_w] \bar a_w |H^{N-t}_{w-a}\>\notag\\
 &= \sum_{w=a}^{N-t+a}  \bar a_w |H^{N-t}_{w-a}\>\notag\\
 &= \sum_{w=a}^{N-t+a} \bar a_w \sqrt{\binom{N-t}{w-a}}|D^{N-t}_{w-a}\>\notag\\
  &= \sum_{w=a}^{N-t+a}   a_w 
  \frac
  {\sqrt{\binom{N-t}{w-a}}}
  {\sqrt{\binom{N}{w}}}
  |D^{N-t}_{w-a}\>.
 \end{align}
 Since $|\phi_a\> = |\psi\>_a$, the result follows.
\end{proof}

\section{Technical calculations: Amplitude damping errors}
Here, we study what happens after amplitude damping errors afflict a pure symmetric state.
\begin{proof}[Proof of Lemma \ref{lem:AD-representation}]
For non-negative integer $x$ such that $x \le N$, 
let $({\bf 1}^{x}, {\bf 0}^{N-x})$ denote a length $N$ binary vector that has its first $x$ bits equal to 1 and the remaining bits equal to 0. Let ${\bf 1}^{x}$ denote a ones vector of length $x$ and ${\bf 0}^{N-x}$ be a zeros vector of length $N-x$.
Then
\begin{align}
&A_{({\bf 1}^{x}, {\bf 0}^{N-x})} |H^N_w\> \notag\\
=&
\sum_{j=0}^x (A_{{\bf 1}^{x}} |H^N_j\>) \otimes ( A_{{\bf 0}^{N-x}}  |H^{N-x}_{w-j}\>)
\notag\\
=& 
 (A_{{\bf 1}^{x}} |H^N_x\>) \otimes ( A_{{\bf 0}^{N-x}}  |H^{N-x}_{w-x}\>)
\notag\\
=& 
 ( \gamma^{x/2} |H^N_0\>) \otimes (  (1-\gamma)^{(w-x)/2}  |H^{N-x}_{w-x}\>)
\notag\\
=& \sqrt{q_w(x)} |0\>^{\otimes x} \otimes |H^{N-x}_{w-x}\>,
\end{align}
where $q_w(x) =   \gamma^{x} (1-\gamma)^{w-x}$.
From this it follows that
\begin{align}
A_{({\bf 1}^{x}, {\bf 0}^{N-x})} |D^N_w\>
&=
\frac{1}{\sqrt{\binom N w}}A_{({\bf 1}^{x}, {\bf 0}^{N-x})} |H^N_w\> \notag\\
&=
\frac{\sqrt{q_w(x)}}{\sqrt{\binom N w}} |0\>^{\otimes x} \otimes |H^{N-x}_{w-x}\> \notag\\
&=
\sqrt{\frac{\binom {N-x}{w-x}}{\binom N w}}\sqrt{q_w(x)} |0\>^{\otimes x} \otimes |D^{N-x}_{w-x}\>\notag\\
&=
\sqrt{\frac{\binom {w}{x}}{\binom N x}}\sqrt{q_w(x)} |0\>^{\otimes x} \otimes |D^{N-x}_{w-x}\>\notag\\
&=
\sqrt{\frac{p_{w}(x)}{\binom N x}} |0\>^{\otimes x} \otimes |D^{N-x}_{w-x}\>.
\end{align}
When $x > w$ we trivially have $A_{({\bf 1}^{x}, {\bf 0}^{N-t})} |D^N_w\> = 0$. 
Since $\binom w x = 0$ and $p_w(x) = 0$ whenever $x > w$, for all $x = 0,1,\dots, N$, we have
\begin{align}
A_{({\bf 1}^{x}, {\bf 0}^{N-t})} |D^N_w\>
=
\sqrt{\frac{p_{w}(x)}{\binom N x}}|0\>^{\otimes x} \otimes |D^{N-x}_{w-x}\>.
\end{align}
Hence,
\begin{align}
&\mathcal A_{N,\gamma}( |D^N_w\>\<D^N_{w'}| )\notag\\
=&
\sum_{x=0}^{N} \sum_{|P| = x}
\frac{\sqrt{p_w(x) p_{w'}(x)} }{\binom N x }
{\rm Ins}_P ( |D^N_w\>\<D^N_{w'}| ) .
\end{align}
Hence,
\begin{align}
&\mathcal A_{N,\gamma}( |\psi\>\<\psi| )\notag\\
=&
\sum_{w,w'=0}^N a_w a_{w'}^* 
\mathcal A_{N,\gamma}( |D^N_w\>\<D^N_{w'}|  ) \notag\\
=&
\sum_{w,w'=0}^N a_w a_{w'}^* 
\sum_{x=0}^{N} \sum_{|P| = x}
\frac{\sqrt{p_w(x) p_{w'}(x)} }{\binom N x }
{\rm Ins}_P ( |D^N_w\>\<D^N_{w'}| ) .
\end{align}
Exchanging the orders of the summations, the lemma follows after substituting the definition of $|\phi_x\>$.
\end{proof}

\section{Technical calculations: Binomial identities}
In this section, let $S(n,s)$ denote the Stirling number of the second kind,
and denote $k_{(j)}$ as the falling factorial $k\dots (k-j+1)$.

The Stirling numbers of the second kind can be calculated according the formula
\begin{align}
    S(n,s) = \frac{1}{s!} \sum_{j=0}^s (-1)^{s-j} \binom s j j^n.
\end{align}
When $n=0$. $S(n,0)=1$ and $S(n,s)=0$ for all positive integers $s$.
When $n\ge 1$, $S(n,s)=0$ for all integers $n > s$, and when $n\ge 3$, we have
\begin{align}
    S(n,0) &= 0,\label{Sn0}\\
    S(n,1) &= 1,\label{Sn1}\\
    S(n,2) &= 2^{n-1}-1,\label{Sn2}\\
    S(n,3) &= (3^{n}-2^n+1)/2\label{Sn3}.
\end{align}
 The Stirling numbers of the second kind allow us to express monomials as a linear combination of falling factorials, in the sense that
 \begin{align}
     k^s = \sum_{j=0}^s S(s,j) k_{(j)}.
     \label{monomial as falling}
 \end{align}
Next, we have the following lemma.
\begin{lemma}[{\cite[Lemma 1]{ouyang2014permutation}}]
\label{lem:binoms1}
Let $n$ and $s$ be non-negative integers. If $s < n$, then
\begin{align}
\sum_{k\ {\rm even}} \binom n k  k^s 
&= \sum_{k\ {\rm odd}} \binom n k  k^s.
\end{align}    
\end{lemma}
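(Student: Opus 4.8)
The plan is to reduce the claimed equality to the statement that the alternating sum
\begin{align}
\sum_{k=0}^n (-1)^k \binom n k k^s
\end{align}
vanishes whenever $s<n$. Indeed, since $(-1)^k$ is $+1$ for even $k$ and $-1$ for odd $k$, the left and right sides of the lemma differ by exactly this quantity, so the two sums agree precisely when it is zero. The natural tool is the expansion of monomials into falling factorials via Stirling numbers of the second kind, which is why we have recorded the notation $S(n,s)$ and $k_{(j)}$ immediately above.

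First I would write $k^s = \sum_{j=0}^s S(s,j)\, k_{(j)}$, so that
\begin{align}
\sum_{k=0}^n (-1)^k \binom n k k^s
= \sum_{j=0}^s S(s,j) \sum_{k=0}^n (-1)^k \binom n k k_{(j)}.
\end{align}
The key algebraic step is the identity $\binom n k k_{(j)} = n_{(j)} \binom{n-j}{k-j}$, which lets me pull out the factor $n_{(j)}$ and reindex the inner sum by $m=k-j$. Because $k_{(j)}=0$ for $k<j$, the inner sum effectively starts at $k=j$, and the reindexing turns it into $(-1)^j n_{(j)} \sum_{m=0}^{n-j} (-1)^m \binom{n-j}{m} = (-1)^j n_{(j)} (1-1)^{n-j}$ by the binomial theorem.

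The crucial observation is that every $j$ appearing in the Stirling expansion satisfies $j \le s < n$, so $n-j \ge 1$ and hence $(1-1)^{n-j}=0$ for each such $j$. Thus every term in the outer sum vanishes, the alternating sum is zero, and the lemma follows. I expect the only delicate point to be the bookkeeping of summation ranges: verifying that $k_{(j)}$ vanishes for $k<j$ so the reindexed sum genuinely begins at $m=0$, and checking that the identity $\binom n k k_{(j)} = n_{(j)}\binom{n-j}{k-j}$ holds across the full range of $k$. Should the Stirling bookkeeping prove cumbersome, an alternative route is to apply the operator $(x\,d/dx)^s$ to $(1+x)^n$ and evaluate at $x=-1$: since $(x\,d/dx)x^k = k x^k$ this reproduces the alternating sum, and because each application of $x\,d/dx$ lowers the power of $(1+x)$ by at most one, after $s<n$ applications a strictly positive power of $(1+x)$ survives, so the evaluation at $x=-1$ vanishes.
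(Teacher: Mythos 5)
Your argument is correct. Note first that the paper does not actually prove this lemma: it imports it verbatim from Ref.~\cite{ouyang2014permutation} (their Lemma~1) and immediately builds on it, so there is no in-paper proof to measure you against. That said, your route is exactly the one the surrounding appendix material suggests: reduce the claim to the vanishing of $\sum_{k=0}^n(-1)^k\binom{n}{k}k^s$, expand $k^s=\sum_{j=0}^s S(s,j)k_{(j)}$, and kill each term via $\binom{n}{k}k_{(j)}=n_{(j)}\binom{n-j}{k-j}$ and $(1-1)^{n-j}=0$ for $j\le s<n$. This is the same falling-factorial/Stirling machinery the paper deploys to prove the companion results (Lemma~\ref{lem:binom even odd monomial}, which extends the statement to $s\ge n$ and isolates the single surviving $j=n$ term $S(s,n)n!$, and Lemma~\ref{lem:binom-exp-monomial-falling}, which differentiates $(1+x)^n$ and substitutes $x=\pm e^{iy}$). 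Your fallback via $(x\,d/dx)^s(1+x)^n$ evaluated at $x=-1$ is likewise sound and is essentially the generating-function variant of the same computation. The two delicate points you flag — that $k_{(j)}=0$ for integer $0\le k<j$ so the reindexed sum starts at $m=0$, and that the identity $\binom{n}{k}k_{(j)}=n_{(j)}\binom{n-j}{k-j}$ holds for all $k$ — both check out, so there is no gap.
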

Here we extend the above lemma for larger $s$.
\begin{lemma}
\label{lem:binom even odd monomial}
Let $n$ and $s$ be non-negative integers. Let $s \ge n$.
When $n$ is even, then 
\begin{align}
\sum_{k\ {\rm even}} \binom n k  k^s   + S(s,n) n!
&= 
\sum_{k\ {\rm odd}} \binom n k  k^s .
\end{align}
When $n$ is odd, then
\begin{align}
\sum_{k\ {\rm even}} \binom n k  k^s   
&= 
\sum_{k\ {\rm odd}} \binom n k  k^s + S(s,n) n!.
\end{align}
\end{lemma}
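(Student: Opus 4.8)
The plan is to deduce both cases from the single signed identity
\begin{align}
\sum_{k=0}^n (-1)^k \binom nk k^s = (-1)^n\, n!\, S(s,n),
\end{align}
since its left-hand side is exactly $\sum_{k\ \mathrm{even}}\binom nk k^s - \sum_{k\ \mathrm{odd}}\binom nk k^s$. Separating the even-$k$ and odd-$k$ contributions and distributing the factor $(-1)^n$ according to the parity of $n$ then yields the claimed relation for each parity; as a consistency check, when $s<n$ we have $S(s,n)=0$ and the identity collapses back to Lemma \ref{lem:binoms1}.

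First I would expand the monomial in the falling-factorial basis using the defining relation of the Stirling numbers of the second kind, $k^s=\sum_{j=0}^s S(s,j)\,k_{(j)}$, and interchange the order of summation to obtain
\begin{align}
\sum_{k=0}^n (-1)^k \binom nk k^s
= \sum_{j=0}^s S(s,j)\sum_{k=0}^n (-1)^k \binom nk k_{(j)}.
\end{align}
The key step is the absorption identity $\binom nk\,k_{(j)} = n_{(j)}\binom{n-j}{k-j}$, which follows from $\binom nk \tfrac{k!}{(k-j)!}=\tfrac{n!}{(n-j)!}\binom{n-j}{k-j}$. Substituting it and reindexing by $m=k-j$ turns the inner sum into $n_{(j)}(-1)^j\sum_{m=0}^{n-j}(-1)^m\binom{n-j}{m}=n_{(j)}(-1)^j(1-1)^{n-j}$, which vanishes whenever $j<n$; the terms with $j>n$ vanish as well because $k_{(j)}=0$ for every $k\le n$. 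Hence only the $j=n$ term survives (it is present precisely because $s\ge n$), and it contributes $S(s,n)\,n_{(n)}(-1)^n=(-1)^n n!\,S(s,n)$, which establishes the identity.

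The computation is routine once the falling-factorial expansion is set up; the only genuine care points are the bookkeeping of summation ranges—verifying that every $j\neq n$ term really cancels, via $(1-1)^{n-j}=0$ for $j<n$ and $k_{(j)}=0$ for $j>n$—and the final sign tracking. I expect the parity bookkeeping to be the main subtlety: one must distribute the single factor $(-1)^n$ correctly so that the constant $S(s,n)\,n!$ lands on the intended side of the equation in each of the even-$n$ and odd-$n$ statements.
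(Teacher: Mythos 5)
Your argument is correct and in substance follows the same route as the paper's proof: both expand $k^s=\sum_{j}S(s,j)\,k_{(j)}$, observe that only $j\le n$ contributes because $k_{(j)}=0$ for $k\le n<j$, and isolate the top term $j=n$. The only genuine difference is how the $j<n$ terms are killed: the paper cites Lemma~\ref{lem:binoms1} to equate the even and odd partial sums of $\binom{n}{k}k_{(j)}$, whereas you prove the equivalent vanishing $\sum_{k}(-1)^k\binom{n}{k}k_{(j)}=(-1)^j n_{(j)}(1-1)^{n-j}=0$ inline via the absorption identity, which makes your version self-contained. One substantive caution about the final step, which you assert rather than carry out: the identity you derive, $\sum_{k\ {\rm even}}\binom{n}{k}k^s-\sum_{k\ {\rm odd}}\binom{n}{k}k^s=(-1)^n n!\,S(s,n)$, puts the surplus on the \emph{even} side when $n$ is even and on the \emph{odd} side when $n$ is odd (e.g.\ $n=s=2$ gives $4=2+2$, and $n=s=1$ gives $1=0+1$), which is the opposite of the lemma as displayed. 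This is not a defect of your derivation --- the paper's own proof arrives at exactly the same assignment, namely $\sum_{k\ {\rm even}}\binom{n}{k}k^s=S(s,n)\,n!+\sum_{k\ {\rm odd}}\binom{n}{k}k^s$ for even $n$, so the displayed statement evidently has its two parity cases interchanged --- but your closing claim that distributing $(-1)^n$ ``yields the claimed relation for each parity'' is literally false for the statement as written. You should perform that last bookkeeping explicitly and flag the discrepancy rather than asserting agreement.
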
 
\begin{proof}
   We use \eqref{monomial as falling}.
    Next we note that whenever $k$ is a non-negative integer and if $k \le n$, then 
    $k_{(j)} = 0$ for all $j >n$. 
    Hence in this situation,
    $k^s = \sum_{j=0}^n S(s,j) k_{(j)}$.
Note that 
\begin{align}
\sum_{k\ {\rm even}} \binom n k  k^s  
&=
\sum_{k\ {\rm even}} \binom n k
 \sum_{j=0}^n S(s,j) k_{(j)}\notag\\
&= 
 \sum_{j=0}^n S(s,j)
 \sum_{k\ {\rm even}} \binom n k
 k_{(j)}
 \end{align}
 When $n$ is even, we can use Lemma \ref{lem:binoms1} to get
 \begin{align}
 \sum_{k\ {\rm even}} \binom n k  k^s  
&= 
S(s,n)n! + 
 \sum_{j=0}^{n-1} S(s,j)
 \sum_{k\ {\rm even}} \binom n k
 k_{(j)}\notag\\
 &= 
S(s,n)n! + 
 \sum_{j=0}^{n-1} S(s,j)
 \sum_{k\ {\rm odd}} \binom n k
 k_{(j)}\notag\\
 &= 
S(s,n)n! + 
 \sum_{k\ {\rm odd}} \binom n k
 k^s.
\end{align}
Similarly, when $n$ is odd, we get
\begin{align}
 \sum_{k\ {\rm odd}} \binom n k  k^s  
&= 
S(s,n)n! + 
 \sum_{k\ {\rm even}} \binom n k
 k^s.
\end{align}    
\end{proof}
Next we evaluate binomial sums weighted by falling factorials and exponentials.
\begin{lemma}
\label{lem:binom-exp-monomial-falling}
    Let $n,s$ be non-negative integers, and $y' = y/2$ where $y \in \mathbb R$. Then
\begin{align}
&\sum_{k\ {\rm even}}^n \binom n k k_{(s)} e^{iky}  \notag\\
    =&
    n_{(s)} 
e^{iy'(n+s) } 2^{n-s-1} 
( \cos^{n-s}y' + 
(-1)^n i^{n-s} \sin^{n-s} y' )
,    
\end{align}
and 
\begin{align}
&\sum_{k\ {\rm odd}}^n \binom n k k_{(s)} e^{iky}  \notag\\
    =&
    n_{(s)} 
e^{iy'(n+s)} 2^{n-s-1} 
( \cos^{n-s}y' -
(-1)^n i^{n-s} \sin^{n-s} y' )
.
\end{align}
When $y=u+iv$ where $u,v \in \mathbb R$,
we have 
\begin{align}
&\sum_{k\ {\rm even}}^n \binom n k k_{(s)} e^{iky}  \notag\\
    =&
    n_{(s)} 
e^{isy}
( (1+e^{i u-v})^{n-s} + 
(-1)^s (1-e^{iu-v})^{n-s}
,     
\end{align}
and 
\begin{align}
&\sum_{k\ {\rm odd}}^n \binom n k k_{(s)} e^{iky}  \notag\\
    =&
    n_{(s)} 
e^{i s y }
( (1+e^{i u-v})^{n-s} - 
(-1)^s (1-e^{iu-v})^{n-s}. 
\end{align}
\end{lemma}
Note that when $s > n$, the above summations are equal to zero because of the falling factorials.
\begin{proof}
    To prove this lemma, we use the method of generating functions. Consider the generating function $f(x) = (1+x)^n.$
Expanding $f(x)$ as a power series in $x$ using the binomial theorem, we see that 
\begin{align}
    f(x) = \sum_{k=0}^n \binom n k x^k.
\end{align}
Then by taking $s$ formal derivatives of $f(x)$ with respect to $x$, we get that for all non-negative integers $n'$ that
\begin{align}
    \frac{ d^{s} f(x)} {dx^{s} } 
    &= n_{(s)} (1+x)^{n-s} \notag\\
    &= \sum_{k=0}^n \binom n k k_{(s)} x^{k-s}\notag\\
    &= \sum_{k=0}^n \binom n k k_{(s)} x^{k-s}.
\end{align}
Next, we make the substitution 
$x= e^{iy}$ to get
\begin{align}
    n_{s} (1+e^{iy})^{n-s} 
    &= \sum_{k=0}^n \binom n k k_{(s)} (e^{i y})^{k-s} \notag\\
    &= \sum_{k=0}^n \binom n k k_{(s)} e^{i( k-s)y}.
\end{align}
Rearranging the terms in the above equation, we get
\begin{align}
\sum_{k=0}^n \binom n k k_{(s)} e^{iky}
    &= 
    n_{(s)} (1+e^{iy})^{n-s}  e^{i s y}.
\end{align}
We can also make the substitution $x=-e^{iy}$ to get
\begin{align}
\sum_{k=0}^n \binom n k k_{(s)} e^{iky} (-1)^k
    &= 
    n_{(s)} (1-e^{iy})^{n-s}  e^{i s y} (-1)^{s}.
\end{align}
By the method of generating functions,
it follows that 
\begin{align}
&\sum_{k \ {\rm even}}
\binom n k k_{(s)} e^{iky} (-1)^k\notag\\
    = &
\frac 1 2
\left(\sum_{k=0}^n \binom n k k_{(s)} e^{iky}
+    
\sum_{k=0}^n \binom n k k_{(s)} e^{iky} (-1)^k\right),
\end{align}
and 
\begin{align}
&\sum_{k \ {\rm odd}}
\binom n k k_{(s)} e^{iky} (-1)^k\notag\\
    = &
\frac 1 2
\left(\sum_{k=0}^n \binom n k k_{(s)} e^{iky}
-    
\sum_{k=0}^n \binom n k k_{(s)} e^{iky} (-1)^k\right).
\end{align}
Hence
\begin{align}
&\sum_{k \ {\rm even}}
\binom n k k_{(s)} e^{iky} (-1)^k\notag\\
    = &
\frac{n_{(s)}
(  (1+e^{iy})^{n-s}  e^{i s y}
+ 
 (1-e^{iy})^{n-s}  e^{i s y} (-1)^{s}
}{2}
\end{align}
and
\begin{align}
&\sum_{k \ {\rm odd}}
\binom n k k_{(s)} e^{iky} (-1)^k\notag\\
    = &
\frac{n_{(s)}
(  (1+e^{iy})^{n-s}  e^{i s y}
- 
 (1-e^{iy})^{n-s}  e^{i s y} (-1)^{s}
}{2}.
\end{align}

Next we use 
\begin{align}
&\frac{
(1+e^{iy})^{n-s}
\pm
(-1)^s
(1-e^{iy})^{n-s}
}{2}  \notag\\
=&
e^{iy(n-s)/2} 2^{n-s-1} 
\notag\\
&\times
( \cos^{n-s}(y/2) \pm (-1)^s
(-i)^{n-s} \sin^{n-s} (y/2) )
\end{align}
to get the result.
\end{proof}

Next we evaluate binomial sums weighted by the product of monomials and exponentials.
\begin{lemma}
\label{lem:binom sum exp iky}
Let $n$ and $s$ be non-negative integers and $y'=y/2$ where $y \in \mathbb R$. Define $0^0 \coloneqq 1$. Then 
\begin{align}
    &2^{-n+1 }\sum_{k\ {\rm even}} \binom n k k^{s} e^{ik y}\notag\\
    =&
     \sum_{j=0}^{{\rm min}(s,n)} 
2^{-j} S(s,j) 
n_{(j)}e^{i (n+j)y'}\notag\\
&\quad \times
(\cos^{n-j} y' + (-1)^n i^{n-j} \sin^{n-j} y')
\end{align}
and
\begin{align}
    &2^{-n+1 }\sum_{k\ {\rm odd}} \binom n k k^{s} e^{ik y}\notag\\
    =&
\sum_{j=0}^{{\rm min}(s,n)} 
2^{-j} S(s,j) 
n_{(j)}e^{i (n+j)y'}\notag\\
&\quad \times
(\cos^{n-j} y'  - (-1)^n i^{n-j} \sin^{n-j} y').
\end{align}  
When $n,s$ are positive integers and $s<n$, this simplifies to 
\begin{align}
    &2^{-n+1 }\sum_{k\ {\rm even}} \binom n k k^{s} e^{ik y}\notag\\
    =&
     \sum_{j=1}^{s} 
2^{-j} S(s,j) 
n_{(j)}e^{i (n+j)y'}\notag\\
&\quad \times
(\cos^{n-j} y' + (-1)^n i^{n-j} \sin^{n-j} y')
\end{align}
and
\begin{align}
    &2^{-n+1 }\sum_{k\ {\rm odd}} \binom n k k^{s} e^{ik y}\notag\\
    =&
\sum_{j=1}^{s} 
2^{-j} S(s,j) 
n_{(j)}e^{i (n+j)y'}\notag\\
&\quad \times
(\cos^{n-j} y'  - (-1)^n i^{n-j} \sin^{n-j} y').
\end{align}  
\end{lemma}
For example, we have for $s = 0$,
\begin{align}
   & 2^{-n+1 }\sum_{k\ {\rm even}} \binom n k  e^{ik y}\notag\\
    =&
e^{i n y'}
(\cos^{n} y' +(-1)^n i^{n} \sin^{n} y'), 
\\
   & 2^{-n+1 }\sum_{k\ {\rm odd}} \binom n k  e^{ik y}\notag\\
    =&
e^{i n y'}
(\cos^{n} y' - (-1)^n i^{n} \sin^{n} y').
\end{align}
When $s=1$ and $n\ge 1$ we have
\begin{align}
   & 2^{-n+1 }\sum_{k\ {\rm even}} \binom n k k e^{ik y}\notag\\
     =&2^{-1} n e^{i (n+1) y'}
(\cos^{n-1} y' +(-1)^n i^{n-1} \sin^{n-1} y') \notag\\
   & 2^{-n+1 }\sum_{k\ {\rm odd}} \binom n k k e^{ik y}\notag\\
     =&2^{-1} n e^{i (n+1) y'}
(\cos^{n-1} y' -(-1)^n i^{n-1} \sin^{n-1} y') .
\end{align} 
When $s=2$ and $n\ge 2$,
we have 
\begin{align}
   & 2^{-n+1 }\sum_{k\ {\rm even}} \binom n k k^2 e^{ik y}\notag\\
     =&\frac n 2  e^{i (n+1) y'}
(\cos^{n-1} y' +(-1)^n i^{n-1} \sin^{n-1} y') \notag\\
&+
\frac{  n(n-1)}{4} e^{i (n+2) y'}
(\cos^{n-2} y' +(-1)^n i^{n-2} \sin^{n-2} y') \notag\\
   & 2^{-n+1 }\sum_{k\ {\rm odd}} \binom n k k e^{ik y}\notag\\
     =&\frac{n}{2} e^{i (n+1) y'}
(\cos^{n-1} y' -(-1)^n i^{n-1} \sin^{n-1} y') 
\notag\\
&+
\frac{  n(n-1)}{4} e^{i (n+2) y'}
(\cos^{n-2} y' -(-1)^n i^{n-2} \sin^{n-2} y')  .
\end{align}

\begin{proof}[Proof of Lemma \ref{lem:binom sum exp iky}]
We can express monomials $k^s$ in terms of falling factorials:
\begin{align}
    k^s  = \sum_{j=0}^s S(n,j) k_{(j)}.
\end{align}
Hence we see that 
\begin{align}
&2^{-n+1} 
\sum_{k\ {\rm even}} 
\binom n k k^{s} e^{iky}\notag\\
= &
2^{-n+1} 
\sum_{k\ {\rm even}} 
\binom n k 
\left(\sum_{j=0}^{s} 
S(s,j) k_{(j)}\right)
e^{iky}\notag\\
= &
2^{-n+1} 
\sum_{j=0}^{s} 
S(s,j) 
\sum_{k\ {\rm even}} 
\binom n k 
k_{(j)}e^{iky} 
.
\end{align}
Using Lemma \ref{lem:binom-exp-monomial-falling} gives the first result.
Similarly, we use Lemma \ref{lem:binom-exp-monomial-falling} to get the second result. 
\end{proof}
By taking the limit of $y$ to approach 0 in Lemma \ref{lem:binom sum exp iky}, we have the following corllary.
\begin{corollary}\label{coro:binom sum}
    Let $n$ and $s$ be positive integers. Then 
    $    2^{-n}
        \sum_{k=0}^n
        \binom n k k^s
        = 
        \sum_{j=0}^s
        2^{-j}S(s,j) n_{(j)}.$
\end{corollary}

\section{Technical calculations: Sandwiches of shifted gnu logical states}
\label{app:sandwiches}
In this section we evaluate sandwiches of shifted gnu logical states, which are expressions of the form $\<j_L | A | j_L\>$ where $|j_L\>$ are logical codewords of shifted gnu logical states and $A$ is an appropriately sized complex matrix.
We begin by considering
$\<j_L | {\hat J^z} | j_L\>$
and
$\<j_L | ({\hat J^z})^2 | j_L\>$.
We make a mild assumption that $n \ge 3$. We note that
\begin{align}
\<0_L | {\hat J^z} | 0_L\>
&= (N/2-s) - 2^{-n+1} \sum_{j {\rm \ even}} \binom{n}{j} gj ,\\
\<1_L | {\hat J^z} | 1_L\>
&= (N/2-s) - 2^{-n+1} \sum_{j {\rm \ odd}} \binom{n}{j} gj .
\end{align}
Now 
$\sum_{j = 0}^n j\binom{n} {j} = 2^{n-1} n$.
Whenever $n \ge 2$, we have
$2^{-n+1} \sum_{j {\rm \ odd}} \binom{n}{j} j =
2^{-n+1} \sum_{j {\rm \ even}} \binom{n}{j} j
= \frac{n}{2}$.
Hence
\begin{align}
\<0_L | {\hat J^z} | 0_L\>
&= \<1_L | {\hat J^z} | 1_L\>
= N/2-s- gn/2 \label{simple-J-sandwich}.
\end{align}

Now 
$\sum_{j = 0}^n j^2 \binom{n} {j}  = 2^{n-2} n ( n+1).$ 
When $n\ge 3$,
$2^{-n+1}\sum_{j {\rm \ odd}}^n j^2 \binom{n} {j}  =
2^{-n+1}\sum_{j {\rm \ even}}^n j^2 \binom{n} {j}  = n ( n+1)/4$.
Next, we evaluate the following.
\begin{align}
&\<0_L | ({\hat J^z})^2 | 0_L\>\notag\\
=&  2^{-n+1} \sum_{j {\rm \ even}} 
\binom{n}{j} 
(N/2 - (gj+s) )^2,\notag\\ 
=&  2^{-n+1} \sum_{j {\rm \ even}} 
\binom{n}{j} 
((N/2-s)^2 +(2s - N )gj + g^2j^2 ) )\notag\\
=&  
(N/2-s)^2 +\frac{(2s - N )gn}{2} + \frac{g^2n(n+1)}{4} \label{J2-0-sandwich}.
\end{align}
Similarly, 
\begin{align}
&\<1_L | ({\hat J^z})^2 | 1_L\>
= 
(N/2-s)^2 +\frac{(2s - N )gn}{2} + \frac{g^2n(n+1)}{4} 
\label{J2-1-sandwich} .
\end{align}
Generalizing the above, we have the following lemma.
\begin{lemma}
\label{lem:Jzj sandwich}
 Let $|0_L\>$ and $|1_L\>$ be logical codewords of a shifted gnu code, and let $j$ be a non-negative integer such that $j<n$.  
 Then we have 
 $\<0_L|((s-N/2)I + \hat J^z)^j|0_L\>
 =
 \<1_L|((s-N/2)I +\hat J^z)^j|1_L\> = 
 g^j  \sum_{\ell =0 
 }^j 2^{-\ell} S(j,\ell)n_{(\ell)} $.
\end{lemma}
Furthermore, when $n$ is constant, we have 
\begin{align}
\<0_L|(\hat K)^j|0_L\>
=
\<1_L|(\hat K)^j|1_L\>
=g^j O(1).
\end{align}
\begin{proof}
Note that
\begin{align}
    \<0_L|((s-N/2)I +\hat J^z)^{j}|0_L\> 
    &= 2^{-n+1} \sum_{k \ {\rm even}}
    \binom n k g^j k^j \notag   
\end{align}
which is equal to $g^j 2^{-n+1} \sum_{k \ {\rm even}}
    \binom n k  k^j,$
and we get a similar sum (but over the odd indices of $k$) when we evaluate $ \<1_L|((s-N/2)I +\hat J^z)^{j}|1_L\>$. Next, when $j<n$, these two binomial sums are the same because of the quantum error correction criterion that the logical codewords satisfy. Then we use Corollary \ref{coro:binom sum} to obtain the first result. 
The second result follows trivially from expanding 
$\hat K^j =  \sum_{m=0}^j \binom j m (\hat J + (s-N/2)I)^m (gn/2)^{j-m}$, using the first result, and rearranging terms in the summation.
\end{proof}

Now we prove Lemma \ref{lem:sandwich}, which involves evaluating the quantities 
$\<0_L|U_\Delta|0_L\>$ and $\<1_L|U_\Delta|1_L\>$.
\begin{proof}[Proof of Lemma \ref{lem:sandwich}]
Now ${\hat J^z} |D^N_w\> = (N/2 - w)  |D^N_w\>$.
Hence 
\begin{align}
U_\Delta |D^N_w\> 
&= \sum_{w}a_w \exp(-i \Delta(N/2 - w)) |D^N_w\>
\notag\\
&= e^{-i \Delta N /2}\sum_{w}a_w e^{i \Delta w} |D^N_w\>.
\label{U-Dicke}
\end{align}
Using \eqref{U-Dicke} and the definition of logical codewords for shifted gnu codes with shift $s$, we see that 
\begin{align}
\<0_L|  U_\Delta  |0_L\>
&=
 e^{-i \Delta N /2} 2^{-n+1} \sum_{\substack{0 \le k \le n \\ {k\ {\rm even}}}} 
\binom n k e^{i(gk+s)\Delta},\notag\\
\<1_L|  U_\Delta  |1_L\>
&=
 e^{-i \Delta N /2} 2^{-n+1} \sum_{\substack{0 \le k \le n \\ {k\ {\rm odd}}}} 
\binom n k e^{i(gk+s)\Delta}\notag.
\end{align}
Note that
\begin{align}
&\sum_{k\ {\rm even}}  {\binom n k} e^{igk\Delta}\notag\\
=& \frac{ (1 + e^{ig\Delta} )^n +  (1 - e^{ig\Delta} )^n } {2}\notag\\
=& \frac{ (2e^{ ig\Delta/2}\cos(g\Delta/2) )^n +  (-2ie^{ ig\Delta/2}\sin(g\Delta/2) )^n } {2}
\notag\\ 
=& 2^{n-1} e^{ ign\Delta/2} 
(\cos^n(g\Delta/2)  +  (-i)^n \sin^n(g\Delta/2) ).\notag
\end{align}
Similarly,
\begin{align}
&\sum_{k\ {\rm odd}}  {\binom n k} e^{igk\Delta} \notag\\
=& \frac{ (1 + e^{ig\Delta} )^n -   (1 - e^{ig\Delta} )^n } {2}\notag\\
=& 
2^{n-1} e^{ ign\Delta/2} 
(\cos^n(g\Delta/2)  -  (-i)^n \sin^n(g\Delta/2) ).\notag
\end{align}
 Hence 
 \begin{align}
     \frac{\<0_L|  U_\Delta  |0_L\>}{e^{-i \Delta (N /2-s)}
 e^{ ign\Delta/2} }
&=
 \cos^n(g\Delta/2)  +  (-i)^n \sin^n(g\Delta/2) .\notag
 \end{align}
 Similarly,
 \begin{align}
     \frac{\<1_L|  U_\Delta  |1_L\>}{e^{-i \Delta (N /2-s)}
 e^{ ign\Delta/2} }
&=
 \cos^n(g\Delta/2)  -  (-i)^n \sin^n(g\Delta/2) .\notag
 \end{align}
The result follows.
\end{proof}
Hence we have shown that 
\begin{align}
    &\<j_L|U_\Delta|j_L\>
    = \phi_{n,j}(\Delta)\notag\\
    =& e^{-i \Delta (N /2-s)
    } e^{ i n x}
    \left(
    \cos^n x  + (-1)^j  (-i)^n \sin^n x \right),
\end{align}
where $x=  g \Delta/2$.
Note that when $n=2$,
 the above has a simple expression. 
 Namely, for $n=2$, we have
 \begin{align}
   \phi_{2,0}(\Delta) 
    =& e^{-i \Delta (N /2-s)
    } e^{ i 2 x}
    \left(
    \cos^2 x - \sin^2 x \right) \notag\\
     =&
    e^{-i \Delta (N /2-s)
    } e^{ i 2 x}
    \cos 2 x\\    
    \phi_{2,1}(\Delta) 
    =& e^{-i \Delta (N /2-s)
    } e^{ i 2 x}
    \left(
    \cos^2 x + \sin^2 x \right) \notag\\
     =& e^{-i \Delta (N /2-s)
    } e^{ i 2 x}.    
 \end{align}
 When $n=1$, we have
 \begin{align}
   \phi_{1,0}(\Delta) 
    =& e^{-i \Delta (N /2-s)
    } e^{ i   x}
    \left(
    \cos  x - i \sin  x \right) \notag\\
     =&
    e^{-i \Delta (N /2-s)
    }   \\    
    \phi_{1,1}(\Delta) 
    =& e^{-i \Delta (N /2-s)
    } e^{ i   x}
    \left(
    \cos  x + i \sin  x \right) \notag\\
     =& e^{-i \Delta (N /2-s)
    } e^{ i 2 x}.    
 \end{align}

Next we prove the following recursion relationship 
\begin{lemma}
\label{lem:phi-recursion}
For positive integer $n,N$ and integers $j=0,1$ and $s$, and for real $\Delta, b$, the expression $\frac{\partial }{\partial \Delta} (e^{i b \Delta} \phi_{n,j}(\Delta) )$ is equal to 
\begin{align}
-i\bigl(\frac{N}{2}-s -b \bigr) e^{i b \Delta } \phi_{n,j}(\Delta)
+ \frac{ign}{2}e^{ib\Delta}e^{ig\Delta}\phi_{n-1,j\oplus 1}(\Delta),\notag
\end{align}    
where the notation $\oplus$ denotes additional modulo 2.
\end{lemma}
\begin{proof}
Let $x=g\Delta/2$.
For $n$, we have
\begin{align}
    &\frac{\partial }{\partial \Delta} (e^{i b \Delta}  \phi_{n,j}(\Delta) )
    =
-i(N/2-s-b)e^{i b \Delta } \phi_{n,j}(\Delta)\notag\\
&
+(i g n /2)e^{i b \Delta } \phi_{n,j}(\Delta)
+
e^{-i \Delta (N /2-s)
    } e^{i b \Delta } e^{ i n x} (gn/2)
    \notag\\
    &\times
    \left(
    -\cos^{n-1} x \sin x  + (-1)^j  (-i)^n \sin^{n-1} x \cos x \right)
    \notag\\
&=
-i(N/2-s-b) e^{i b \Delta}\phi_{n,j}(\Delta)
+e^{-i \Delta (N /2-s)
    } e^{i b \Delta } \times
  \notag\\
&e^{ i n x} (gn/2) \left(
\cos^{n-1} x(i \cos x-\sin x) 
+\right.\notag\\
&\left.
(-1)^j(-i)^n
\sin^{n-1} x (i \sin x +\cos x)
\right).
\end{align}
Next it suffices to show that 
\begin{align}
    &-\cos^{n-1}x \sin x + (-1)^j(-i)^n \sin^{n-1} x \cos x\notag\\
    =& e^{i x}
    (\cos^{n-1}x + (-1)^j(-i)^{n-1} \sin^{n-1} x ).
\end{align}

Now $i \cos x - \sin x=  i ( \cos x + i\sin x) = i e^{i x}$ 
and 
    $i \sin x + \cos x=  e^{ix}= i(-i)e^{ix}=-i(-i)^{-1}e^{ix}$ .
Hence we have 
\begin{align}
    & 
\cos^{n-1} x(i \cos x-\sin x) 
+ \notag\\
& 
(-1)^j(-i)^n
\sin^{n-1} x (i \sin x +\cos x)
 \notag\\
&
=
i e^{ix}\left(
\cos^{n-1} x  
+ 
(-1)^{j+1}(-i)^{n-1}
\sin^{n-1} x  
\right).
\end{align}
from which the result follows.
\end{proof}
Hence, when $b=N/2-s$, we have the simple recursion relation
\begin{align}
 &\frac{\partial }{\partial \Delta} (e^{i b \Delta} \phi_{n,j}(\Delta) )
 =
 \frac{ign}{2}e^{ig\Delta} e^{i b \Delta} \phi_{n-1,j\oplus 1}(\Delta), 
\end{align}
Given shifted gnu logical codewords $|0_L\>$ and $|1_L\>$ on $N$ qubits, we like to emphasize the identity
\begin{align}
   & \<0_L|e^{-i\hat K
    b \tau}|0_L\> =
    \cos^n(g b \tau /2 )
    +(-i)^n \sin^n ( g b\tau/2 ),\\
 & \<1_L|e^{-i\hat K
    b \tau}|1_L\> =
    \cos^n(g b \tau /2 )
    -(-i)^n \sin^n ( g b\tau/2 ),
\end{align}
where $\hat K \coloneqq \hat J^z + (N/2-s-gn/2) I$ and $I$ denotes the identity operator.

Here, Lemma \ref{lem:Q-U-sandwich} evaluates $\<Q_j|U_\Delta|j_L\>$ for shifted gnu codes with $n \ge 3$. 
\begin{lemma}
\label{lem:Q-U-sandwich}
Let $|0_L\>$ and $|1_L\>$ be logical codewords of a shifted gnu code where $n \ge 3.$ Then,
\begin{align}
\<Q_j | U_\Delta |j_L\> = 
\frac {gn} 2 
\left(\phi_{n,j}(\Delta) - e^{ig\Delta}\phi_{n-1,j \oplus 1}(\Delta) \right) .
\label{lem11-first}
\end{align}
Furthermore, if $n$ is odd, then we have 
\begin{align}
\frac{2}{gn}\<Q_j | U_\Delta |j_L\> =& 
 e^{-i \Delta (N/2-s)} e^{inx} \left(
-i  \cos^{n-1}x  \sin x \right.\notag\\
 &
 \left. + i^{n-1} (-1)^j  \sin^{n-1}x  \cos x \right),
 \label{eq:83}
\end{align}
where $x= g \Delta/ 2$.
\end{lemma}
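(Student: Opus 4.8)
The plan is to reduce everything to the already-computed sandwich $\langle j_L| U_\Delta |j_L\rangle = \phi_{n,j}(\Delta)$ from Lemma \ref{lem:sandwich}, together with the elementary expectation $\langle j_L|\hat J^z|j_L\rangle = N/2-s-gn/2$ recorded in \eqref{simple-J-sandwich}. Since $\hat J^z$ is Hermitian and this expectation is real, the definition $|Q_j\rangle = \hat J^z|j_L\rangle - \langle j_L|\hat J^z|j_L\rangle\,|j_L\rangle$ gives directly
\[
\langle Q_j| U_\Delta |j_L\rangle = \langle j_L|\hat J^z U_\Delta |j_L\rangle - \left(N/2-s-\tfrac{gn}{2}\right)\phi_{n,j}(\Delta),
\]
so the only genuinely new quantity I must evaluate is $\langle j_L|\hat J^z U_\Delta |j_L\rangle$.

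First I would expand this sandwich in the Dicke basis. Using $\hat J^z|D^N_{gk+s}\rangle = (N/2-s-gk)|D^N_{gk+s}\rangle$ and $U_\Delta|D^N_w\rangle = e^{-i\Delta N/2}e^{i\Delta w}|D^N_w\rangle$, it becomes $e^{-i\Delta N/2}2^{-(n-1)}\sum_{k\equiv j}\binom nk e^{i(gk+s)\Delta}(N/2-s-gk)$, the sum ranging over $k$ of parity $j$. I would split the weight $(N/2-s-gk)$ into its constant part and its $-gk$ part. The constant part simply reproduces $\phi_{n,j}(\Delta)$. For the $-gk$ part I would apply $\binom nk k = n\binom{n-1}{k-1}$ and re-index with $m=k-1$, which flips the parity to $j\oplus 1$ and pulls out a factor $e^{ig\Delta}$; the remaining sum is exactly the one that produces $\phi_{n-1,j\oplus 1}(\Delta)$ inside the proof of Lemma \ref{lem:sandwich}. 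Collecting terms yields $\langle j_L|\hat J^z U_\Delta|j_L\rangle = (N/2-s)\phi_{n,j}(\Delta) - \tfrac{gn}{2}e^{ig\Delta}\phi_{n-1,j\oplus 1}(\Delta)$, and substituting this back cancels the $(N/2-s)$ contributions and leaves \eqref{lem11-first}. (Equivalently, one may note $\hat J^z U_\Delta = i\,dU_\Delta/d\Delta$ and differentiate $\phi_{n,j}$ in $\Delta$, which is a convenient cross-check.)

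For the second identity I would specialize to odd $n$, set $x=g\Delta/2$, and insert the explicit forms of $\phi_{n,j}$ and $\phi_{n-1,j\oplus 1}$ into \eqref{lem11-first}, pulling out the common prefactor $e^{-i\Delta(N/2-s)}e^{inx}$ and writing $e^{ig\Delta}=e^{2ix}$. After expanding $e^{ix}=\cos x+i\sin x$, I would verify that the pure $\cos^n x$ terms and the pure $\sin^n x$ terms cancel, leaving only the two mixed terms $-i\cos^{n-1}x\sin x$ and $(-1)^j(-i)^{n-1}\sin^{n-1}x\cos x$. The final step is the identity $(-i)^{n-1}=i^{n-1}$, valid because $n-1$ is even, which rewrites the second term as $i^{n-1}(-1)^j\sin^{n-1}x\cos x$ and matches \eqref{eq:83}.

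I expect the main obstacle to be the sign bookkeeping in this last cancellation: one must track the factors $(-i)^n$, $(-i)^{n-1}$ and $i$ alongside the parity sign $(-1)^{j\oplus 1}=-(-1)^j$, and in particular use $(-i)^{n-1}\,i = -(-i)^n$ to see that the $\sin^n x$ contributions annihilate. Everything else is a routine application of the binomial-sum evaluations already established in Lemma \ref{lem:sandwich}.
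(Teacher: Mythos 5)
Your proof is correct and follows essentially the same route as the paper's: both reduce $\<Q_j|U_\Delta|j_L\>$ to $\<j_L|{\hat J^z} U_\Delta|j_L\> - (N/2-s-gn/2)\,\phi_{n,j}(\Delta)$, establish the key identity $\<j_L|{\hat J^z} U_\Delta|j_L\> = (N/2-s)\phi_{n,j}(\Delta) - \tfrac{gn}{2}e^{ig\Delta}\phi_{n-1,j\oplus 1}(\Delta)$, and finish with the same trigonometric cancellation using $\cos x - e^{ix} = -i\sin x$ and $(-i)^{n-1}=i^{n-1}$ for odd $n$. The only immaterial difference is that the paper obtains the middle identity by writing ${\hat J^z} U_\Delta = i\,\partial_\Delta U_\Delta$ and invoking a recursion relation for $\partial_\Delta\phi_{n,j}$, whereas you derive the same relation directly via $\binom{n}{k}k = n\binom{n-1}{k-1}$ and re-indexing; the derivative route you offer as a cross-check is exactly the paper's argument.
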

\begin{proof}[Proof of Lemma 
\ref{lem:Q-U-sandwich}
]
Consider $\<{Q_j}| U_\Delta | j_L\>$ for $j=0,1$.
We apply 
$\frac{\partial}{\partial \Delta} U_\Delta = (-i {\hat J^z}) U_\Delta $ together with the definitions of $|{Q_j}\>$ and find that
\begin{align}
\<{Q_j}| U_\Delta | j_L\>
&=
\<j_L| {\hat J^z} U_\Delta | j_L\>
-
\<j_L|  U_\Delta | j_L\>
\<j_L| {\hat J^z} | j_L\>\notag
\\
&=
i \frac{\partial}{\partial \Delta}  \<j_L|  U_\Delta | j_L\>
-
\<j_L|  U_\Delta | j_L\>
\<j_L| {\hat J^z} | j_L\>.
\end{align}
Using Lemma \ref{lem:sandwich} and \eqref{simple-J-sandwich}, we can write 
\begin{align}
\<{Q_j}| U_\Delta | j_L\>
&=
i \frac{\partial}{\partial \Delta} \phi_{n,j}(\Delta)
-
 \phi_{n,j}(\Delta)
 (N/2-s-gn/2).
\end{align}
Next, note the recursion relation
\begin{align}
\frac{\partial}{\partial \Delta}  \phi_{n,j}(\Delta)
= -i\bigl(\frac{N}{2}-s\bigr)\phi_{n,j}(\Delta) + \frac{ign}{2} e^{ig\Delta}\phi_{n-1,j \oplus 1}(\Delta),
\end{align}
where $0 \oplus 1 = 1$ and $1 \oplus 1 = 0$.
Hence,
\begin{align}
\<{Q_j}| U_\Delta| j_L\>
=& \left( \bigl(\frac{N}{2}-s\bigl) \phi_{n,j}(\Delta) -\frac{gn}{2} e^{ig\Delta}\phi_{n-1,j \oplus 1}(\Delta) \right)
\notag\\
 &- 
 \phi_{n,j}(\Delta)  (N/2 - s - gn/2),
\end{align}
and simplifying this, the first result follows.

Now let $x = g \Delta/2$, and note that 
\begin{align}
\cos x -e^{ix} &= 
-i \sin x, \notag\\
-i\sin x +e^{ix} &
= \cos x .
\end{align}
Now we evaluate $ \phi_{n,j}(\Delta) - e^{ig\Delta}\phi_{n-1,j \oplus 1}(\Delta) .$
Note that 
\begin{align}
&
 e^{i \Delta (N/2-s)} e^{-inx}\left(
 \phi_{n,j}(\Delta) -  e^{ig\Delta}\phi_{n-1,j \oplus 1}(\Delta)\right) \notag\\
=&
(\cos^{n}x+(-1)^j(-i)^n \sin^n x)
\notag\\
&\quad - e^{-ix}e^{2ix}(\cos^{n-1} x -(-1)^{1-j} (-i)^{n-1}\sin^{n-1}x)\notag\\
=&
\cos^{n-1}x(\cos x - e^{ix})
\notag\\
&\quad +(-1)^j(-i)^{n-1} \sin^{n-1} x ((-i)\sin x + e^{ix}).
\end{align}
Hence when $n$ is odd, $(-i)^{n-1} = i^{n-1}$ and the result follows.
\end{proof}

\begin{proof}[Proof of Lemma 
\ref{lem:qU-norm-sandwich}]
Using \eqref{Q-normalization}, we get
\begin{align}
  \<q_j| U_\Delta | j_L\> 
=
\frac{2}{g\sqrt n}  \<Q_j| U_\Delta | j_L\> .
\end{align} 
To simplify notation, let $x= g \Delta /2$.
Since $n$ is odd, we use 
\eqref{lem11-first} in Lemma \ref{lem:Q-U-sandwich} to get
\begin{align}
& |  \<q_j| U_\Delta | j_L\>  |\notag\\
=&
\sqrt n
\left|
 \phi_{n,j}(\Delta) -  e^{ig\Delta}\phi_{n-1,j \oplus 1}(\Delta) \right|
 \notag\\
 =&
\sqrt{n}\left|
-i  \cos^{n-1}x  \sin x  + i^{n-1} (-1)^j  \sin^{n-1}x  \cos x \right|
\notag\\
 =&
\sqrt{n}\sqrt{\sin^{2n-2}x  \cos^2 x +  \cos^{2n-2}x  \sin^2 x  }
\notag\\
 =&
\frac{\sqrt{n}}{2} |\sin 2x| \sqrt{\sin^{2n-4}x+  \cos^{2n-4}x }.
\end{align}
Hence the result follows.
\end{proof}
Furthermore, when $n=3$, for $j=0,1$, we have 
\begin{align}
   |\<q_j| U_\Delta | j_L\>  |^2
   = 3 \sin^2 x / 4.
\end{align}
In this case, we can see that 
Furthermore, when $n=3$, for $j=0,1$, we have 
\begin{align}
   |\<j_L| U_\Delta | j_L\>  |^2 +
   |\<q_j| U_\Delta | j_L\>  |^2 = 1.
\end{align}
This means that when $n=3$ and when there are no deletions, when we perform a projective measurement according to the POVM $\{\Pi, \Pi_1, I-\Pi-\Pi_1\}$, it is only possible to project onto the codespace and the space supported by $\Pi_1$.

\bibliography{ref}{}

\end{document}